\crefname{subsection}{subsection}{subsections}
\tikzstyle{vertex}=[circle, draw, fill, inner sep=0pt, minimum size=5pt]
\newcommand{\vertex}{\node[vertex]}
\DeclareMathSymbol{\lsb@l}{\mathalpha}{letters}{`l}
\newcommand{\bE}{\mathbb{E}}
\newcommand{\hG}{\widehat{G}}
\newcommand{\cD}{\mathcal{D}}
\newcommand{\hcD}{\widehat{\mathcal{D}}}
\newcommand{\hcG}{\widehat{\mathcal{G}}}
\newcommand{\ALG}{\mathsf{ALG}}
\newcommand{\OPT}{\mathsf{OPT}}
\newcommand{\ADVICE}{\mathsf{ADVICE}}
\newcommand{\vx}{\mathbf{x}}
\newcommand{\yv}{\mathbf{y}}
\newcommand{\lambdav}{\boldsymbol\lambda}
\newcommand{\thetav}{\boldsymbol\theta}
\newcommand{\gammav}{\boldsymbol\gamma}
\newcommand{\alphav}{\boldsymbol\alpha}
\newcommand{\betav}{\boldsymbol\beta}
\newcommand{\abs}[1]{\left\lvert{#1}\right\rvert}
\newcommand{\fl}{f_{\mathsf{L}}}
\newcommand{\fu}{f_{\mathsf{U}}}
\newcommand{\LB}{\mathsf{LB}}
\newcommand{\UB}{\mathsf{UB}}
\begin{document}




\TITLE{Online Bipartite Matching with Advice: Tight Robustness-Consistency Tradeoffs for the Two-Stage Model}

\ARTICLEAUTHORS{
\AUTHOR{Billy Jin}
\AFF{Booth School of Business, University of Chicago, \EMAIL{billy.jin@chicagobooth.edu}}
\AUTHOR{Will Ma}
\AFF{Graduate School of Business and Data Science Institute, Columbia University, \EMAIL{wm2428@gsb.columbia.edu}}
}

\ABSTRACT{

Two-stage bipartite matching is a fundamental problem of optimization under uncertainty introduced by \citet{feng2021OLD}, who study it under the stochastic and adversarial paradigms of uncertainty.
We propose a method to interpolate between these paradigms, using the Algorithms with Predictions (ALPS) framework.
To elaborate, given some form of information (e.g.\ a distributional prediction) about the uncertainty, 
we consider the optimal decision assuming that information is correct to be some "advice", whose accuracy is unknown.
In the ALPS framework, we define \emph{consistency} to be an algorithm's performance relative to the advice, and \emph{robustness} to be an algorithm's performance relative to the hindsight-optimal decision.
We characterize the tight tradeoff between consistency and robustness for four settings of two-stage matching: unweighted, vertex-weighted, edge-weighted, and fractional budgeted allocation.
Additionally, we show our algorithm achieves state-of-the-art performance in both synthetic and real-data simulations.

}



\maketitle

\section{Introduction}

Life is full of uncertainty. Decisions often need to be made before all uncertainties are resolved, which is why various approaches have been proposed for modeling and handling uncertainty.
Often, the uncertainty is assumed to be stochastic following some known probability distribution, but in practice the estimate of the distribution may or may not be accurate. At other times, the uncertainty is assumed to be adversarial, but this can be too pessimistic for real life problems.

Indeed, reality often lies between the stochastic and adversarial extremes. One might have access to some information about the uncertainty, but this information can be inaccurate and its accuracy may be unknown.
There is extensive literature on intermediate frameworks that interpolate between these extremes, and
in this paper we consider the Algorithms with Predictions (ALPS) framework.  ALPS originated from the theoretical computer science literature, in which it has a burgeoning body of work.\footnote{See \href{https://algorithms-with-predictions.github.io/}{https://algorithms-with-predictions.github.io/} for a list of papers in this area.}
In this framework, the algorithm is given 
some information about the problem (such as a point prediction, distributional prediction, or advice on what to do), whose accuracy is unknown. In ALPS, an algorithm is typically evaluated under the following two measures:
\begin{enumerate}
\item \textbf{Consistency}: its performance when the given information is perfectly correct;
\item \textbf{Robustness}: its worst-case performance when the information can be arbitrarily wrong.
\end{enumerate}

In early work in this area \citep[e.g.][]{lykouris2021competitive,mitzenmacher2020algorithms}, authors have focused on problems and predictions with which an algorithm can have optimal consistency and robustness.  That is, it gets the "best of both worlds"---performance that stochastic optimization would achieve if the
prediction was correct, and performance that robust optimization would achieve if the algorithm ignored all predictions and focused fully on hedging against an adversarial realization of the uncertainty.

We instead use ALPS as an optimization framework to trade off between consistency and robustness.
Indeed, for most problems it is impossible\footnote{Even in the aforementioned work \citep{lykouris2021competitive,mitzenmacher2020algorithms}, the simultaneous optimality of consistency and robustness is only approximate and ignores lower-order terms.} to get the "best of both worlds" --- 
a high level of consistency requires trusting the given information more, which will hurt the algorithm's worst-case performance when the information is wrong. 
For problems like these, controlling the tradeoff between consistency and robustness leads to a \textit{parametrized family of algorithms}, that reveals how to regularize decisions in a problem-specific way.

We demonstrate this parametrized, problem-specific regularization for two-stage bipartite matching, a fundamental problem of optimization under uncertainty (see \Cref{subsec:introModel}) that was introduced by \citet{feng2021two}.
We show how to make use of given information while maintaining any desired level of regularization, in a way that is specific to the bipartite matching problem and is not a black-box combination of the stochastic and robust optimization solutions (see \Cref{subsec:introAlg}).
Importantly, our algorithm achieves the \textit{tight} robustness-consistency tradeoff, i.e.\ for any desired guarantee on robustness, it achieves the maximum possible consistency.

To our knowledge, Pareto-efficiency results of this form are rare in the ALPS literature (see \Cref{sec:relatedWork}); yet, they are necessary for establishing an ALPS algorithm to be best-possible.
In this sense, we see our contribution as showing how to regularize decisions for two-stage bipartite matching in a way that is principled and rooted in theory, at least according to the ALPS framework.
Moreover, our algorithm has stellar numerical performance
in a data-driven optimization setting where the information consists of a small number of samples from an unknown second-stage distribution (see \Cref{subsec:introNum}).
Hence we show our algorithm and the ALPS framework to have applications beyond the specific objectives of maximizing consistency and robustness.

\subsection{Problem Definition} \label{subsec:introModel}

The online bipartite matching model of \citet{feng2021two} differs from the traditional one of \citet{karp1990optimal} in that there are only two stages of arrivals, but each stage is a "batch" of multiple arrivals at the same time.
This is motivated by ride-hailing platforms where it has been reported that delaying matching decisions, until a large batch of requests has been accumulated, can substantially improve matching efficiency \citep[see][]{feng2021two}.
The two-stage model parsimoniously distills all uncertainty into one stage of future arrivals.

We now describe the two-stage (edge-weighted) bipartite matching model.  There is an underlying bipartite graph $G$ consisting of supply nodes $S$ on one side and demand nodes $D$ on the other, with weighted edges between $S$ and $D$.  The demand nodes are revealed in two stages.  In the first stage, a subset of demand nodes $D_1\subseteq D$ is revealed, along with the weighted edges between $D_1$ and $S$.  At this point, the algorithm must commit to a matching $M_1$ between $D_1$ and $S$, which means selecting a subset of these edges such that any vertex (in $D_1$ or $S$) is incident to at most one selected edge.  Afterward, the remaining demand nodes $D_2=D\setminus D_1$ and their incident edges are revealed, at which point the algorithm selects a matching between $D_2$ and $S$.  Importantly, nodes in $S$ incident to an edge selected in the first stage become unavailable for the second stage, and the algorithm does not know $D_2$ when committing to matches for $D_1$.  The objective is to maximize the total weight of edges selected across both stages.  Note that in the second stage, one should always select the max-weight matching between $D_2$ and the nodes in $S$ still available, and hence only in the first stage is there a decision of matching under uncertainty. We use $G_1$ to denote the first-stage graph, which consists of the supply nodes $S$, the first-stage demand nodes $D_1$, and the edges between them. Similarly, we use $G_2$ to denote the second-stage graph. We let $w(M_1,G_2)$ denote the total weight matched if $M_1$ was selected in the first stage and the best matching (subject to $M_1$ and $G_2$) was selected in the second stage.

The algorithm has some information about the second stage when making the first-stage decision, which we abstractly represent as a piece of "advice" $A$.  The advice $A$ is a suggestion for the matching $M_1$ to select in the first stage. In practice, the advice might be generated from: 
\begin{itemize}
\item A point prediction $\hG$ for the second stage $G_2$, in which case $A\in\argmax_{M_1} w(M_1,\hG)$;
\item A distributional prediction $\hcG$ for $G_2$, in which case $A\in\argmax_{M_1} \bE_{G_2\sim\hcG}[w(M_1,G_2)]$;
\item An expert or black-box algorithm, in which case $A$ is an arbitrary first-stage matching $M_1$.
\end{itemize}
Our algorithm will not require knowledge about how exactly the advice was generated. Our goal is to prescribe a first-stage decision $M^A_1$ for any first-stage graph $G_1$ and given advice $A$, whose performance simultaneously satisfies
\begin{align}
w(M^A_1,G_2) &\ge C\cdot w(A,G_2) \label{eqn:introConsistent}
\\ w(M^A_1,G_2) &\ge R\cdot\max_{M_1}w(M_1,G_2) \label{eqn:introRobust}
\end{align}
for all possible second-stage graphs $G_2$.  We call an algorithm satisfying~\eqref{eqn:introConsistent}--\eqref{eqn:introRobust} for any initial setup, given advice, and second stage \textit{$C$-consistent} and \textit{$R$-robust}, respectively.  Intuitively,  consistency measures the performance relative to the advice. Additionally, the consistency guarantee~\eqref{eqn:introConsistent} implies
\begin{align*}
w(M^A_1,\hG) &\ge C\cdot \max_{M_1} w(M_1,\hG)
\\ \bE_{G_2\sim\hcG}[w(M^A_1,G_2)] &\ge C\cdot \max_{M_1} \bE_{G_2\sim\hcG}[w(M_1,G_2)].
\end{align*}
In other words, when the advice is generated from a point or distributional prediction for $G_2$,~\eqref{eqn:introConsistent} guarantees the prescription $M^A_1$ to be within a factor $C$ of the optimal decision, when the ground truth is indeed drawn according to that prediction.

On the other hand, robustness measures the performance of the algorithm relative to the best matching in hindsight, ensuring that the prescribed decision $M^A_1$ never performs too poorly on any  second stage graph $G_2$.
Indeed, when deploying algorithms with predictions in practice, it is important to demonstrate that they never drastically fail on any test case \citep[see][]{wiermanPres,christianson2023optimal,yeh2024sustaingym}.

\subsection{Algorithm and Theoretical Results} \label{subsec:introAlg}

Our goal is to derive algorithms with advice that are simultaneously $C$-consistent and $R$-robust, i.e.\ satisfy~\eqref{eqn:introConsistent} and~\eqref{eqn:introRobust}, for values of $C$ and $R$ as large as possible.  A naive guarantee on $C$ and $R$ can be derived by flipping a biased coin and, based on the outcome, running one of two algorithms:
\begin{enumerate}[I.]
\item Set $M^A_1=A$, so that {$w(M^A_1,G_2)=w(A,G_2)$} and~\eqref{eqn:introConsistent} holds with $C=1$;
\item Ignore $A$ completely, and focus on maximizing the robustness $R$ in~\eqref{eqn:introRobust}.
\end{enumerate}
By varying the bias of the coin, one can make~\eqref{eqn:introConsistent} and~\eqref{eqn:introRobust} hold {in expectation} for values of $(R,C)$ ranging along the line segment from $(R_{\min},1)$ to $(R_{\max},R_{\max})$, where $R_{\min}$ is the worst-case ratio of $w(A,D_2)/\max_{M_1}w(M_1,D_2)$ for Algorithm~I, and $R_{\max}$ is the value of $R$ for Algorithm~II.

Now, the values of $R_{\min}$ and $R_{\max}$ depend on the setting of online bipartite matching being studied.
We describe these settings below, in order from least to most restrictive.
\begin{itemize}
\item \textbf{Edge-weighted}: edges in graph $G$ can have arbitrary non-negative weights.  Here, it is easy to see that $R_{\min}=0$ and $R_{\max}=1/2$.  The latter is achieved by an algorithm that, with equal probability, either selects the maximum weighted matching in the first stage, or does nothing in the first stage and selects the maximum weighted matching in the second stage.
\item \textbf{Vertex-weighted}: all edges incident to the same supply node must have the same weight, but weights can otherwise be arbitrary.  Here, we have $R_{\min}=0$ and $R_{\max}=3/4$.  The latter can be achieved by the algorithm of \citet{feng2021two} or variants, as we explain later.
\item \textbf{Unweighted}: all edges in $G$ must have the same weight of 1.  Here, it is easy to see that $R_{\min}=1/2$ as long as advice $A$ suggests a maximal matching, which can be assumed without loss of generality.  Meanwhile, we have $R_{\max}=3/4$ as a corollary from the vertex-weighted setting.
\end{itemize}

The guarantees for $(R,C)$ implied by the coin-flip algorithm are plotted in \Cref{fig:coin_flip}.
The corresponding tradeoff between robustness and consistency turns out to be best-possible for the edge-weighted and unweighted settings, as we show in \Cref{sec:unweighted_edge_weighted}.  However, in the vertex-weighted setting, better algorithms with advice are possible, which we now describe.
We allow for fractional matchings, explaining later that this translates to a randomized matching algorithm.  Fractional matching means that the algorithm can \textit{partially} select edges with any value in [0,1], as long as the "fill" of any vertex, defined as the sum of edge values incident to that vertex, does not exceed 1.  The first-stage tradeoff can then be described as "how far to fill" each supply node in $S$, in the face of uncertainty about the second stage and which supply nodes may be needed then.

\begin{figure}
    \centering
    \includesvg[scale=0.5]{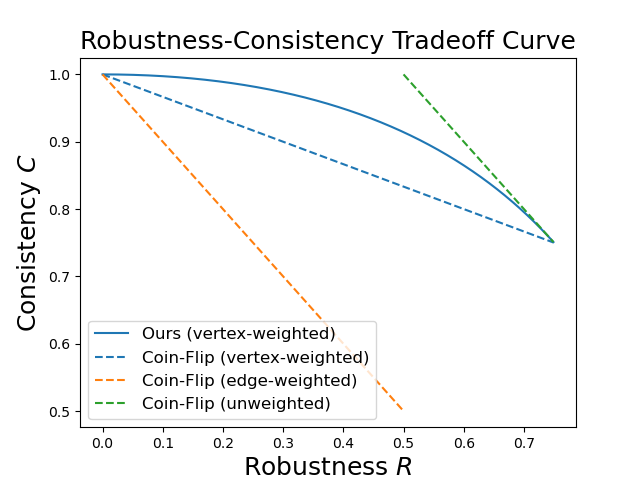}
    \caption{Robustness-Consistency tradeoffs of various algorithms}
    \label{fig:coin_flip}
\end{figure}


In the absence of advice, a common approach to maximizing robustness in online matching is to charge a penalty to each supply node that is a convex function of its fill.
This encourages algorithms to balance how far they fill the supply nodes, thereby ensuring that all supply nodes are partially available for later stages.
In the two-stage model, \citet{feng2021two} show that using a simple linear penalty function and solving a convex optimization problem to handle the batched arrivals leads to a 3/4-robust algorithm.
It is easy to see that their algorithm is only 3/4-consistent, because it makes no use of advice.

\noindent\textbf{{Key algorithmic idea for incorporating advice into online matching.}}
We adjust the penalty function of each supply node based on whether it is filled in the advice.
To elaborate, for any desired robustness level $R$, we characterize a family of penalty functions that guarantees $R$-robustness.
Our algorithm then uses the lower envelope of this function family to penalize supply nodes that are incident to an edge in the suggested matching $A$; that is, it charges smaller penalties to encourage the filling of supply nodes that were also filled in the advice.
Meanwhile, it uses the upper envelope of this family for supply nodes not filled by the advice, resulting in higher penalties that discourage their filling.
The value of $R$ controls the distance between the upper and lower envelope functions, and thereby controls how strongly our algorithm adjusts to the advice, as opposed to trying to balance the supply nodes for robustness.
We prove that our algorithm is $C$-consistent, where $C$ is the unique value (decreasing in $R$) that satisfies the equation $\sqrt{1-R} + \sqrt{1-C} = 1$.
This curve, plotted in \Cref{fig:coin_flip}, is a significant improvement over the coin-flip algorithm.  We also show this elementary, symmetric curve to be the best-possible tradeoff between robustness and consistency for two-stage vertex-weighted bipartite matching.

\noindent\textbf{Extension to Adwords and fractional advice.} We show that our results also hold for the two-stage Adwords (budgeted allocation) problem with advice, specified in \Cref{sec:notation}. Here, we allow both the algorithm and the advice to make fractional allocations.  We extend our penalty function definition, since the fill of a supply node in the suggested (fractional) allocation $A$ can now be any real number $a\in[0,1]$.  The defined function decreases pointwise in $a$, and recovers the upper envelope if $a=0$, and recovers the lower envelope if $a=1$.  We derive the same result, that the best-possible tradeoff between robustness $R$ and consistency $C$ satisfies $\sqrt{1-R} + \sqrt{1-C} = 1$. 

\noindent\textbf{{Proof techniques}.} Our proof builds upon the framework of \citet{feng2021two}, but introduces several new ingredients.  First, because we need to adjust for advice and take the upper/lower envelopes, our penalty functions are no longer \textit{strictly} increasing from 0 to 1.  This requires {an adjusted} structural decomposition for the optimal solution to the convex optimization problem, in which some of the properties from \citet{feng2021two} do not hold.  Second, our result requires characterizing a larger family of penalty functions that guarantees $R$-robustness.  Even when $R=3/4$, i.e.\ when advice is ignored, our proof reveals that a family beyond the linear penalty function of \citet{feng2021two} can be 3/4-robust. In general, because \citet{feng2021two} do not consider online matching with advice, our consistency proofs will be different from their work.

\noindent\textbf{Interpreting our guarantee in terms of advice accuracy.}
Suppose the true second stage $D_2$ follows some underlying
distribution $\cD$, for which the optimal objective value is $w^*(\cD):=\max_{M_1}\bE_{D_2\sim\cD}[w(M_1,D_2)]$.
Our result says that for any $R\in[0,3/4]$, one can simultaneously satisfy desiderata~\eqref{eqn:introConsistent}--\eqref{eqn:introRobust}, where $C$ is defined by $\sqrt{1-R} + \sqrt{1-C} = 1$.
Taking linearity of expectation and solving for $C$, our algorithm's expected performance $\bE_{D_2\sim\cD}[w(M^A_1,D_2)]$ is at least
\begin{align} \label{eqn:123980}
\max\left\{(2\sqrt{1-R}-(1-R))\frac{\bE_{D_2\sim\cD}[w(A,D_2)]}{w^*(\cD)},R\right\}w^*(\cD).
\end{align}
The ratio $\bE_{D_2\sim\cD}[w(A,D_2)]/w^*(\cD)$ can be interpreted as a measure of "advice accuracy", where the hope is that if $A$ optimized a predicted distribution $\hcD$ "close to" $\cD$, then this ratio will be close to 1.
The first argument in~\eqref{eqn:123980} will provide a better guarantee than $R$ under sufficiently high advice accuracy, although we do not try to provide explicit bounds on how prediction accuracy translates to advice accuracy (see \citet{lavastida2021learnable} for some results of this form).

\noindent\textbf{Multiple stages.}
Classical models of online matching and budgeted allocation \citep{karp1990optimal,mehta2007adwords} allow more than two stages of online arrivals, and algorithms that \textit{hybridize} between a fixed "advice" algorithm and traditional algorithms that "balance" against the adversary were introduced in \citet{mahdian2012online}, which is to our knowledge the first occurrence of algorithms with advice in the literature. However, \citet{mahdian2012online} assume the advice to be exogenously fixed instead of adapting to the decisions of the algorithm, making it difficult to define the notion of a "tight" robustness-consistency tradeoff.
Therefore, we instead focus on the elegant two-stage model of \citet{feng2021two}, avoiding complexities in how the advice should adapt to the algorithm's past decisions.

{The guarantees in \citet{mahdian2012online} do apply to the two-stage model of \citet{feng2021two} by treating every demand vertex in each batch as arriving sequentially.}  However, their  guarantees are significantly worse than ours (see \Cref{fig:rc_curve}, which plots in green the tradeoff curve described in Theorem 4.1 of their paper).
In fact, as observed in the upper-left part of \Cref{fig:rc_curve}, the guarantees of \citet{mahdian2012online} can be worse than the simplistic coin-flip algorithm that either runs the "advice" algorithm, or runs the "balance" algorithm from the multi-stage setting\footnote{In the multi-stage setting, the analogue of "balance" is only $(1-\frac{1}{e})$-robust, hence the guarantees for the coin-flip algorithm now range along the line between $(R,C)=(0,1)$ and $(R,C)=(1-\frac{1}{e},1-\frac{1}{e})$ (see the red dashed line in \Cref{fig:rc_curve}).}.  This demonstrates the need for a simpler advice-augmented online matching framework as we propose, in which the Pareto-efficient tradeoff between robustness and consistency can be exactly understood. 

\begin{figure}  
    \centering
        \includesvg[scale=0.5]{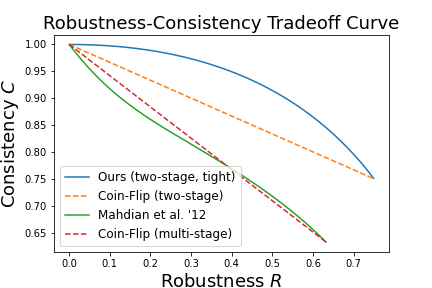}
    \caption{Comparison with the guarantees of Mahdian et al.}
    \label{fig:rc_curve}
\end{figure}

We do believe our key algorithmic idea for incorporating advice into online matching to be potentially applicable even if there are more than two stages.  We also note that our idea based on $R$-parametrized robustness envelopes for penalty functions is quite distinct from the fixed penalty functions used in \citet{mahdian2012online}.

\subsection{Applications and Computational Results} \label{subsec:introNum}

This paper focuses on \textit{two-stage} matching with \textit{partially-reliable} information, for which we now discuss some applications.
\citet{feng2021two} originally motivate two-stage matching using the batched matching of ride-hailing platforms, studying both the stochastic and adversarial extremes.
Our study of the intermediate regime can be motivated by the ride-hailing platform having information in the form of past demands from the same time window, which is only \textit{partially reliable} due to the significance of day-of demand shocks \citep[see][]{yu2023iterative}.
Meanwhile, two-stage matching can also be used as a "surrogate" for inventory \textit{placement} in e-commerce fulfillment, where the second stage demand is assumed to all arrive in one batch \citep[see][]{devalve2023understanding,bai2022joint,bai2022coordinated,jasin2024inventory}.
Placement is then exactly the first-stage decision for a two-stage matching problem, where the information about the second stage might only be partially reliable \citep[see][]{epstein2024optimizing}.
Finally, two-stage matching models have been employed for assigning mobile push ads to users subject to budget constraints \citep[see][]{wang2016dynamic}.

We conduct two sets of computational experiments motivated by these applications.
We first consider a setting with synthetic data, where the second-stage graph is drawn from an underlying distribution from which we are given samples.
The first-stage demand nodes are adjacent to every supply node, allowing the decision-maker full flexibility in which supply nodes to fill subject to a constraint on the total fill.  This is similar to the inventory placement problem given a total amount of inventory, where \textit{not filling} a supply node saves it for the second stage and is analogous to placing inventory there.

We compare our algorithm to the stochastic and adversarial benchmarks, as well as the greedy heuristic.
The stochastic optimization solution, based on the empirical distribution of the second-stage samples, performs best when the sample size is sufficiently large.
Meanwhile, the robust algorithm of \citet{feng2021two}, which ignores the samples, performs best when the sample size is small.
In between these extremes, our algorithm performs best, because it is able to leverage the partially-reliable empirical distribution without overfitting to it.
We also test a setting with distribution shift, in which our algorithm performs best even on larger sample sizes.

We also conduct computational experiments on real data using a public ride-sharing dataset from~\cite{chicago}. This dataset records anonymized ride-sharing trips from 2022, including pickup and dropoff locations, with timestamps rounded to the nearest 15 minutes. We construct two-stage bipartite matching instances from this data, following a similar approach to the experiments in \cite{feng2021two}. We test two methods of generating advice: first, by perturbing the optimal matching in hindsight and testing the algorithms across a range of perturbation levels; second, by taking the advice to be the best matching based on a prediction of the second-stage graph. Overall, our experiments suggest that incorporating advice is valuable in practice, particularly when there is flexibility in controlling the degree to which the advice is followed.



In all our experiments, we focus on the unweighted and vertex-weighted settings.
This can be justified in ride-hailing if the primary objective is to match drivers (supply nodes) who have high idle time, as considered in the experiments of \citet{feng2021two}.
It is also assumed in the mobile push ad application of \citet{wang2016dynamic}, which is a budgeted allocation problem.
This is less justified in the e-commerce fulfillment setting as pioneering work \citep[see][]{jasin2015lp} allows arbitrary edge weights between supply and demand nodes; regardless, we focus on testing vertex weights, to be consistent with the other settings as well as our main algorithmic idea.

\noindent\textbf{Organization of paper.} 
We gather some notation in \Cref{sec:notation}. \Cref{sec:unweighted_edge_weighted} contains the analysis for unweighted and edge-weighted matching. \Cref{sec:prelims} discusses the structure of worst-case instances for vertex-weighted matching and Adwords, which are used in the subsequent analysis.  Our algorithm for vertex-weighted matching is formally defined in \Cref{sec:alg_vertex_weighted}, with its analysis in \Cref{sec:analysis_vertex_weighted}. We generalize the algorithm  to Adwords and fractional advice in \Cref{sec:alg_adwords}, with its analysis in \Cref{sec:adwords_proofs}. The hardness instance certifying the tightness of our tradeoff curve $\sqrt{1-R}+\sqrt{1-C}=1$ is presented in \Cref{sec:tight}. The experimental results are presented in \Cref{sec:experiments}.

\subsection{Further Related Work} \label{sec:relatedWork}

Despite the recent surge of interest in both online matching and online algorithms with advice (see the surveys \citet{huang2024online}  and \citet{mitzenmacher2020algorithms} respectively), literature on online matching with advice has been relatively scant. We mention some papers in this intersection below, as well as other related work. 
More broadly, online bipartite matching focuses on the uncertainty aspect of the general matching problem for two-sided marketplaces.  Other aspects include stability, dynamic learning, competing incentives and objectives, etc.\ which have extensive literatures as well, that are summarized nicely in the book \citet{echenique2023online}.

\noindent\textbf{Online matching with advice.}
Since \citet{mahdian2012online}, more recently \citet{antoniadis2020secretary} have introduced an online random-order edge-weighted bipartite matching problem with advice that predicts the edge weights adjacent to each offline vertex in some optimal offline matching.
\citet{lavastida2021learnable} introduced a framework that formalizes when predictions can be learned from past data, and their algorithmic performance degrades gracefully as a function of prediction error.
Broadly speaking, our work contrasts these three papers in that we are able to understand \textit{tight} tradeoffs in whether to trust some advice, although in an arguably simpler setting.

\noindent\textbf{Online matching models in-between adversarial and stochastic.}
Instead of abstracting all information about the future into a single piece of prediction/advice that could be wrong, another approach is to postulate an explicit model for how the future can deviate from past observations.  Examples of this include the semi-online bipartite matching model of \citet{kumar2019semi}, the partially predictable model of \citet{hwang2021online}, and the multi-channel traffic model of \citet{manshadi2022online}.
We note that random-order arrivals can also be viewed as a form of partial predictability which allows learning \citep{devanur2009adwords}, and moreover it is possible to derive simultaneous guarantees under adversarial and random-order arrivals \citep{mirrokni2012simultaneous} which have the same flavor as robustness-consistency guarantees.
Finally, we mention that the single sample model in \citet{kaplan2022online} can be viewed as a form of online matching with advice that is highly erroneous.

\noindent\textbf{Optimality results in prediction-augmented online algorithms.}
Tight robustness-consistency tradeoffs have become recently understood in prediction-augmented ski rental \citep{purohit2018improving,bamas2020primal,wei2020optimal} and single-commodity accept/reject problems \citep{sun2021pareto,balseiro2022single}.
Our online matching problem contrasts these by having a \textit{multi-dimensional} state space, for which to our knowledge tightness results are rare.
We should mention that in multi-dimensional problems such as prediction-augmented caching \citep{lykouris2021competitive} and online welfare maximization \citep{BGGJ22}, "optimality" results in which consistency can be achieved with no loss of robustness (i.e.\ there is no "tradeoff") have been derived.
{Tight robustness-consistency tradeoffs have also been recently derived in mechanism design with predictions \citep{agrawal2022learning,berger2023optimal,balkanski2023online}.}

\noindent\textbf{Two-stage models.}
Two-stage models capture the essence of optimization under uncertainty, where the first-stage decision must anticipate the uncertainty,
and the second-stage decision is usually a trivial recourse after the uncertainty (in our case the second-stage graph) has been realized.
We refer to \citet{birge2011introduction} and \citet{bertsimas2011theory} for broad overviews of two-stage stochastic and robust optimization.
In this paper we focus on the two-stage online matching model of \citet{feng2021two}, introducing advice to this model and fully characterizing the tradeoff between obeying vs.\ disobeying the advice.
We note that a multi-stage online matching model with batching has also been recently considered in \citet{feng2020batching}, and two-stage matching has been formulated as a robust optimization problem in \citet{housni2020matching}.

\section{Notation}
\label{sec:notation}
The problem input consists of a bipartite graph $G$ with supply nodes $S$ on one side and demand nodes $D$ on the other.  The nodes in $S$ are known in advance, whereas the nodes in $D$  arrive in two batches $D_1, D_2$ with $D = D_1 \cup D_2$. We refer to the nodes in $D$ as online/demand nodes, and the nodes in $S$ as the offline/supply nodes. We index the vertices in $D$ (resp. $S$) with $i$ (resp. $j$). When the vertices in batch $D_k$ ($k=1,2$) arrive, their incident edges $E_k$ are revealed, and the algorithm irrevocably chooses a subset of edges $M_k \subseteq E_k$. The algorithm returns $M := M_1 \cup M_2$.

We consider four versions of online matching: Unweighted matching, edge-weighted matching, vertex-weighted matching, and Adwords. Each comes with a constraint on $M$ and an objective to maximize. In the first three settings, the constraint on $M$ is that it must be a matching in $G$; that is, no two edges in $M$ are incident to the same vertex. Their respective objectives are as follows: 
\begin{enumerate}
    \item \emph{Unweighted.} There are no weights, and the goal is to maximize the number of edges selected.
    \item \emph{Edge-weighted.} Every edge has a non-negative weight, and the goal is to maximize the sum of weights of the edges selected.
    \item \emph{Vertex-weighted.} Every supply node $j$ has a non-negative weight, and the goal is to maximize the sum of weights of the matched offline vertices. 
\end{enumerate}
The Adwords problem was introduced in~\cite{mehta2007adwords}, and the problem formulation is motivated by digital advertising.
\begin{enumerate}
\setcounter{enumi}{3}
\item \emph{Adwords.} Every edge $ij$ has a \emph{bid} $b_{ij}$, and each supply node $j \in S$ has a \emph{budget} $B_j$.\footnote{In the digital advertising interpretation, $j$ represents an advertiser, $i$ represents a consumer, and selecting edge $ij$ represents showing consumer $i$ an ad from advertiser $j$.} Selecting an edge $ij$ consumes an amount of budget from the offline vertex equal to the bid of the edge. The goal is to maximize the total sum of the bids of the selected edges, subject to the constraints that the budgets on the supply nodes are not exceeded.
\end{enumerate}

Note that edge-weighted matching and Adwords generalize the unweighted and vertex-weighted settings. In particular, vertex-weighted matching is Adwords with $b_{ij} = B_j$ for all edges $ij$. However, Adwords is not a special case of edge-weighted matching, and  edge-weighted matching is not a special case of Adwords. We sometimes abuse terminology and also refer to a feasible solution for Adwords as a "matching", even though in Adwords multiple demand nodes can be matched to a given supply node, as long as the budget of the supply node is not exceeded.

We assume the algorithm is given a \emph{suggested matching} $A$ when the first-stage graph is revealed. We evaluate an algorithm by its robustness and consistency, which are defined as follows.

\begin{definition}[Robustness and Consistency] We say that an algorithm is $R$-robust  if
$$\mathbb{E}[\ALG(G,A)] \geq R\cdot\OPT(G) \quad \text{for all $G, A$},$$
and we say an algorithm is $C$-consistent if
$$\mathbb{E}[\ALG(G,A)] \geq C \cdot \ADVICE(G,A) \quad \text{for all $G, A$}.$$
Here, $\mathbb{E}[\ALG(G,A)]$ is the expected value\footnote{The expectation is over any possible randomness in the algorithm, as we allow our algorithms to be randomized. There are no other sources of randomness in this definition.} earned by the algorithm when the input is $G$ and the advice is $A$, while $\OPT(G)$ is the value of the optimal hindsight matching in $G$, and $\ADVICE(G, A)$ is the value that the algorithm would have gotten, had it followed the advice exactly in the first stage (and solved for the optimal matching in the second stage given its first stage decision).
\end{definition}

The main question we ask in this paper is the following.

\begin{tcolorbox}
\centering
\textbf{Question.} What is the optimal tradeoff between robustness and consistency for each of the four matching problems above?
\end{tcolorbox}

We first characterize the tradeoff for unweighted and edge-weighted matching in \Cref{sec:unweighted_edge_weighted}. For these two settings, the optimal tradeoff is easy to characterize, and is achieved by a simple straight-line interpolation between the "fully robust" algorithm and a "fully consistent" algorithm. In the remainder of the paper, we then study the optimal tradeoff for vertex-weighted matching and Adwords, for which the straight-line interpolation is no longer optimal and more involved algorithms and analyses are required. 

\section{Warmup: The Unweighted and Edge-Weighted Settings}
\label{sec:unweighted_edge_weighted}
In this section, we show that for two-stage unweighted and edge-weighted bipartite matching, the optimal robustness-consistency tradeoff is given by a naive straight-line interpolation between the "fully robust" algorithm and the "fully consistent" algorithm.

\subsection{The Unweighted Case} 
\label{app:unweighted}

On the one hand, we have the fully-robust algorithm of \citet{feng2021two}, which is $\frac34$-robust, and hence $\frac34$-consistent. On the other hand, we have 
 the fully-consistent algorithm that always follows the advice, which is 1-consistent and $\frac12$-robust.\footnote{The robustness of $\frac12$ is because any maximal matching is at least  $\frac12$-robust when the graph is unweighted. Note that we may assume without loss of generality that the advice is a maximal matching, as it is never optimal to select a non-maximal matching in the first stage.} Therefore, the naive coin-flip algorithm (that runs the algorithm of \citet{feng2021two} with probability $p$ and follows the advice exactly with probability $1-p$) achieves the robustness-consistency curve defined by the line segment between $(R, C) = (\frac12, 1)$ and $(\frac34, \frac34)$. Somewhat surprisingly, this naive tradeoff is tight in the unweighted setting, which we now show.

\begin{figure}
    \centering
    \begin{tikzpicture}[scale=0.73]
    \vertex (s1) at (0, 3) [label=above left:1] {};
    \vertex (s2) at (0, 0) [label=below left:2]{};
    \vertex (d1) at (3, 3) [label=right:1] {};
    \vertex (d2) at (3, 0) [label=right:2] {};
    \node (label3) at  (2.5, 1.8) {\text{\footnotesize{$1-x$}}};
    \draw (s1) --node[above] {\text{\footnotesize{$x$}}} (d1);
    \draw[green] (s1) -- (d1);
    \draw (s2) -- (d1);
    \draw[dotted] (d2) -- (s1);
    \draw[dotted] (d2) -- (s2);
    \end{tikzpicture}
    \caption{Illustration of the hardness instance for unweighted matching. $S$ is on the left and $D$ is on the right. The first arrival neighbors both vertices of $S$. The second arrival neighbors exactly one vertex of $S$, but it could be either vertex. The advice is to match the green edge $(1,1)$.}
    \label{fig:hardness_unweighted}
\end{figure}

\textbf{The hardness instance.} The hardness instance is illustrated in \Cref{fig:hardness_unweighted}. It is a "Z" graph with two supply nodes $S = \{1,2\}$ and two demand nodes $D = \{1,2\}$, with one demand node arriving in each stage. The first-stage graph consists of both edges $(1,1)$ and $(1,2)$, and the advice suggests matching $(1,1)$. The second-stage graph consists of either $(2,1)$ or $(2,2)$. Since the algorithm does not know which second-stage edge will arrive, so it must hedge against both possibilities when making its first-stage decision.

\begin{proposition}
\label{prop:unweighted}
In the unweighted setting, any algorithm that is $R$-robust can be at most $C$-consistent where $R + C = \frac32$. 
\end{proposition}
\begin{proof}{\emph{Proof}.} 
Let $x := x_{11}$ and $1-x := x_{12}$, so that the algorithm's first-stage decision is entirely characterized by the value of $x$.  There are two cases.
\begin{enumerate}
    \item Edge $(2, 1)$ arrives in the second stage. Then $\ALG(G,A) = 2-x$ and $\ADVICE(G,A) = 1$. 
    \item Edge $(2,2)$ arrives in the second stage. Then $\ALG(G,A) = x + 1$ and $\ADVICE(G,A) = 2$. 
\end{enumerate}
Regardless of which edge arrives in the second stage, $\OPT(G)$ is always equal to $2$. Thus, for the algorithm to be $R$-robust in Case 1, we must have
$$
\frac{2-x}{2} \geq R \implies x \leq 2 - 2R.
$$
On the other hand, for the algorithm to be $C$-consistent in Case 2, we must have
$$
\frac{x+1}{2} \geq C \implies x \geq 2C - 1.
$$
Since the algorithm does not know which of the two cases will happen in the second stage, it must choose an $x$ that satisfies both of the inequalities above. For a desired robustness $R$ and consistency $C$, this is only possible if
$2C-1 \leq 2 - 2R,$
which when rearranged becomes 
$
R + C \leq \frac{3}{2}.
$
\qed
\end{proof}

\subsection{The Edge-Weighted Case}
\label{app:edge_weighted}
A simple straight line tradeoff turns out to also be optimal in the edge-weighted setting. 

\begin{proposition}
\label{prop:edge_weighted}
For two-stage edge-weighted bipartite matching with advice, the optimal robustness-consistency tradeoff is the line segment between $(R, C) = (0,1)$ and $(R, C) = (\frac12,\frac12)$.
\end{proposition}
\begin{proof}{\emph{Proof}.} 
In two-stage edge-weighted bipartite matching, note that the maximum robustness is $\frac12$, which is attained by the algorithm that 1) with probability $\frac12$, finds the maximum matching $M_1$ in the first stage and does nothing in the second stage, and 2) with probability $\frac12$, does nothing in the first stage and finds the maximum matching $M_2$ in the second stage. This algorithm is $\frac12$-robust because the value of the optimal
matching is at most the sum of the values of $M_1$ and $M_2$. Moreover it is an easy exercise to find an example which shows that no algorithm can be more than $\frac12$-robust for two-stage edge-weighted bipartite matching. On the other hand, the algorithm which always follows the advice is 1-consistent, but 0-robust.
Thus, the coin-flip algorithm which naively interpolates between the
 $\frac12$-robust algorithm and the 1-consistent algorithm attains the straight-line tradeoff between $(R,C) = (\frac12, \frac12)$ and $(0, 1)$. 
 
 To show the tradeoff is tight, consider an instance with only one supply node $v$. The first-stage graph consists of a single edge to $v$ with weight 1, and suppose the advice suggests matching the edge. Let $x$ be the probability the algorithm matches the edge. Any $R$-robust algorithm must have $x \leq 1-R$; if $x > 1-R$ then the algorithm is not $R$-robust when the second stage consists of a single edge to $v$ with very high weight. Since $x \leq 1-R$, this means the maximum consistency on this instance is $1-R$ (which is the case if the second-stage graph is empty).
\qed
\end{proof}

\section{Preliminaries for Vertex-Weighted Matching and Adwords}
\label{sec:prelims}
Having determined the optimal robustness-consistency tradeoffs for the unweighted and edge-weighted settings, in the remainder of the paper we focus on vertex-weighted matching and Adwords. For these two settings the optimal tradeoff curve is given by the equation $\sqrt{1-R} + \sqrt{1-C} = 1$ for $R \in [0,\frac34]$, which strictly dominates the straight-line interpolation. In this section we collect some preliminary facts that will be useful for the later analysis.

\subsection{Vertex-Weighted Matching}
\label{subsec:worst_case_vertex_weighted}
As mentioned in the Introduction, we will allow our algorithms to select fractional matchings.  It turns out this is without loss for two-stage vertex-weighted matching, since any fractional algorithm can be converted to a randomized integral algorithm with the same robustness and consistency guarantees. To see this, it is useful to make the following observation about worst-case instances.

\paragraph{\textbf{Worst-case instances.}} When bounding robustness and consistency, one can assume without loss of generality that the second-stage graph consists of a matching (i.e. every vertex has degree at most 1 in the second-stage graph). To see this, suppose we are bounding robustness, and consider the edges \emph{not} matched by $\OPT(G)$ in the second stage. Deleting these edges does not change the value of $\OPT(G)$ and can only decrease the value of the matching found by the algorithm. Therefore we may assume that the second stage graph consists \emph{exactly} of the matching selected by $\OPT(G)$ in the second stage. The same argument shows that when bounding consistency, we may assume that the second-stage graph consists exactly of the matching selected by $\ADVICE(G, A)$ in the second stage. Therefore we may assume the second-stage graph is a matching.

The preceding observation about worst-case instances allows us to focus on the fractional version of the problem.
\begin{proposition}
    \label{prop:vertex_weighted_fractional}
    Given any fractional algorithm for two-stage vertex-weighted bipartite matching, there is a corresponding (randomized) integral algorithm with the same robustness and consistency guarantees.
\end{proposition}

\begin{proof}{\emph{Proof}.} 
In each stage, the algorithm chooses a {fractional matching} instead of an integral one. Let $\vx = (x_{ij}: i \in D_1, j \in S, (i,j) \in E)$ be the fractional matching chosen in the first stage, so that $\vx$ satisfies the constraints $x_i := \sum_{j: (i,j) \in E} x_{ij} \leq 1$ and $x_j := \sum_{i: (i,j) \in E} x_{ij} \leq 1$. Similarly, let $\yv = (y_{ij}: i \in D_2, j \in S, (i,j) \in E)$ be the fractional matching in the second stage, which satisfies $y_i := \sum_{j: (i,j) \in E} y_{ij} \leq 1$ and $y_j := \sum_{i: (i,j) \in E} y_{ij} \leq 1 -  x_j$.
The objective in the fractional problem is to maximize $\sum_{j\in S}w_j(x_j+y_j)$, where $w_j$ is the weight of offline vertex $j$. As mentioned earlier, the first stage decisions $x_j$ are the critical ones; therefore we will use the terminology that each offline vertex $j\in S$ is \textit{filled to water level} $x_j$ at the end of the first stage.
Although the fractional problem seems at first to be easier than the integral problem, it turns out to be straightforward to convert any fractional algorithm for two-stage bipartite matching into a (randomized) integral one with the same robustness/consistency guarantees. 
Consider any fractional algorithm and let $\vx$ be its first-stage output. So, $\vx$ satisfies the constraints $x_i := \sum_{j: (i,j) \in E} x_{ij} \leq 1$ and $x_j := \sum_{i: (i,j) \in E} x_{ij} \leq 1$.
To convert this to a randomized integral algorithm, we sample an integral matching $M_1$ with marginals equal to $\vx$ (i.e. $\mathbb{P}((i,j) \in M) = x_{ij}$ for all edges $(i,j)$ in the first-stage graph).\footnote{Since the bipartite matching polytope is integral, $\vx$ can be written as a convex combination of integral matchings. Such a convex combination can be found in polynomial time using algorithmic versions of Carath\'{e}odory's theorem.} We then take $M_2$ to be the maximum-weight matching in the second-stage graph, subject to $M_1$ already being chosen. 

To analyze the (expected) weight of the integral matching $M_1 \cup M_2$, note that in the first stage the expected weight of $M_1$ is equal to that of $\vx$ by construction. Now consider the second stage. As we have argued above, we may assume that the second-stage graph consists of a matching; let $(i, j)$ be one of these edges. The contribution of $(i,j)$ to the value of the fractional algorithm is $w_j(1 - x_j)$, because the remaining amount that offline vertex $j$ can be filled is $(1-x_j)$. On the other hand, the integral algorithm will match $(i, j)$ in $M_2$ if and only if $j$ is unmatched in $M_1$, which happens with probability $(1- x_j)$. So the contribution of $(i,j)$ to the expected increase of the integral algorithm is also $w_j (1 - x_j)$. \qed
\end{proof}

\subsection{Adwords}
\label{subsec:notation_adwords}
We show that our results extend to two-stage Adwords, when the algorithm and advice can both be fractional. Formally, in two-stage Adwords, let $\vx = (x_{ij}: i \in D_1, j \in S, (i,j) \in E_1)$ be the fractional allocation chosen in the first stage, so that $\vx$ satisfies the constraints 
\begin{equation}
\label{eq:adwords_constraints}
    x_i := \sum_{j: (i,j) \in E_1} x_{ij} \leq 1 ~\text{and}~ x_j := \frac{1}{B_j}\sum_{i: (i,j) \in E_1} b_{ij}x_{ij} \leq 1.
\end{equation}
We interpret $x_j$ as the fraction of $j$'s budget that is used under $\vx$. Similarly, let $\yv = (y_{ij}: i \in D_2, j \in S, (i,j) \in E_2)$ be the fractional matching in the second stage, which satisfies 
$$y_i := \sum_{j: (i,j) \in E_2} y_{ij} \leq 1 ~\text{and}~ y_j := \frac{1}{B_j}\sum_{i: (i,j) \in E_2} b_{ij}y_{ij} \leq 1 -  x_j.$$ The objective is to maximize $\sum_{(i,j) \in E}b_{ij}(x_{ij}+y_{ij})$. We may assume that $b_{ij} > 0$ for all edges $(i,j)$, since if $b_{ij} = 0$ we can delete $(i,j)$ without changing the problem. Also, note that the first stage decisions $x_{ij}$ are the critical ones. We will use the terminology that $j\in S$ is \textit{filled to} level $x_j$ at the end of the first stage.

We show that our guarantees hold for Adwords even if the advice is allowed to be fractional. We let $a_{ij}$ denote the fractional amount the advice fills edge $ij$ in the first stage; so $\mathbf{a}$ is a feasible fractional allocation that satisfies \eqref{eq:adwords_constraints}. Similar to the vertex-weighted case, we may assume the second-stage graph is a matching. 

\begin{proposition}   \label{prop:adwords_worst_case}
    For fractional Adwords, the worst case for robustness and consistency is when the second-stage graph consists of a matching. Moreover, the worst case for consistency is when each edge $(i,j)$ in the second stage matching has $b_{ij} = B_j(1-a_j)$.
\end{proposition}

\begin{proof}{\emph{Proof}.}
    First, we show why we may assume the second-stage graph consists of a matching. To see this, suppose we are bounding robustness. Let $y^*_j$ be the fraction of $j$'s budget used in the second-stage under the optimal solution. Now, imagine creating an alternative second-stage graph, where all edges incident to $j$ are replaced by a single edge with bid $B_jy^*_j$. This does not change the value of $\OPT(G)$, and can only decrease the value of the algorithm. The same argument shows that the worst case for consistency is also when the second-stage graph is a matching.

    Next we show that when bounding consistency, we may further assume that the bid on each edge $(i,j)$ in the second-stage matching is $b_{ij} = B_j(1-a_j)$. This is because, as we just showed, the worst-case for consistency is when the second-stage graph is some matching, say $M$. The ratio of the algorithm's value to the value of following the advice is therefore
    $$\frac{\ALG(G,A)}{\ADVICE(G,A)} = \frac{\mathsf{alg} + \sum_{ij \in M} \min\{B_j(1-x_j), b_{ij}\}}{\mathsf{adv} + \sum_{ij \in M}\min\{B_j(1-a_j), b_{ij}\}},$$
    where $\mathsf{alg}$ denotes the value the algorithm earns in the first stage, and $\mathsf{adv}$ denotes the value that the advice earns in the first stage.
    Considering this ratio as a function of $(b_{ij}: ij \in M)$, it is not hard to see that it is minimized when each $b_{ij}$ is either 0 or $B_j(1-a_j)$. Since having an edge with $b_{ij} = 0$ is the same as not including the edge, we conclude that the worst case for consistency is when the second-stage graph is a matching $M$ with $b_{ij} = B_j(1-a_j)$ for all $(i,j) \in M$. 
    \qed
\end{proof}



We leave it as an open question to characterize the optimal robustness-consistency tradeoff for two-stage \emph{integral} Adwords -- that is, when the algorithm must select an integral allocation. We remark that this problem seems challenging, since integral Adwords is significantly harder than fractional Adwords --- see \cite{huang2020adwords}. In fact, as far as we are aware, even the optimal competitive ratio for two-stage integral Adwords is not known \citep{feng2020batching}.

\section{Algorithm for Vertex-Weighted Matching with Integral Advice}
\label{sec:alg_vertex_weighted}

\begin{algorithm}[t]
\caption{Algorithm for Two-Stage Fractional Vertex-Weighted Bipartite Matching with Advice}
\label{alg:vertex_weighted}
\DontPrintSemicolon
\KwIn{Suggested matching $A$ in the first-stage graph and desired robustness level $R$.}
\begin{algorithmic}[1]
\STATE (Define penalty functions) Let $\fl(x) = \max\{0, 1 - \frac{1-R}{x}\}$ and $\fu(x) = \min\{1, \frac{1-R}{1-x}\}$. 
\STATE (First stage) When the first-stage vertices $D_1$ arrive, let $S_1 \subseteq S$ be the set of offline vertices that are in the suggested matching $A$. Set $f_j = \fl$ for all $j \in S_1$ and $f_j = \fu$ for all $j \in S \setminus S_1$.

Solve the following optimization problem for the first-stage fractional matching $\bar{\vx}$:
\begin{align*}
    (P_1) \quad \max &\sum_{j\in S}w_j\left(x_j-\int_0^{x_j} f_j(t) dt\right) \\
    \mathrm{s.t.\ } x_i:=&\sum_{j: (i,j) \in E} x_{ij} \leq 1 &  \forall \; i \in D_1 \\
    x_j:=&\sum_{i\in D_1: (i,j) \in E} x_{ij} \leq 1 &  \forall \; j \in S \\
    &x_{ij} \geq 0 & \forall \; i\in D_1,(i,j) \in E
\end{align*}



\STATE (Second stage) When the second-stage vertices $D_2$ arrive, solve for the optimal fractional matching $\bar{\yv}$ subject to the capacities already taken by $\bar{\vx}$:
\begin{align*}
    (P_2) \quad \max &\sum_{j \in S} w_jy_j \\
    \mathrm{s.t.\ } y_i:=&\sum_{j: (i,j) \in E} y_{ij} \leq 1 &  \forall \; i \in D_2 \\
    y_j:=&\sum_{i \in D_2: (i,j) \in E} y_{ij} \leq 1 - \bar{x}_j&  \forall \; j \in S \\
    &y_{ij} \geq 0 & \forall \; i \in D_2, \, (i,j) \in E
\end{align*}

\STATE {Return} $\bar{\vx} + \bar{\yv}$. 
\end{algorithmic}
\end{algorithm}

\begin{figure}[t!]
    \centering
    \begin{subfigure}[t]{0.3\textwidth}
        \centering
        \includesvg[scale=0.3]{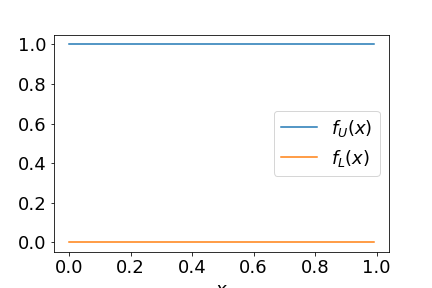}
        \end{subfigure}
        \hspace{1mm}
 \begin{subfigure}[t]{0.3\textwidth}
        \centering
       \includesvg[scale=0.3]{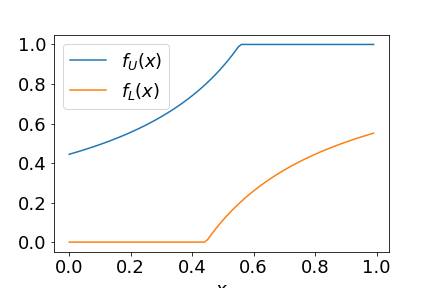}
        \end{subfigure}
 \hspace{1mm}
 \begin{subfigure}[t]{0.3\textwidth}
        \centering
       \includesvg[scale=0.3]{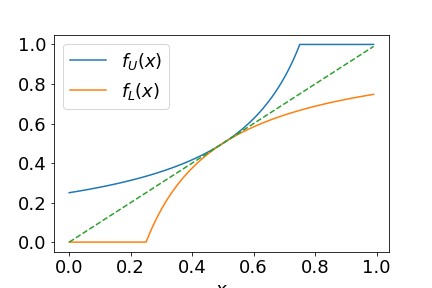}
        \end{subfigure}
    \caption{Plots of $\fl$ and $\fu$ for three values of $R$. Left: $R = 0$. Middle: $R = \frac{5}{9}$. Right: $R = \frac{3}{4}$.  The consistencies achieved are $C=1, C=\frac{8}{9}, C=\frac{3}{4}$, respectively. In the right plot, the green dashed line indicates the penalty function used by \citet{feng2021two}.}
    \label{fig:fufl}
\end{figure}

Here we present our algorithm for two-stage vertex-weighted bipartite matching when the advice is assumed to be integral (we extend to fractional advice in \Cref{sec:alg_adwords}). Recall that from \Cref{prop:vertex_weighted_fractional}, we can round any fractional algorithm to a randomized integral algorithm with no loss. Therefore we will focus on the version of the problem where the algorithm is allowed to make fractional decisions. Our algorithm for this setting is described in Algorithm~\ref{alg:main}. As described in the introduction, the main challenge is to commit to a matching $\vx$ when the first-stage graph is revealed, that incorporates some advice but maintains a worst-case competitive ratio.

Intuitively, the amount $x_j$ that each vertex $j \in S$ is filled to after the first stage is governed by three factors. First, it should depend on the weight $w_j$; the larger $w_j$ is, the larger $x_j$ should be. Second, it should depend on the advice; $x_j$ should be larger if $j$ is recommended by the advice, and smaller if not. Third, it should depend on how much the \emph{other} offline vertices are filled; the larger $x_j$ is compared to the other vertices competing to be filled, the more hesitant we should be of filling it even further, because this can be exploited by a second-stage graph whose edges are incident to the vertices which have already been filled the most.
Typically, the third point is achieved by defining a \emph{penalty} $f(x_j)\in[0,1]$ on the marginal benefit of filling $j$, with $f(x_j)$ increasing in $x_j$, which encourages the balancing of the water levels across $j\in S$. This is combined with the first point by associating a \textit{potential} $w_j(1 - f(x_j))$ with each $j \in S$.  The algorithm then decreases these potentials while filling water levels, with the aim of equalizing them as much as possible.

Our main idea to address the second point, now that there is some advice, is to use a \textit{lower} penalty rate $\fl(x_j)$ to incentivize the filling of offline vertices $j\in S_1$ that are matched by the advice, and a \textit{higher} penalty rate $\fu(x_j)$ for $j\notin S_1$.
More specifically, for any desired "robustness" $R$, our analysis characterizes the envelope of penalty functions $f:[0,1]\to[0,1]$ (see \Cref{thm:vertex_weighted}) that guarantees a worst-case competitive ratio of $R$, and defines $\fl,\fu$ according to the lower, upper boundaries of this envelope respectively.

\Cref{fig:fufl} plots the penalty functions $\fl$ and $\fu$ for three values of $R$. The parameter $R\in[0,\frac{3}{4}]$ affects the separation between $\fl$ and $\fu$.
If $R=0$, then $\fl,\fu$ are the constant functions $0,1$ respectively, from which it can be seen that Algorithm~\ref{alg:main} will fully match all the vertices suggested by the advice.
Intuitively, setting $R=0$ is desirable when we trust that the advice will perform well.
On the other extreme, setting $R=\frac{3}{4}$ guarantees the maximum possible robustness, which is not reliant on the advice.
Surprisingly, even in this case there is a bit of separation between $\fl$ and $\fu$ (see \Cref{fig:fufl}), and hence the algorithm still discriminates between offline vertices $j\in S_1$ vs.\ $j\notin S_1$, despite its distrust in the advice.
We plot one intermediate value where $R=\frac{5}{9}$.


Finally, having decided penalty functions for the offline vertices, there is still the question of how to balance their potential levels when multiple online vertices $D_1$ can arrive in the first stage.
Fortunately, this question was answered by \citet{feng2021two}, who show that problem~$P_1$ is the right one to solve, and establish a neat structural decomposition on its solution.
Our \Cref{lem:decomp_vertex_weighted} is a local version of the structural decomposition from their paper which allows for \textit{heterogeneous} penalty functions across offline vertices $j\in S$ that do not necessarily satisfy their boundary conditions of $f_j(0)=0$ and $f_j(1)=1$. We note that the {original results from} \cite{feng2021two} can also be derived with {this new} structural lemma in place of their structural decomposition. 


\begin{lemma}[Structural Lemma]
\label{lem:decomp_vertex_weighted}
    Let $x$ be the first-stage solution returned by Algorithm \ref{alg:vertex_weighted}. There exist non-negative real numbers $c_i$ for each $i \in D_1$ that satisfy the following properties:
    \begin{enumerate}
        \item For all $(i,j) \in E_1$ with $x_{ij} > 0$, we have $w_j (1-f_j(x_j)) \geq c_i$. 
        \item For all $i, k \in D_1$ such that there exists some $j \in S$ with $x_{k j} > 0$, we have $c_i \geq c_k$. 
        \item For all $i \in D_1$ with  $x_i < 1$, we have $c_i = 0$. 
        \item For any $(i,j) \in E_1$ with $x_j < 1$, we have $c_i \geq w_j(1-f_j(x_j))$.
    \end{enumerate}
\end{lemma}



Note that the value $w_j(1-f_j(x_j))$ is exactly the rate of change of the objective of $(P_1)$ relative to $x_{ij}$. In the Lemma, the value $c_i$ will represent the rate of increase of the objective of $(P_1)$ if the capacity of $i$ were to be increased infinitesimally -- it can be thought of as the "marginal value" or "shadow price" of $i$ at the optimal solution.
Property~1 holds because if $c_i > w_j(1-f_j(x_j))$ instead, then there would be a way to shift some of the flow on edge $(i,j)$ to elsewhere and strictly improve the objective of $(P_1)$.
Meanwhile, Property~2 holds because if ${x}_{kj}>0$, i.e.\ $k$ is being used to fill $j$, then one way to use an $\epsilon$ increase in the capacity of $i$ is to add $\epsilon$ to the current value of $x_{ij}$ and subtract $\epsilon$ from the current value of $x_{kj}$ -- this keeps the value of $x_j$ the same and frees up $\epsilon$ capacity from $k$, which can now be used to improve the objective at a marginal rate of $c_k$. Thus we have $c_i \geq c_k$, since the best way to use the additional capacity of $i$ should improve the objective by at least this rate.
Property~3 holds because if $c_i > 0$ instead, then there would be a way to strictly improve the objective of $(P_1)$ by sending more flow out of vertex $i$. Finally, Property~4 holds because given an infinitesimal increase in the capacity of $i$, one  feasible way to allocate it would be to send flow from $i$ to $j$ -- this increases the objective at a rate of $w_{j}(1-f_j(x_j))$, so the optimal rate of increase $c_i$ should be at least this quantity.

\begin{proof}{\emph{Proof of \Cref{lem:decomp_vertex_weighted}}.} 
    The proof will proceed as follows. We first write down the KKT  conditions of the convex program. Then, we show that setting $c_i$ equal to the optimal dual variable corresponding to the capacity constraint for node $i$ satisfies the desired properties.

    Introducing non-negative dual variables $\lambda_i$, $\theta_j$, and $\gamma_{ij}$ for the constraints of the convex optimization problem, the Lagrangian of the convex program is:
    $$L(\vx; \lambdav, \thetav, \gammav) = \sum_{j \in S} w_j(x_j-F_j(x_j)) + \sum_i \lambda_i(1-x_i) + \sum_j \theta_j(1-x_j) + \sum_{ij} \gamma_{ij} x_{ij}$$
    The KKT conditions state that at optimality, the primal/dual solutions satisfy
    \begin{enumerate}
        \item (Stationarity)
        $w_{j}(1-f_j(x_j)) - \lambda_i - \theta_j + \gamma_{ij} = 0$, for all $(i,j) \in E_1$.
        \item (Complementary Slackness)
        \begin{itemize}
            \item $\lambda_i(1-x_i) = 0$ for all $i \in D_1$,
            \item $\theta_j(1-x_j) = 0$ for all $j \in S$,
            \item $\gamma_{ij}x_{ij} = 0$ for all $(i,j) \in E_1$.
        \end{itemize}
    \end{enumerate}
    We now show that setting $c_i = \lambda_i$ satisfies the properties in the Theorem. We check the properties one by one.
    \begin{enumerate}
        \item Let $(i,j) \in E_1$ be an edge with $x_{ij} > 0$. By complementary slackness, we know $\gamma_{ij} = 0$. Then, stationarity implies
        $$w_{j}(1-f_j(x_j)) = \lambda_i + \theta_j \geq \lambda_i.$$
        \item Consider $i,k \in D_1$ and $j \in S$ with $x_{kj} > 0$. By complementary slackness, $\gamma_{kj} = 0$. By the stationary condition for edge $(k,j)$, we have
        $$w_j(1-f_j(x_j)) = \lambda_k + \theta_j.$$
        On the other hand, by the stationarity condition for edge $(i,j)$, we have
        $$w_j(1-f_j(x_j)) = \lambda_i + \theta_j - \gamma_{ij} \leq \lambda_i + \theta_j.$$
        Comparing the two equations, we see that $\lambda_i \geq \lambda_k$.
        \item This follows directly from complementary slackness, since if $x_i < 1$ then $\lambda_i = 0$. 
        \item Consider $(i,j) \in E_1$ with $x_j < 1$. By complementary slackness, $\mu_j = 0$. Then, stationarity gives
        $$w_j(1-f_j(x_j)) = \lambda_i - \gamma_{ij} \leq \lambda_i.$$
    \end{enumerate}
    \qed
\end{proof}


\subsection{Illustration of how our Algorithm uses Advice} \label{sec:illus}

\begin{figure}

    \centering
    \begin{subfigure}[t]{0.3\textwidth}
        \centering
    \begin{tikzpicture}
    \vertex (s1) at (0, 3) [label=left:$1$] {};
    \vertex (s2) at (0, 2) [label=left:$1$] {};
    \vertex (s3) at (0, 1) [label=left:$2$] {};
    \vertex (s4) at (0, 0) [label=left:$4$] {};
    
    \vertex (d1) at (3, 2) {};
    \vertex (d2) at (3, 1) {};
    \node (label11) at (1.5, 2.8) {\text{\footnotesize{$\frac{3}{11}$}}};
    \node (label12) at (0.7, 2.3) {\text{\footnotesize{$\frac{3}{11}$}}};
    \node (label13) at (1, 1.6) {\text{\footnotesize{$\frac{5}{11}$}}};
    \node (label23) at (2.1, 1.25) {\text{\footnotesize{$\frac{2}{11}$}}};
    \node (label24) at (1.4, 0.2) {\text{\footnotesize{$\frac{9}{11}$}}};
    \draw (s2) -- (d1);
    \draw (d1) -- (s1);
    \draw (d1) -- (s3);
    \draw (d2) -- (s3);
    \draw (d2) -- (s4);
    \end{tikzpicture}
    \caption{The fully robust solution.}
        \label{subfig:illus_a}
    \end{subfigure}
        \hspace{1mm}
 \begin{subfigure}[t]{0.3\textwidth}
        \centering
       \begin{tikzpicture}
    \vertex (s1) at (0, 3) [label=left:$1$] {};
    \vertex (s2) at (0, 2) [label=left:$1$] {};
    \vertex (s3) at (0, 1) [label=left:$2$] {};
    \vertex (s4) at (0, 0) [label=left:$4$] {};
    
    \vertex (d1) at (3, 2) {};
    \vertex (d2) at (3, 1) {};
   \node (label11) at (1.5, 2.8) {\text{\footnotesize{$0$}}};
    \node (label12) at (0.7, 2.3) {\text{\footnotesize{$0$}}};
    \node (label13) at (1, 1.6) {\text{\footnotesize{$1$}}};
    \node (label23) at (2.1, 1.25) {\text{\footnotesize{$0$}}};
    \node (label24) at (1.4, 0.2) {\text{\footnotesize{$1$}}};
    \draw (s2) -- (d1);
    \draw (d1) -- (s1);
    \draw[green] (d1) -- (s3);
    \draw (d2) -- (s3);
    \draw[green] (d2) -- (s4);
    \end{tikzpicture}
    \caption{The algorithm's first-stage matching if the advice suggests $(1,3)$, $(2,4)$.}
        \label{subfig:illus_b}
        \end{subfigure}
 \hspace{1mm}
 \begin{subfigure}[t]{0.3\textwidth}
        \centering
      \begin{tikzpicture}
    \vertex (s1) at (0, 3) [label=left:$1$] {};
    \vertex (s2) at (0, 2) [label=left:$1$] {};
    \vertex (s3) at (0, 1) [label=left:$2$] {};
    \vertex (s4) at (0, 0) [label=left:$4$] {};
    
    \vertex (d1) at (3, 2) {};
    \vertex (d2) at (3, 1) {};
    \node (label11) at (1.5, 2.8) {\text{\footnotesize{$0$}}};
    \node (label12) at (0.7, 2.3) {\text{\footnotesize{$\frac{5}{9}$}}};
    \node (label13) at (1, 1.6) {\text{\footnotesize{$\frac{4}{9}$}}};
    \node (label23) at (2.1, 1.25) {\text{\footnotesize{$\frac{5}{9}$}}};
    \node (label24) at (1.4, 0.2) {\text{\footnotesize{$\frac{4}{9}$}}};
    \draw[green] (s2) -- (d1);
    \draw (d1) -- (s1);
    \draw (d1) -- (s3);
    \draw[green] (d2) -- (s3);
    \draw (d2) -- (s4);
    \end{tikzpicture}
        \caption{The algorithm's first-stage matching if the advice suggests $(1,2)$, $(2,3)$.}
        \label{subfig:illus_c}
        \end{subfigure}
    \caption{Example to illustrate the features of the algorithm. $S$ is on the left and $D_1$ is on the right. The vertices in $S$ are $i=1,2,3,4$ and the vertices in $D_1$ are $j=1,2$ (labelling goes from top to bottom). The number next to $j \in S$ is its weight $w_j$. The green edges are suggested by the advice, and our algorithm's decisions are illustrated for the case where $R=5/9$.}
    \label{fig:illus}
\end{figure}

As an example, suppose $R=\frac{5}{9}$ and $C=\frac{8}{9}$, where we note that $\sqrt{1-R}+\sqrt{1-C}=1$.
Let $D_1$ consist of vertices $i=1,2$ and $S$ consist of vertices $j=1,2,3,4$ with weights $w_j=1,1,2,4$ respectively.
The edges in the graph are $(1,1)$, $(1,2)$, $(1,3)$ and $(2,3)$, $(2,4)$.  

We first demonstrate the case where the advice suggests edges (1,3) and (2,4), which would match weight $w_3+w_4=6$ in the first stage.
Intuitively, this advice is not so "extreme" in that it is greedily matching the highest-weight vertices in $S$ while it can, without assuming it can later match them in the second stage.
In this case our algorithm would follow the advice exactly (see \Cref{subfig:illus_b}), and have a consistency of 1.
Its robustness would be at least $\frac{6}{7}$, with the worst case being when the second stage consists of a single edge (3,4), which cannot be matched by our algorithm but increases the optimal offline matching from 6 to 7.
Nonetheless, this suffices because $\frac{6}{7}$ is well above the targeted robustness of $R=\frac{5}{9}$.

We now demonstrate the case where the advice suggests edges (1,2) and (2,3) instead, perhaps predicting that an edge (3,4) will allow us to match offline vertex 4 later.
This advice is quite "extreme" in that it is skipping the highest-weight vertex in $S$ in the first stage, based on a second-stage prediction which may not come to fruition.
Following it exactly would give a weight of $w_2+w_3=3$ in the first stage, which cannot be more than $\frac{3}{7}$-robust\footnote{The worst case is if the second-stage graph consists of a single edge $(3,2)$, in which case $\ALG(G,A) = 3$ and $\OPT(G) = 7$.}, significantly lower than the target of $\frac{5}{9}$.
Meanwhile, the maximally robust fractional matching based on linear penalty functions (\citet{feng2021two}; shown in \Cref{subfig:illus_a}), which judiciously balances between \textit{all} the offline vertices, is not $\frac{8}{9}$-consistent\footnote{If the second-stage graph consists of edges $(3,1)$ and $(4,4)$, then the fully robust solution gets a value of $1+\frac{3}{11}+2\cdot\frac{7}{11}+4=\frac{72}{11}$ whereas $\ADVICE(G,A) = 8$. Their ratio is $\frac{9}{11}$, which is less than $\frac{8}{9}$.}.
Our algorithm returns the solution in \Cref{subfig:illus_c}, which follows the advice in that it completely prioritizes vertex 2 over vertex 1, but deviates by significantly filling offline vertex~4 (which has the highest weight) as a failsafe (although not as much as the fully robust solution).
This makes it both $\frac{5}{9}$-robust and $\frac{8}{9}$-consistent, and our algorithm can be adjusted accordingly to be both $R$-robust and $C$-consistent for any values $R,C$ satisfying $\sqrt{1-R}+\sqrt{1-C} \leq 1$.

Based on these examples, we highlight two desirable features of our algorithm.  First,
it naturally responds to the "extremity" of the advice, by deviating more from the more extreme advice (in the 2nd case) in order to maintain $R$-robustness.
Second, from the definition of $\fu$ and $\fl$ in Algorithm \ref{alg:main} together with \Cref{lem:decomp_vertex_weighted}, it can be seen that if $R\le \frac{1}{2}$ and the graph is unweighted, then our algorithm will \textit{always} follow the advice exactly. This is because $\fl(1) = R \leq 1 - R = \fu(0)$, which implies the algorithm will prioritize filling a vertex suggested by the advice over one that is not, even if the former is completely filled and the latter is empty.  Put another way, the algorithm automatically recognizes that any maximal matching will be at least $\frac{1}{2}$-robust in an unweighted graph.

\section{Analysis for Vertex-Weighted Matching {with Integral Advice}}
\label{sec:analysis_vertex_weighted}

We now analyze the robustness and consistency of Algorithm \ref{alg:vertex_weighted}. The theorem below gives a characterization of the penalty functions $f_j$ that are sufficient to guarantee $R$-robustness and $C$-consistency, respectively. 

\begin{theorem}
    \label{thm:vertex_weighted}
    Let $R \in [0, \frac{3}{4}]$ and let $C \in [0,1]$. Let $f_j$ denote the penalty functions used in Algorithm \ref{alg:vertex_weighted}, where $f_j: [0,1] \to [0,1]$ is continuous and increasing. Then, the following hold.
\begin{enumerate}
    \item (Robustness) Suppose for all $j \in S$, we have $1 - \frac{1-R}{x} \leq f_j(x) \leq \frac{1-R}{1-x}$ for all $x \in (0, 1)$. Then the algorithm is $R$-robust.
    \item (Consistency) Suppose for all $j \in S$, we have 
    \begin{itemize}
        \item $f_j(x) \leq \frac{1-C}{1-x}$ if  $j$ is matched by the advice, and
        \item $f_j(x) \geq 1 - \frac{1-C}{x}$ if $j$ is not matched by the advice.
    \end{itemize} Then the algorithm is $C$-consistent.
\end{enumerate}
\end{theorem}
Intuitively, the condition for robustness says that the penalty functions should not be too extreme. On the other hand, the condition for consistency says that one should set a lower penalty for $j$ if it is matched by the advice, and a higher penalty otherwise. A direct corollary of \Cref{thm:vertex_weighted} is that Algorithm \ref{alg:vertex_weighted} achieves the robustness-consistency tradeoff given by the equation $\sqrt{1-R} + \sqrt{1-C} = 1$.
\begin{corollary}
    \label{cor:vertex_weighted}
    Let $R \in [0, \frac34]$ and $C \in [0,1]$, and suppose $\sqrt{1-R} + \sqrt{1-C} \geq 1$. Let $\fl(x) = \max\{0, 1 - \frac{1-R}{x}\}$ and $\fu(x) = \min\{1, \frac{1-R}{1-x}\}$ be the penalty functions used in Algorithm \ref{alg:vertex_weighted}.  Setting 
\begin{itemize}
    \item $f_j(x) = \fl(x)$ for all $j \in S$ matched by the advice, and 
    \item $f_j(x) = \fu(x)$ for all $j \in S$ not matched by the advice 
\end{itemize} 
satisfies the conditions for $R$-robustness and $C$-consistency in \Cref{thm:vertex_weighted}.
\end{corollary}

\begin{proof}{Proof.} 
    Clearly $\fl$ and $\fu$ satisfy the condition for robustness in \Cref{thm:vertex_weighted}. To show they satisfy the condition for consistency, we just need to show that 
    $$\fl(x) \leq \frac{1-C}{1-x} ~~\text{and}~~ \fu(x) \geq 1 - \frac{1-C}{x}.$$ 
    The first inequality is equivalent to $1 - \frac{1-R}{x} \leq \frac{1-C}{1-x},$ which rearranges to $ \frac{1-R}{x} + \frac{1-C}{1-x} \geq 1$. Simple calculus shows that
    $$\min_{x \in [0,1]} \left\{\frac{1-R}{x} + \frac{1-C}{1-x} \right\}= \left(\sqrt{1-R} + \sqrt{1-C}\right)^2 \geq 1.$$
    On the other hand, the inequality $\fu(x) \geq 1 - \frac{1-C}{x}$ is equivalent to $\frac{1-R}{1-x} \geq 1- \frac{1-C}{x}$, which rearranges to $\frac{1-R}{1-x}  + \frac{1-C}{x}\geq 1$. Similar to above, this inequality holds because
    $$\min_{x \in [0,1]} \left\{\frac{1-R}{1-x} + \frac{1-C}{x} \right\}= \left(\sqrt{1-R} + \sqrt{1-C}\right)^2 \geq 1.$$
    \qed
\end{proof}

In the remainder of this section, we prove \Cref{thm:vertex_weighted}. The proof will employ the online primal-dual technique, where based on the algorithm's decisions, we construct dual variables that are approximately feasible and equal the algorithm's objective value. We break the proof into several parts. In \Cref{sec:analysis_duals_vertex_weighted}, we show how to set the dual variables, show that their value in the dual objective equals the value of the algorithm (\Cref{clm:pd_equal_vertex_weighted}), and and prove a key lower bound (\Cref{clm:sumdualvar_vertex_weighted}) regarding the sum of the dual variables on any edge. These two claims hold for any functions $f_j: [0,1] \to [0,1]$ that are continuous and increasing, and their proofs rely on the structural property. We then use these two claims in \Cref{sec:analysis_rc_vertex_weighted} to show that if the penalty functions additionally satisfy conditions 1 and 2 in \Cref{thm:vertex_weighted}, then the algorithm is guaranteed to be $R$-robust and $C$-consistent, respectively.

\subsection{Defining the Dual Variables}
\label{sec:analysis_duals_vertex_weighted}
To begin, recall the LP formulation for vertex-weighted bipartite matching and its dual:
\begin{align*}
\mbox{max }  \sum_{(i,j) \in E} w_j z_{ij}  & & \mbox{min }  \sum_{i \in D} \alpha_i + \sum_{j \in S} \beta_j \\
\mbox{s.t. } \sum_{j: (i,j) \in E} z_{ij} \leq 1 &  \qquad\forall i \in D & \mbox{s.t. } \alpha_i + \beta_j \geq w_j & \qquad \forall (i,j) \in E \\
\sum_{i:(i,j) \in E} z_{ij} \leq 1 &\qquad \forall j \in S & \alpha_i, \beta_j \geq 0 & \qquad \forall i \in D, j \in S. \\
z_{ij} \geq 0 &  \qquad \forall (i,j) \in E.
\end{align*}
Let $(c_i: i \in D_1)$ be the values from \Cref{lem:decomp_vertex_weighted}. We define dual variables as follows:

\textbf{First stage.} 
For all $i \in D_1$, set $\alpha_i \gets c_i$. For all $j \in S$, set
$$
\beta_j \gets
    w_jx_j - \sum_i c_i{x}_{ij}.
$$

\textbf{Second stage.} By the discussion in \Cref{subsec:worst_case_vertex_weighted}, we may assume without loss of generality that the second stage graph is a matching. For all $i \in D_2$, set $\alpha_i \gets w_j(1-x_j)$, where $j$ is the unique neighbor of $i$ in the second stage graph. (This is exactly how much the edge contributes to the algorithm.) Leave the other dual variables unchanged.

\begin{claim}  \label{clm:pd_equal_vertex_weighted}
    The value of the algorithm is equal to the objective value of $(\bar\alphav, \bar\betav)$ in the dual.
\end{claim}

\begin{proof}{Proof.}
    Clearly the change in primal equals the change in the dual in the second stage. The claim in the first stage follows from the below equation, which holds for each $j \in S$:
    \begin{equation}
        \label{eq:eachj_vertex_weighted}
        \beta_j + \sum_{i \in D_1} \alpha_i x_{ij} = w_j x_j.
    \end{equation}
     This will suffice to prove the claim, since assuming \eqref{eq:eachj_vertex_weighted} holds, we have
    \begin{align*}
        \sum_{j \in S} \beta_j + \sum_{i \in D_1} \alpha_i 
        &= \sum_{j \in S} \beta_j + \sum_{i \in D_1} \alpha_i x_i &\text{(since if $x_i < 1$ then $\alpha_i = 0$ by Part 3 of \Cref{lem:decomp_vertex_weighted})}\\
    &= \sum_{j \in S} \left( \beta_j + \sum_{i \in D_1}\alpha_i x_{ij} \right) \\
    &= \sum_{j \in S} w_j x_j &\text{(by \eqref{eq:eachj_vertex_weighted})}
    \end{align*}
    Let us now prove \eqref{eq:eachj_vertex_weighted}. For $j \in S$, we have
    $$\beta_j + \sum_{i \in D_1} \alpha_i x_{ij} = w_jx_j - \sum_{i \in D_1} c_i x_{ij} + \sum_{i \in D_1} c_i x_{ij} = w_jx_j,$$
    as claimed.    \qed
\end{proof}



Our next claim is a lower bound on the sum of the dual variables across any given edge, and will be crucial in the analysis of robustness and consistency.

\begin{claim}
\label{clm:sumdualvar_vertex_weighted}
    For all $(i,j) \in E$, we have
    $$\beta_j \geq w_jx_jf_j(x_j),$$
    and
    $$\alpha_i + \beta_j \geq
    \begin{cases}
        w_j(1-f_j(x_j)+x_jf_j(x_j)), &\text{if $(i,j) \in E_1$,} \\
        w_j(1-x_j + x_jf_j(x_j)), &\text{if $(i,j) \in E_2$.}
    \end{cases}
    $$
\end{claim}

 \begin{proof}{\emph{Proof of \Cref{clm:sumdualvar_vertex_weighted}}.} 
 We prove the two parts separately. 

\underline{First part of claim.} We have
$$\beta_j = w_jx_j - \sum_i c_ix_{ij} \stackrel{(a)}{\geq} 
w_jx_j - \sum_i w_{j}(1-f_j(x_j))x_{ij}
= w_jx_jf_j(x_j),
$$
where $(a)$ is by Part 1 of \Cref{lem:decomp_vertex_weighted}.

\underline{Second part of claim.} First, consider $(i,j) \in E_1$. If $x_j = 1$, then 
\begin{align*}
\alpha_i + \beta_j &= c_i + \left(w_j - \sum_{k} c_kx_{kj}\right) \\
&\geq c_i + w_j - \sum_kc_ix_{kj} &\text{(by Part 2 of \Cref{lem:decomp_vertex_weighted})}\\
&= w_j
\end{align*}
as desired.  On the other hand, if $x_j < 1$, then 
\begin{align*}
\alpha_i + \beta_j 
\geq c_i + w_jx_jf_j(x_j) 
= c_i + w_jx_jf_j(x_j) 
\geq w_j(1-f_j(x_j))+ w_j x_jf_j(x_j),
\end{align*}
where the first inequality applies the lower bound on $\beta_j$ from the first part of the claim, and the final inequality is by Part 4 of \Cref{lem:decomp_vertex_weighted} (where $x_j < 1$).

Next, consider $(i,j) \in E_2$. Then $\alpha_i = w_j(1-x_j)$. Using  $\beta_j \geq w_jx_jf_j(x_j)$, we have
\begin{align*}
\alpha_i + \beta_j &\geq w_j(1-x_j) + w_jx_jf_j(x_j) 
\end{align*}
and so the claim holds. 
\qed
\end{proof}

\subsection{Bounding Robustness and Consistency}
\label{sec:analysis_rc_vertex_weighted}
Next, we use \Cref{clm:pd_equal_vertex_weighted} and \Cref{clm:sumdualvar_vertex_weighted} to show that if the penalty functions satisfy the requirements in \Cref{thm:vertex_weighted}, then the algorithm is $R$-robust and $C$-consistent.
\begin{claim}
\label{clm:robust_vertex_weighted}
    If we run Algorithm \ref{alg:vertex_weighted} with continuous, increasing functions $f_j: [0,1] \to [0,1]$ with $$1 - \frac{1-R}{x} \leq f_j(x) \leq \frac{1-R}{1-x}$$ for all $x \in (0, 1)$, then the algorithm is $R$-robust.
\end{claim}
\begin{proof}{\emph{Proof.}} 
    By \Cref{clm:pd_equal_vertex_weighted}, for any graph $G$ and advice $A$, we have $\ALG(G,A) = \sum_{i \in D} \alpha_i + \sum_{j \in S} \beta_j$. Therefore, to show $R$-robustness, we just need to show that $(\alpha, \beta)$ is $R$-approximately feasible to the dual, i.e. $\alpha_i + \beta_j \geq R\,w_j$ for all $(i,j) \in E$. 
    
    If $(i,j) \in E_1$, then 
    $$
        \alpha_i + \beta_j
        \stackrel{(a)}{\geq}   w_j(1-f_j(x_j)+x_jf_j(x_j))
        \stackrel{(b)}{\geq} R\,w_j
    $$
    where $(a)$ is by \Cref{clm:sumdualvar_vertex_weighted} and $(b)$ is because $f_j(x) \leq \frac{1-R}{1-x}$.

    On the other hand, if $(i,j) \in E_2$, then 
    $$\alpha_i + \beta_j
        \stackrel{(a)}{\geq} w_j(1-x_j + x_jf_j(x_j)) 
        \stackrel{(b)}{\geq} R\,w_{j}$$
    where $(a)$ is by \Cref{clm:sumdualvar_vertex_weighted} and $(b)$ is because $f_j(x) \geq 1 - \frac{1-R}{x}$. \qed
\end{proof}

\begin{claim}
    \label{clm:consistency_vertex_weighted}
    If we run Algorithm \ref{alg:vertex_weighted} with continuous, increasing functions $f_j: [0,1] \to [0,1]$ with
    \begin{itemize}
        \item $f_j(x) \leq \frac{1-C}{1-x}$ if  $j$ is matched by the advice, and
        \item $f_j(x) \geq 1 - \frac{1-C}{x}$ if $j$ is not matched by the advice,
    \end{itemize} then the algorithm is $C$-consistent . 
\end{claim}
\begin{proof}{\emph{Proof.}} 
By our discussion in \Cref{subsec:worst_case_vertex_weighted}, the worst case for consistency is when the second-stage graph is a matching (call it $M_2$), consisting of exactly the edges selected by $\ADVICE(G, A)$ in the second stage. Let $M_1$ denote the first-stage suggested matching. Then
$$\ADVICE = \sum_{(i,j) \in M_1} w_j + \sum_{(i,j) \in M_2} w_j.$$
On the other hand, using \Cref{clm:pd_equal_vertex_weighted}, we have
\begin{align*}
\ALG &\geq \sum_{(i,j) \in M_1} (\alpha_i + \beta_j) + \sum_{(i,j) \in M_2} (\alpha_i + \beta_j).
\end{align*}
Thus, to show $\ALG \geq C\cdot \ADVICE$, it suffices to show 
\begin{equation}
\label{eq:sumdualvar_consistency_vertex_weighted}
    \alpha_i + \beta_j \geq C\cdot w_j ~\text{for all $(i,j) \in M_1 \cup M_2$.}
\end{equation}
We first show \eqref{eq:sumdualvar_consistency_vertex_weighted} for $(i,j) \in M_1$. For $(i,j) \in M_1$, we have
\begin{align*}
    \alpha_i + \beta_j \geq w_j(1-f_j(x_j) + x_jf_j(x_j)) \geq C \cdot w_j, 
\end{align*}
where the first inequality is by \Cref{clm:sumdualvar_vertex_weighted} and the second inequality is because $f_j(x) \leq \frac{1-C}{1-x}$ for vertices $j$ suggested by the advice.

On the other hand, for $(i,j) \in M_2$ we have
$$\alpha_i + \beta_j \geq w_j(1-x_j + x_jf_j(x_j)) \geq C \cdot w_j,$$
where the first inequality is by \Cref{clm:sumdualvar_vertex_weighted} and the second inequality is because $f_j(x) \geq 1 -\frac{1-C}{x}$ for vertices $j$ not suggested by the advice. 
\qed
\end{proof}

\section{Adwords and Fractional Advice}
\label{sec:alg_adwords}

\begin{algorithm}[t]
\caption{Two-Stage Fractional Adwords with Advice}
\label{alg:main}
\DontPrintSemicolon
\KwIn{Suggested allocation $\mathbf{a}$ in the first-stage graph and desired robustness level $R$.}
\begin{algorithmic}[1]
\STATE (Define penalty functions) For each $j \in S$, define $f_j: [0,1] \to [0,1]$ as follows:\footnotemark 
$$\text{If $a_j \leq 0.5$,} \quad f_j(s) = 
\begin{cases}
    \min\left\{1, \frac{1-R}{1-s}, \frac{a_j(1-C)}{a_j - s} \right\}&\text{if $s < a_j$,} \\
    \min\left\{1, \frac{1-R}{1-s} \right\} &\text{if $s \geq  a_j$.}
\end{cases}
$$
$$\text{If $a_j > 0.5$,} \quad f_j(s) = 
\begin{cases}
    \max\left\{0, 1 - \frac{1-R}{s}\right\}&\text{if $s \leq a_j$,} \\
    \max\left\{0, 1-\frac{1-R}{s}, 1 - \frac{1-C}{s-a_j} \right\} &\text{if $s >  a_j$.}
\end{cases}
$$
\STATE (First stage) 
Solve the following optimization problem for the first-stage allocation $\vx$:
\begin{align*}
    (P_1) \quad \max \quad &\sum_{j\in S}B_j\left(x_j-\int_0^{x_j} f_j(t) dt\right) \\
    \mathrm{s.t.} \quad  x_i~:=~&\sum_{j \in S} x_{ij}\leq 1 &  \forall \; i \in D_1 \\
    x_j~:=~&\frac{1}{B_j}\sum_{i\in D_1} b_{ij}x_{ij} \leq 1 &  \forall \; j \in S \\
    &x_{ij} \geq 0 & \forall \; i\in D_1,\,(i,j) \in E
\end{align*}



\STATE (Second stage) When the second-stage vertices $D_2$ arrive, solve for the optimal allocation $\yv$ subject to the capacities already taken by $\vx$:
\begin{align*}
    (P_2) \quad \max \quad &\sum_{j \in S} B_jy_j \\
    \mathrm{s.t.\ } y_i~:=~&\sum_{j \in S} y_{ij} \leq 1 &  \forall \; i \in D_2 \\
    y_j~:=~&\frac{1}{B_j}\sum_{i \in D_2} b_{ij}y_{ij} \leq 1 - {x}_j&  \forall \; j \in S \\
    &y_{ij} \geq 0 & \forall \; i \in D_2, \, (i,j) \in E
\end{align*}

\STATE {Return} $\vx + \yv$. 
\end{algorithmic}
\end{algorithm}

\footnotetext{As it will turn out, there are many choices for these penalty functions that will all give the optimal robustness-consistency tradeoff; see \Cref{thm:main_adwords}. Here we make a particular choice for these penalty functions in order to have a well-defined algorithm.}

In the previous sections, we described our algorithm for two-stage vertex-weighted bipartite matching {with integral advice} and analyzed its robustness-consistency tradeoff. We now show that the same ideas can be extended to the setting of two-stage Adwords, and when the advice can be fractional. Our algorithm for two-stage Adwords is described in Algorithm~\ref{alg:main}. At this point, it may be helpful to recall the notation we use in the Adwords setting, which is described in \Cref{subsec:notation_adwords}.

Much of the intuition behind Algorithm \ref{alg:main} is the same as in the vertex-weighted setting. We associate an advice-dependent penalty function to each offline vertex. These penalty functions are lower for vertices that are recommended by the advice and higher otherwise, which incentivizes the algorithm to to prioritize following the advice. By varying how "extreme" these penalty functions are (where least extreme corresponds to the fully robust algorithm and most extreme corresponds to the fully consistent algorithm), we obtain the Pareto-optimal tradeoff curve between robustness and consistency. The main difference now is that the advice can be fractional, so it no longer suffices to only consider two possible penalty functions $\fl$ and $\fu$. Specifically, the penalty function $f_j$ will now depend on $a_j$, the fraction that $j$ is filled under the advice. If $a_j$ is higher, then $f_j$ will be lower, and vice-versa. For any desired robustness level $R$, our analysis characterizes the envelope of penalty functions which guarantee the algorithm to be $R$-robust. Similarly, for any desired consistency level $C$, we characterize the envelope of penalty functions which guarantee the algorithm to be $C$-consistent. Finally, we show the intersection of these two envelopes is non-empty if $\sqrt{1-R} + \sqrt{1-C} \geq 1$,\footnote{This tradeoff is tight; see \Cref{sec:tight} for a lower bound.} which will give our main result. 

\begin{restatable}{theorem}{main}
\label{thm:main_adwords}
Let $R \in [0, \frac{3}{4}]$ and let $C \in [0,1]$. For each $j \in S$, let $a_j \in [0,1]$ be the fraction that $j$ is filled by the advice. Let $f_j$ denote the penalty functions used in Algorithm \ref{alg:main}, where $f_j: [0,1] \to [0,1]$ is continuous and increasing. Then, the following hold.
\begin{enumerate}
    \item (Robustness) Suppose for all $j \in S$, we have $1 - \frac{1-R}{x} \leq f_j(x) \leq \frac{1-R}{1-x}$ for all $x \in (0, 1)$. Then the algorithm is $R$-robust.
    \item (Consistency) Suppose for all $j \in S$, we have $f_j(x) \leq \frac{a_j(1-C)}{a_j-x}$ for all $x \in [0, a_j)$, and $f_j(x) \geq 1 - \frac{1-C}{x-a_j}$ for all $x \in (a_j, 1]$. Then the algorithm is $C$-consistent.
\end{enumerate}
Moreover, if $\sqrt{1-R} + \sqrt{1-C} \geq 1$, there exist penalty functions $f_j$ that simultaneously satisfy both conditions above.
\end{restatable}

The proof of \Cref{thm:main_adwords} is in \Cref{sec:adwords_proofs}. The first condition in the theorem describes the envelope of  penalty functions which guarantee $R$-robustness, and the second condition describes the envelope of  penalty functions which guarantee $C$-consistency. Intuitively, the robustness condition says that to guarantee robustness, the penalty functions cannot be too extreme. In the consistency condition, the upper bound on $f_j(x)$ for $x < a_j$ says that one should not overly penalize allocating to $j$ if $x_j < a_j$. Similarly, the lower bound on $f_j(x)$ for $x > a_j$ says that one should be penalized at least a certain amount for exceeding the amount that $j$ is filled under the advice. 
Note that the envelope for consistency is node-dependent: it
depends on the fraction $a_j$ that $j$ is filled under the advice, whereas the envelope for robustness does not. This makes sense, because the definition of $R$-robustness is advice-agnostic.

\begin{figure}[t!]
    \centering
        \begin{subfigure}{0.3\textwidth}
    \includegraphics[width=\linewidth]{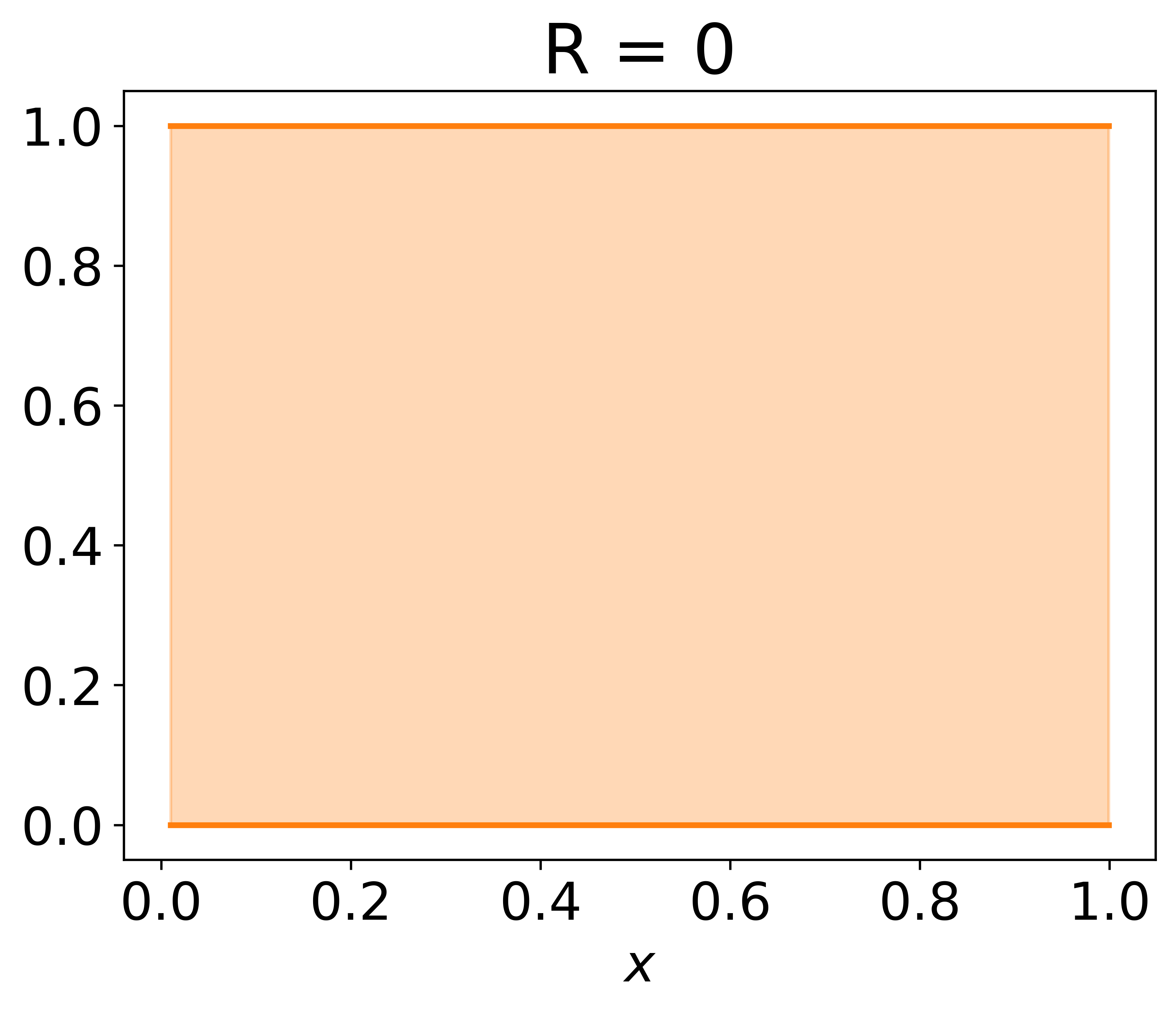}
    \end{subfigure}
    \hfill
    \begin{subfigure}{0.3\textwidth}
        \includegraphics[width=\linewidth]{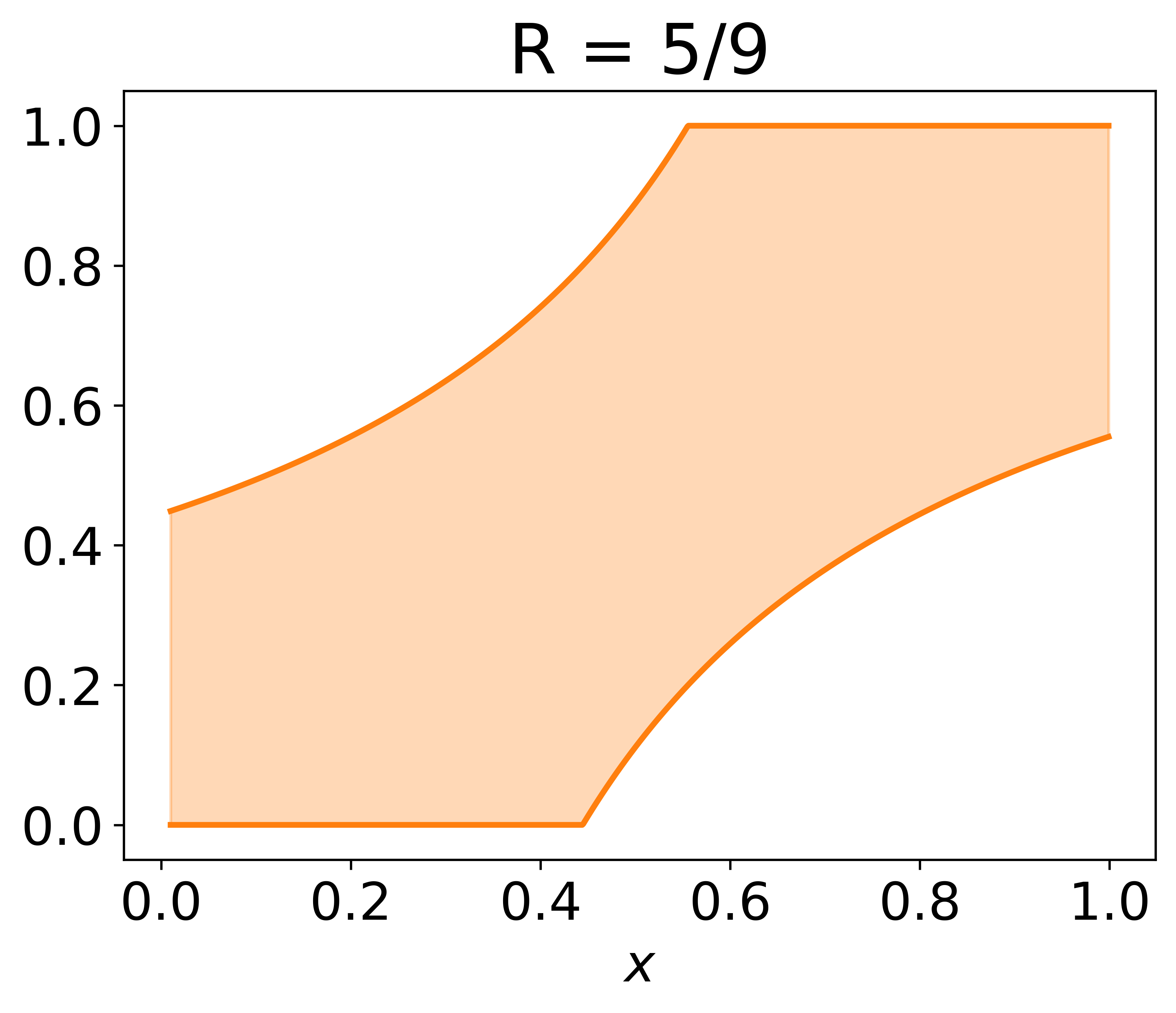}
    \end{subfigure}
    \hfill
    \begin{subfigure}{0.3\textwidth}
        \includegraphics[width=\linewidth]{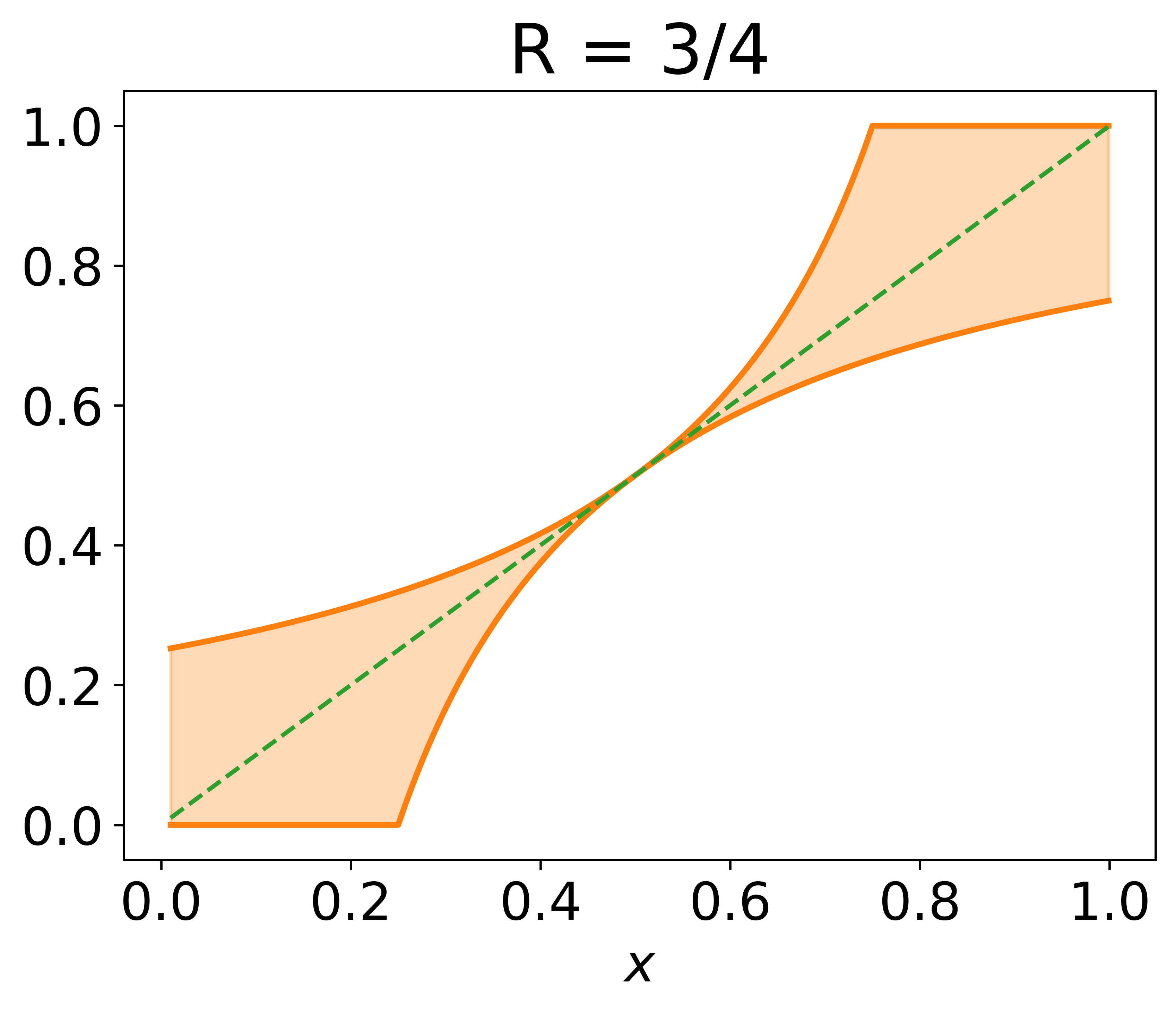}
    \end{subfigure}
    \caption{Plots of the envelope for $R$-robustness for three values of $R$. Running Algorithm \ref{alg:main} with any increasing, continuous penalty functions in the envelope guarantees $R$-robustness. In the right plot, the green dashed line indicates the penalty function used by \citet{feng2021two}.}
    \label{fig:envelope_r}
\end{figure}

\begin{figure}[t!]
    \centering
    \begin{subfigure}{0.3\textwidth}
        \includegraphics[width=\linewidth]{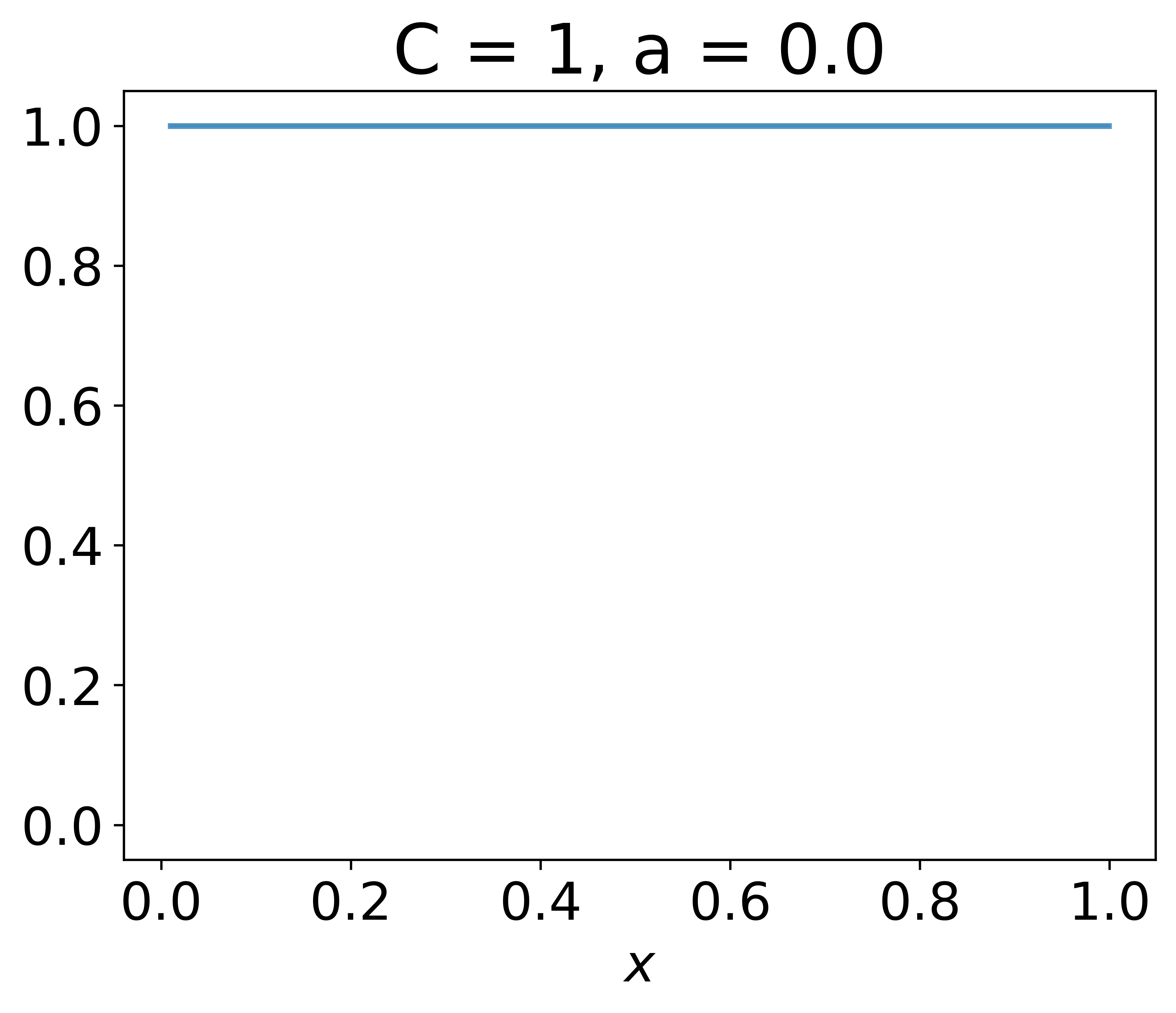} 
        \label{fig:sub1}
    \end{subfigure}
    \hfill
    \begin{subfigure}{0.3\textwidth}
        \includegraphics[width=\linewidth]{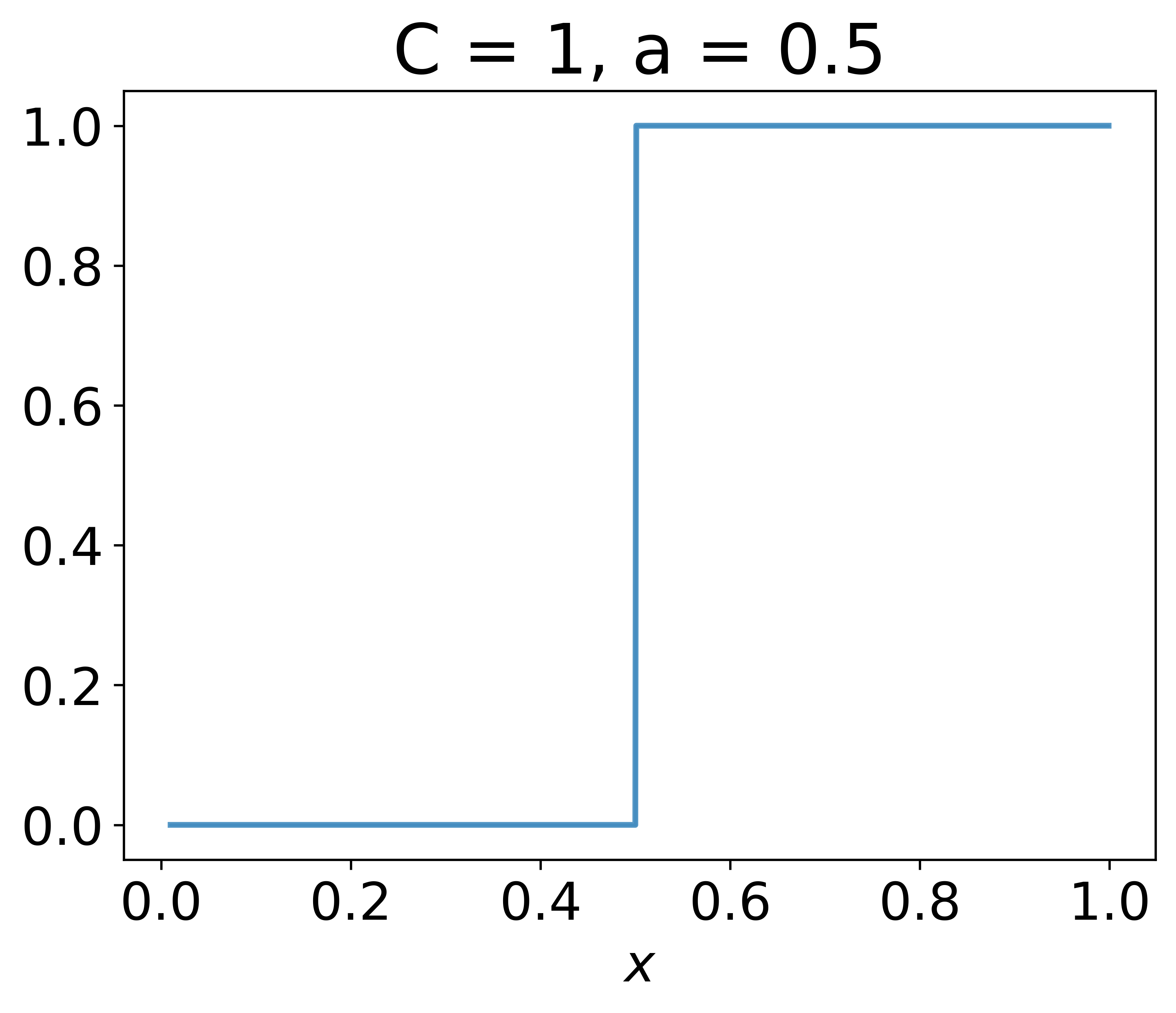}
        \label{fig:sub2}
    \end{subfigure}
    \hfill
    \begin{subfigure}{0.3\textwidth}
        \includegraphics[width=\linewidth]{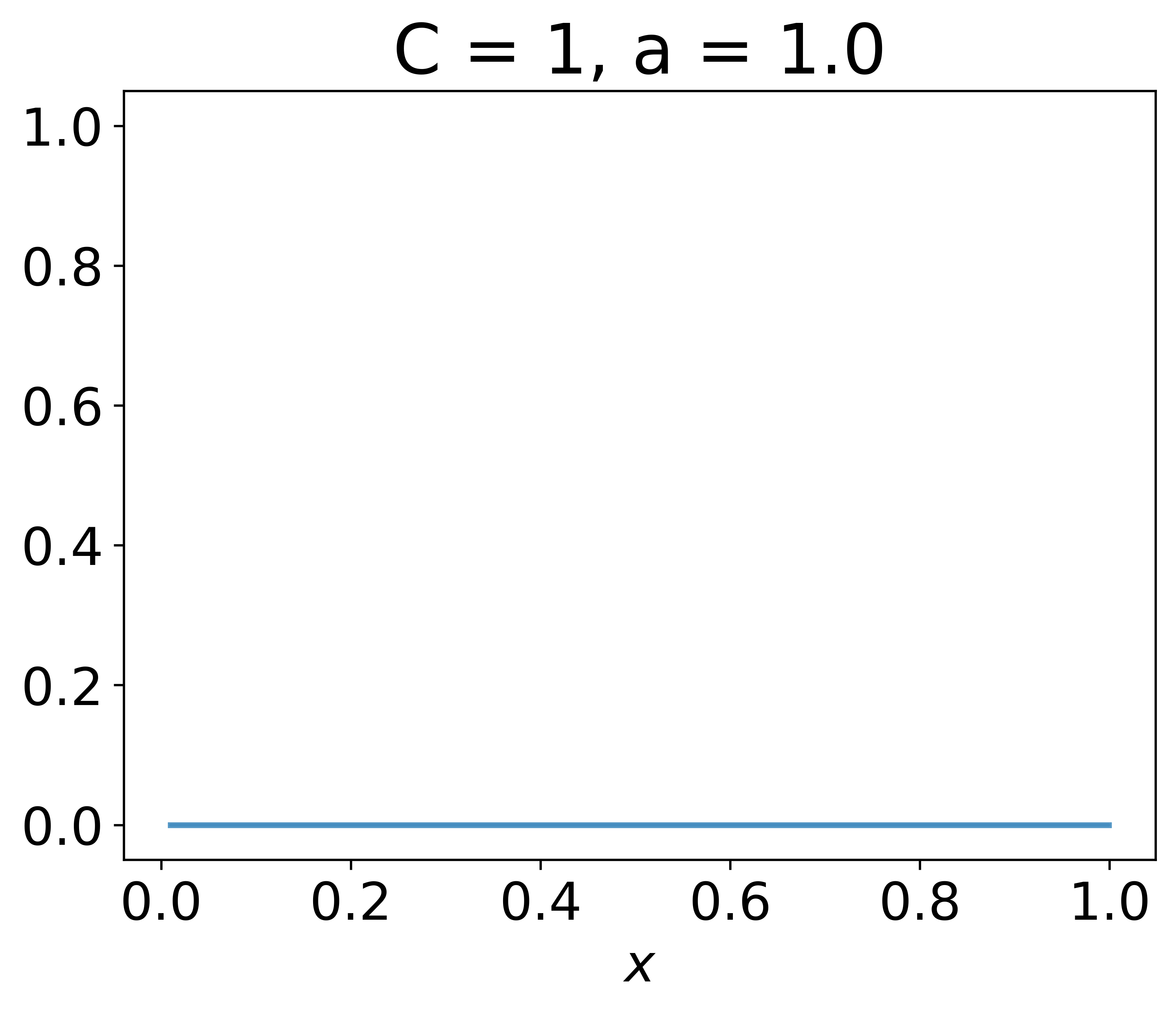}
        \label{fig:sub3}
    \end{subfigure}

    \medskip

    \begin{subfigure}{0.3\textwidth}
        \includegraphics[width=\linewidth]{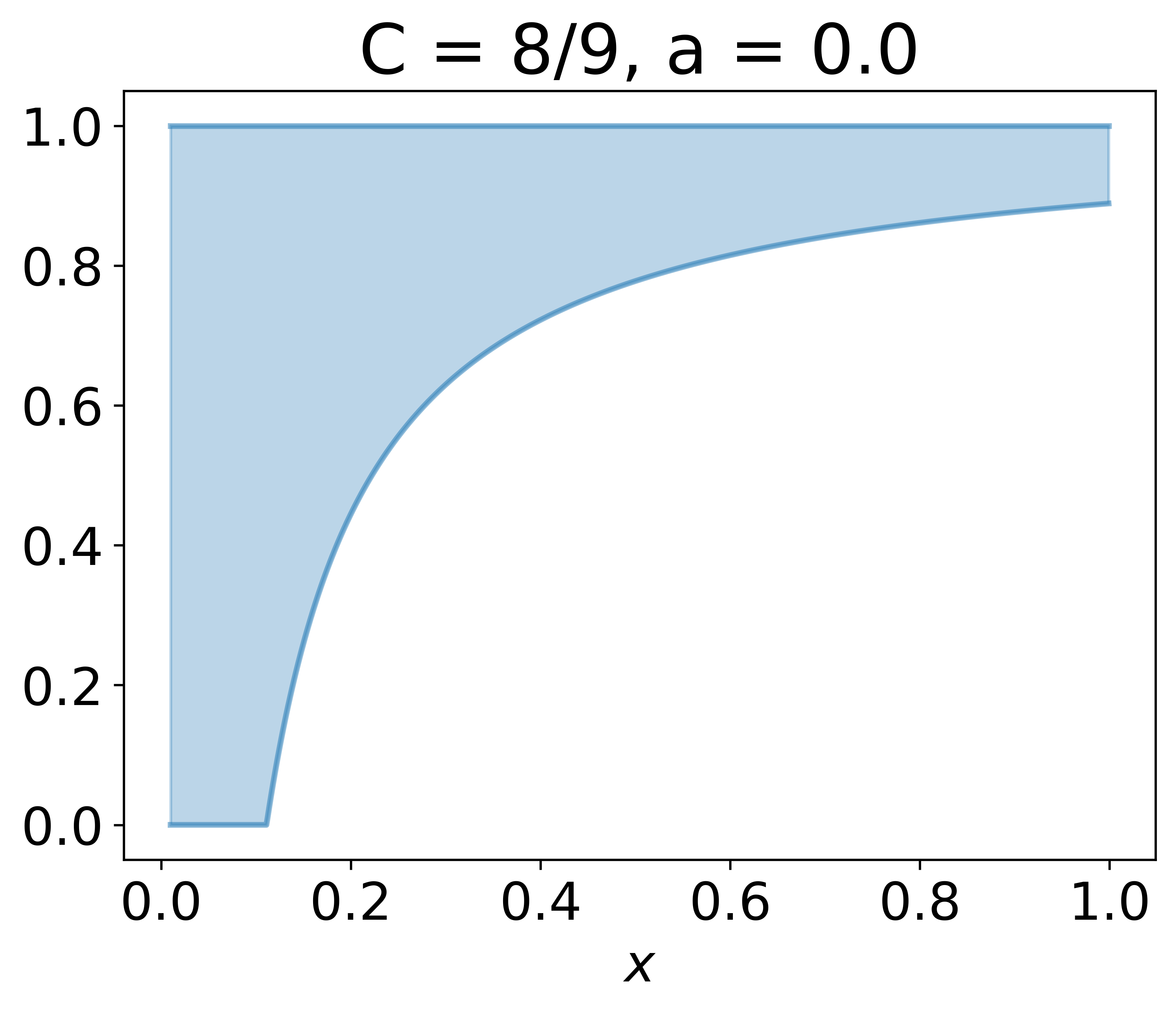}
        \label{fig:sub4}
    \end{subfigure}
    \hfill
    \begin{subfigure}{0.3\textwidth}
        \includegraphics[width=\linewidth]{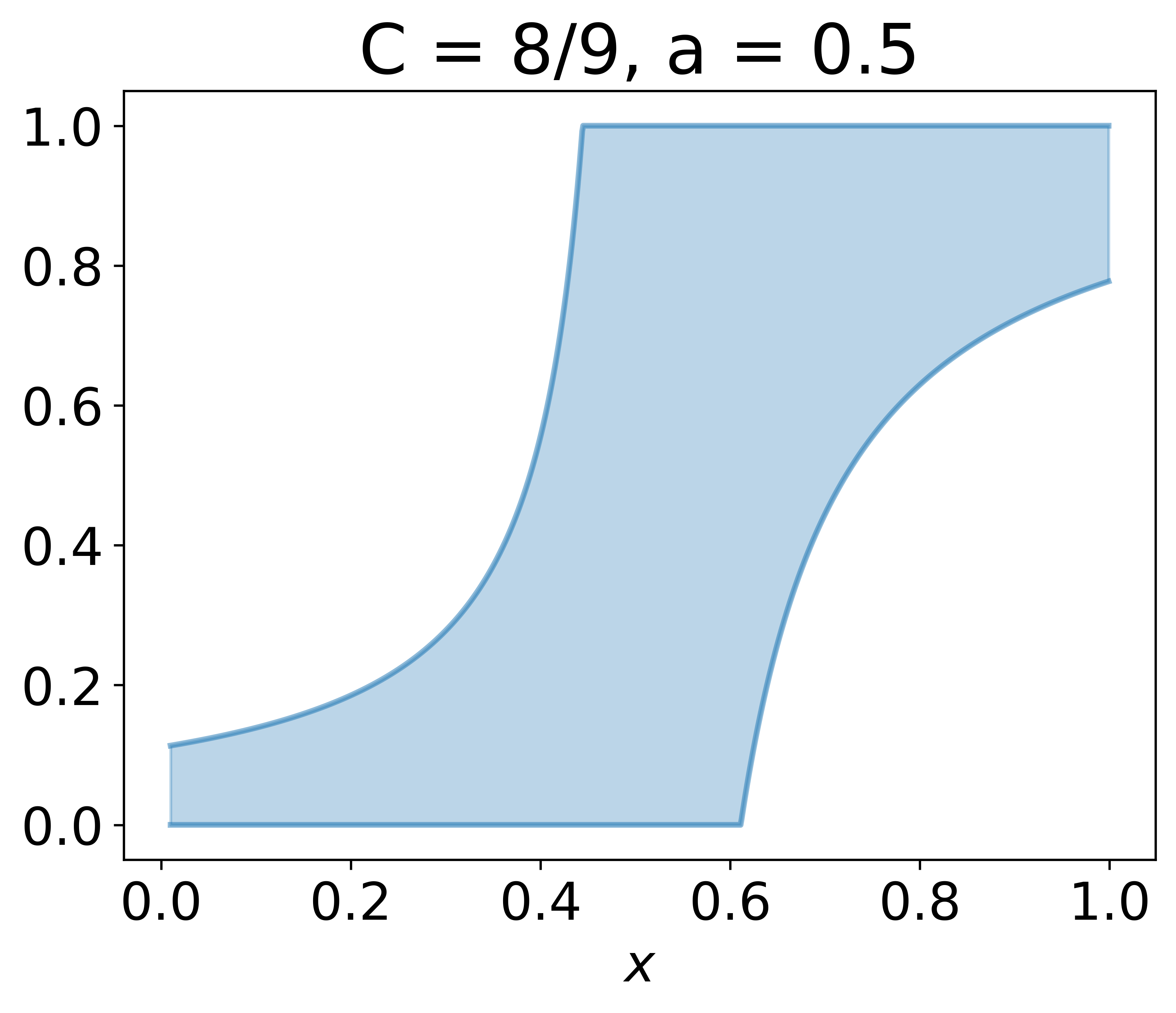}
        \label{fig:sub5}
    \end{subfigure}
    \hfill
    \begin{subfigure}{0.3\textwidth}
        \includegraphics[width=\linewidth]{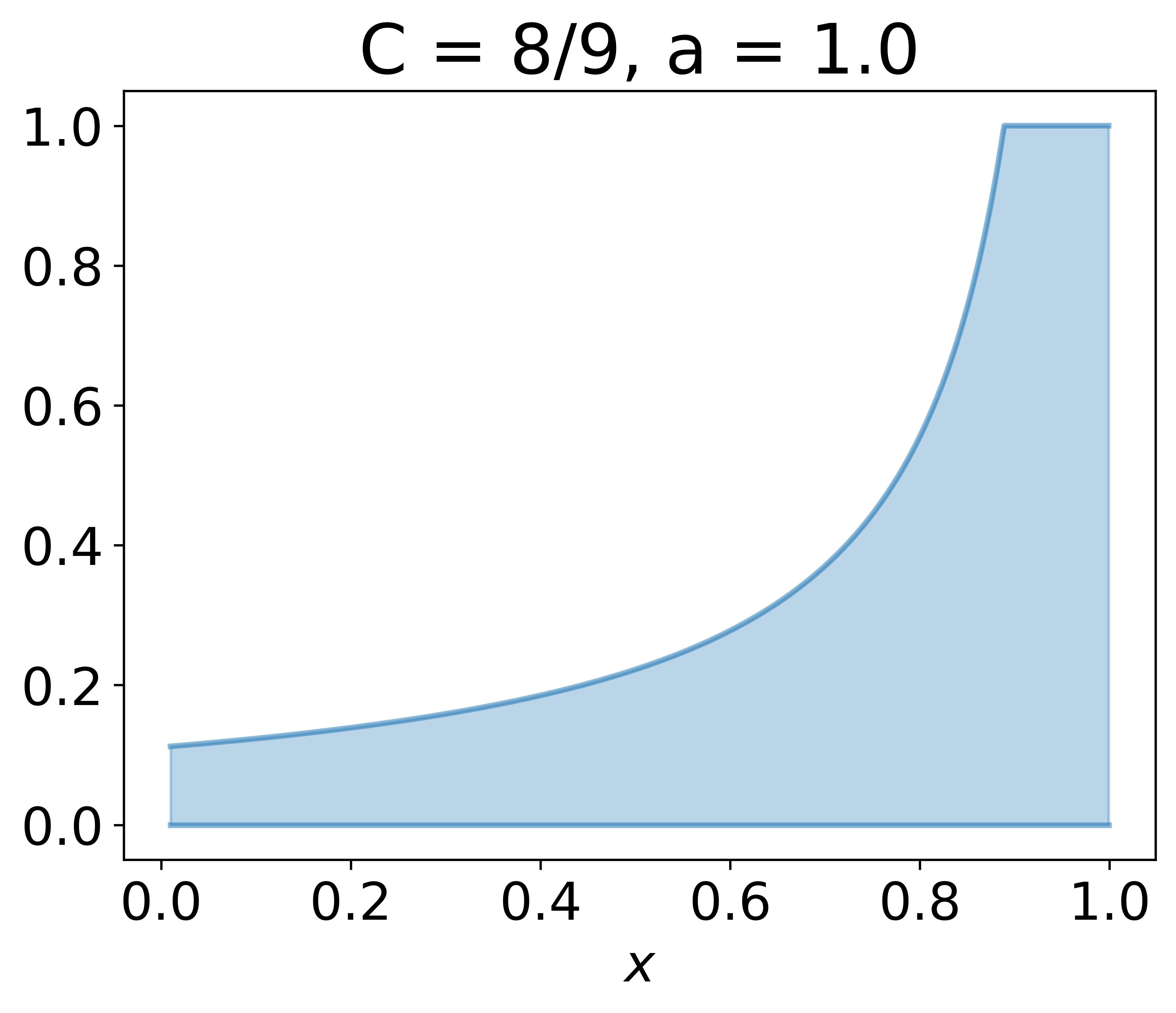}
        \label{fig:sub6}
    \end{subfigure}

    \medskip

    \begin{subfigure}{0.3\textwidth}
        \includegraphics[width=\linewidth]{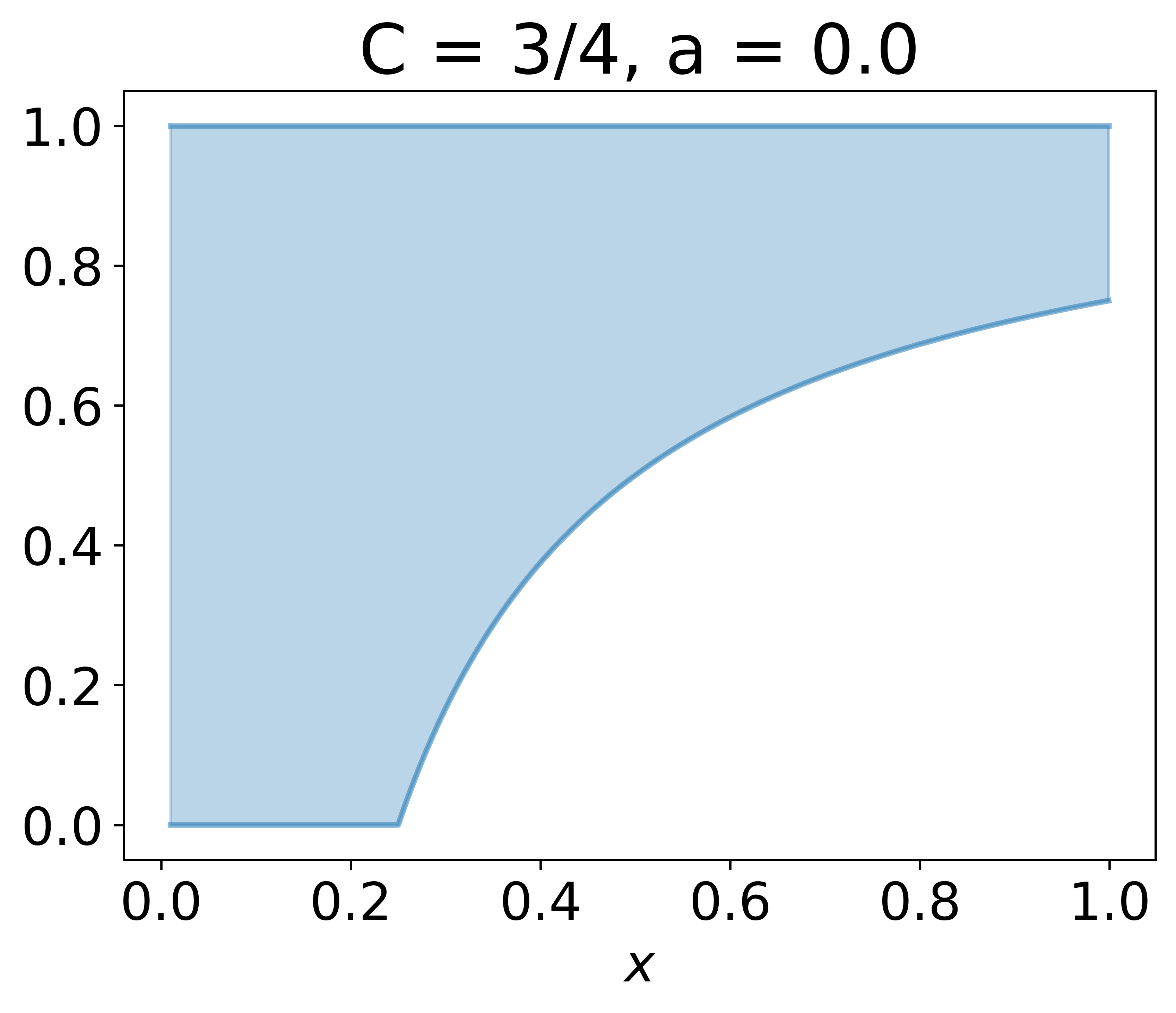}
        \label{fig:sub7}
    \end{subfigure}
    \hfill
    \begin{subfigure}{0.3\textwidth}
        \includegraphics[width=\linewidth]{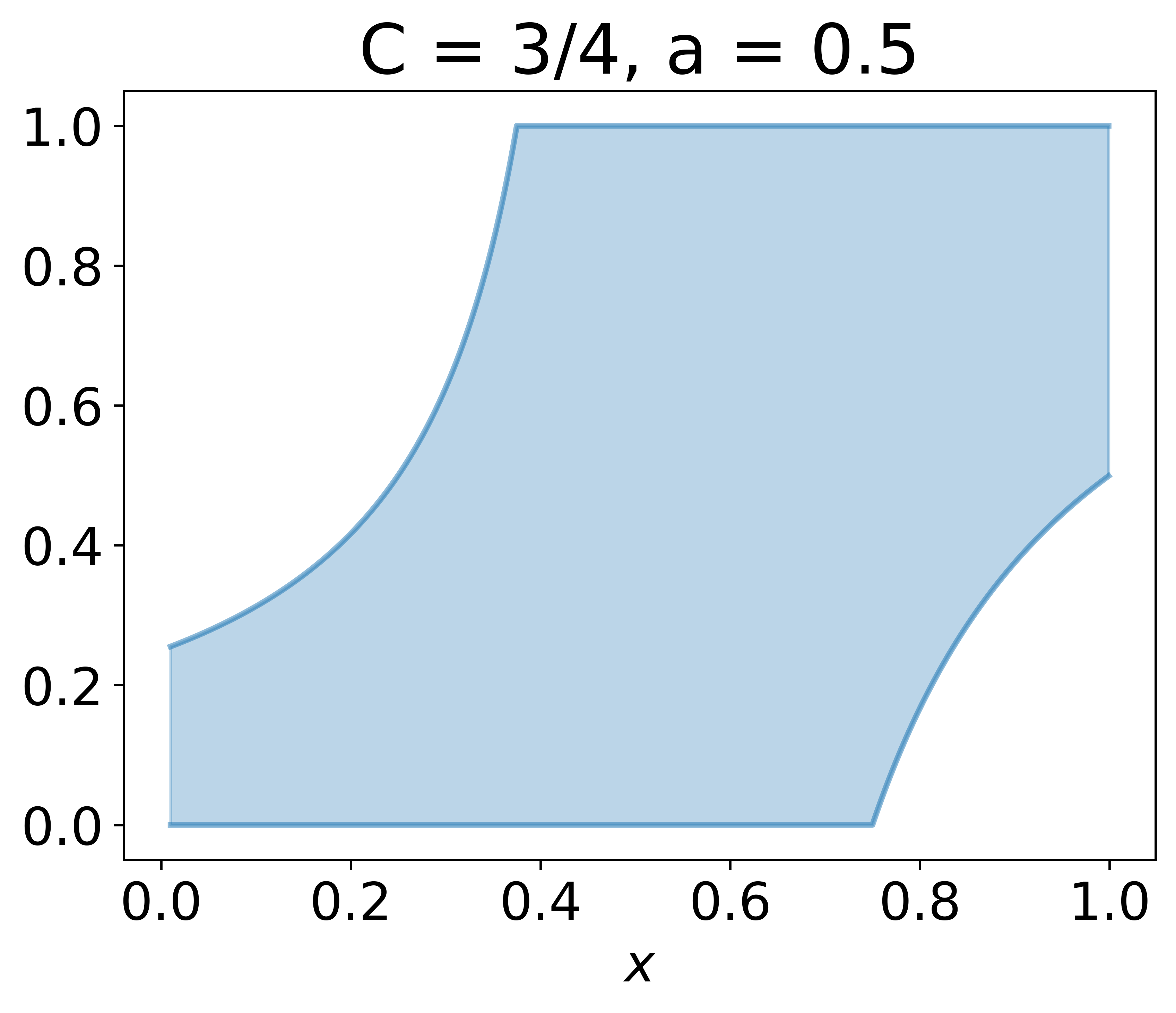}
        \label{fig:sub8}
    \end{subfigure}
    \hfill
    \begin{subfigure}{0.3\textwidth}
        \includegraphics[width=\linewidth]{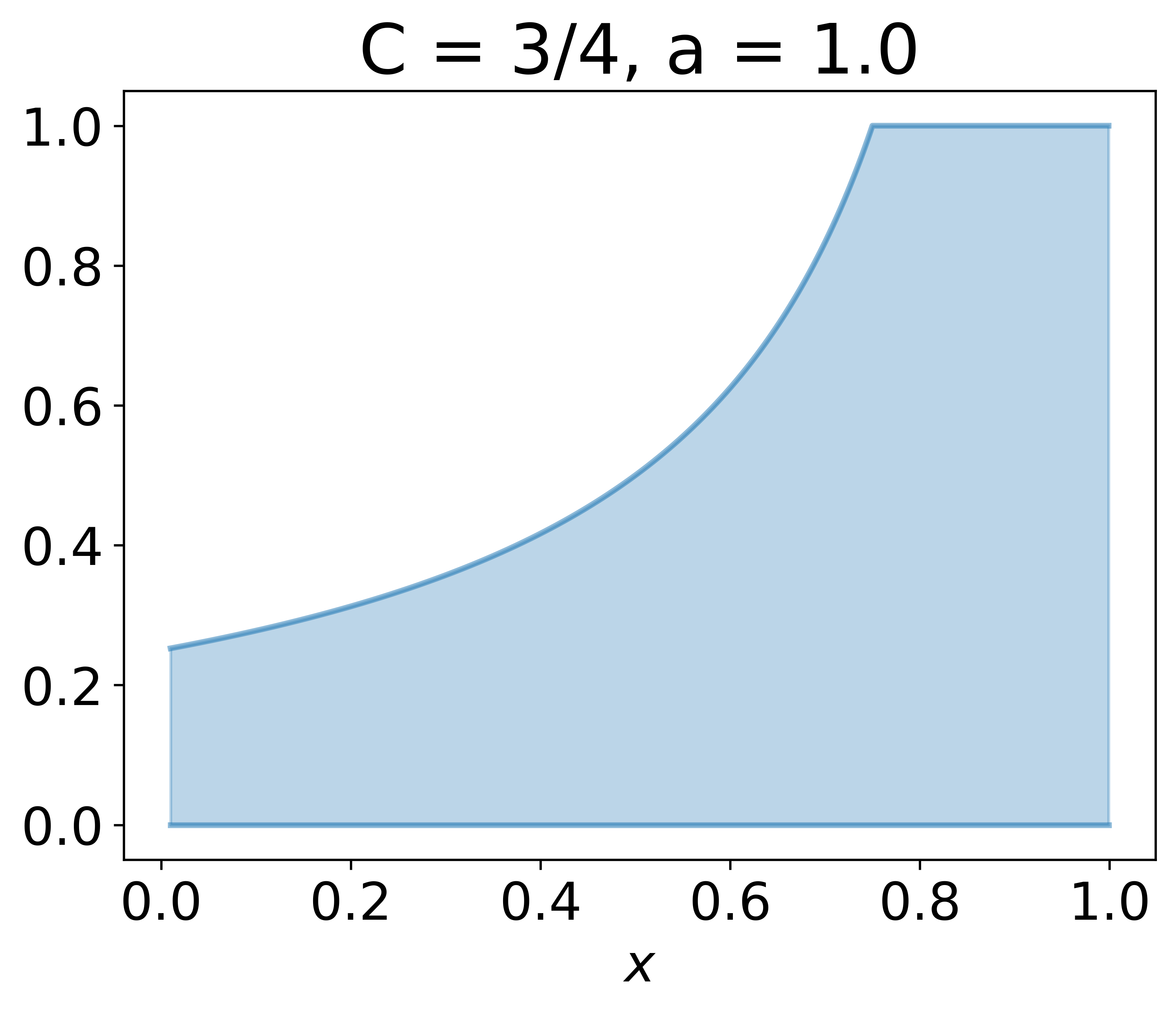}
        \label{fig:sub9}
    \end{subfigure}
    \caption{Plots of the envelope for $C$-consistency for three values of $C$ and three values of $a$, where $a$ is the fraction the offline node is filled under the advice.}
    \label{fig:envelope_c}
\end{figure}

\Cref{fig:envelope_r} plots the envelope for robustness for three values of $R$. The parameter $R\in[0,\frac{3}{4}]$ affects the separation between the lower and upper bounds of the envelope.
If $R=0$, then the lower and upper bounds are the constant functions $0,1$ respectively, so the envelope places no restrictions on the penalty functions. This makes sense, because any algorithm is 0-robust.
On the other extreme, setting $R=\frac{3}{4}$ guarantees the maximum possible robustness.
Surprisingly, even in this case there is a bit of separation between the upper and lower bounds of the envelope, showing that there are many penalty functions which all guarantee the maximum possible robustness.
We plot one intermediate value where $R=\frac{5}{9}$.

Next, \Cref{fig:envelope_c} plots the envelope for consistency for three values of $C$ and three values of $a$, where $a$ is the fraction the offline node is filled by the advice. The three values of $C \in \{\frac34, \frac89, 1\}$ were chosen to satisfy $\sqrt{1-R} + \sqrt{1-C} = 1$, for the three values of $R$ used in \Cref{fig:envelope_r}. The larger the value of $C$, the more restrictive the envelope for $C$-consistency becomes. For the largest possible value of $C = 1$, the envelope consists of a single function, which is the step function that switches from 0 to 1 at $x = a$. Using these step functions causes Algorithm \ref{alg:main} to follow the advice exactly in the first stage, which is indeed 1-consistent.  We overlay the envelopes for robustness and consistency in \Cref{fig:envelope_both}.

\begin{figure}[ht!]
    \centering
    \begin{subfigure}{0.3\textwidth}
        \includegraphics[width=\linewidth]{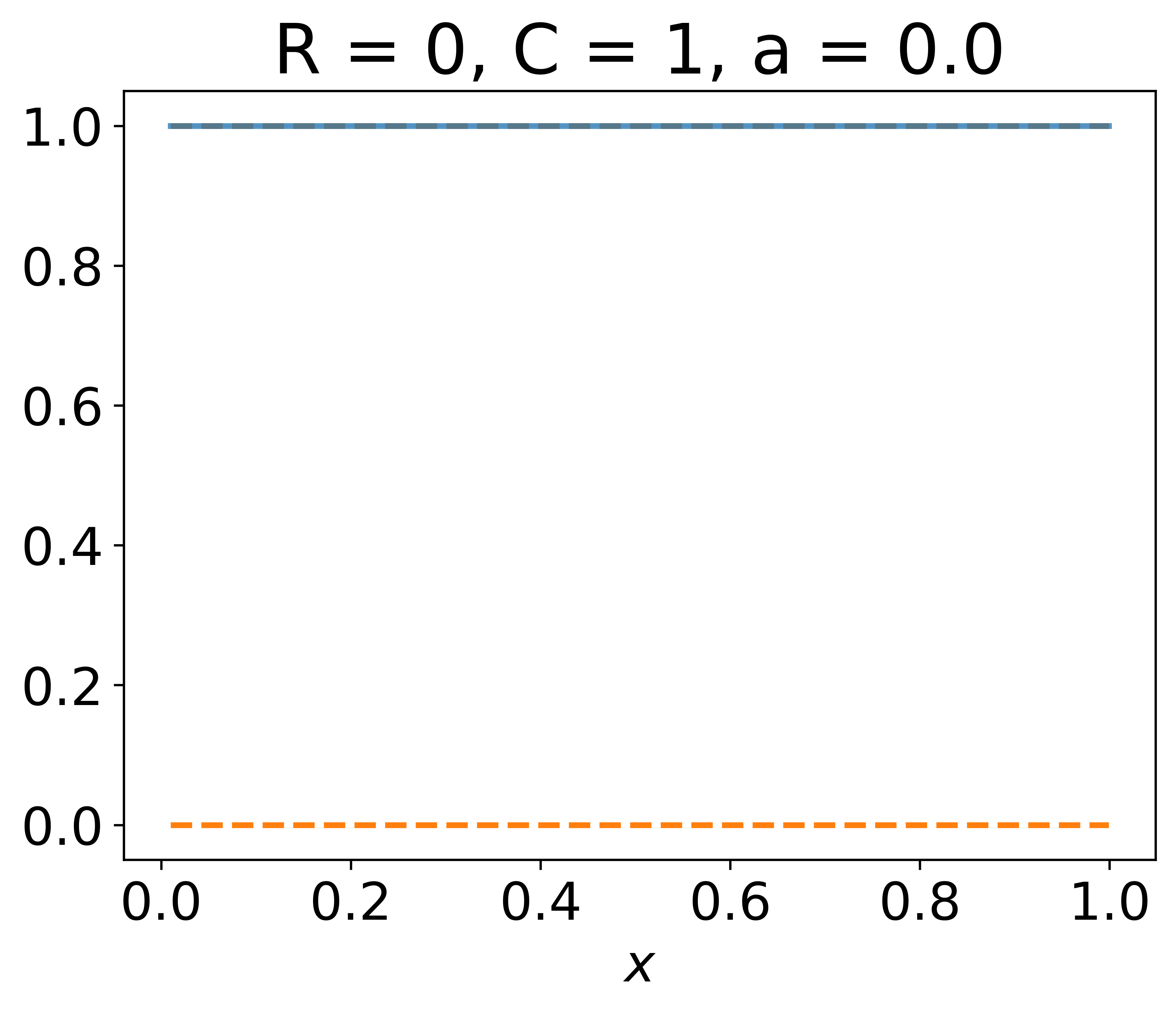} 
    \end{subfigure}
    \hfill
    \begin{subfigure}{0.3\textwidth}
        \includegraphics[width=\linewidth]{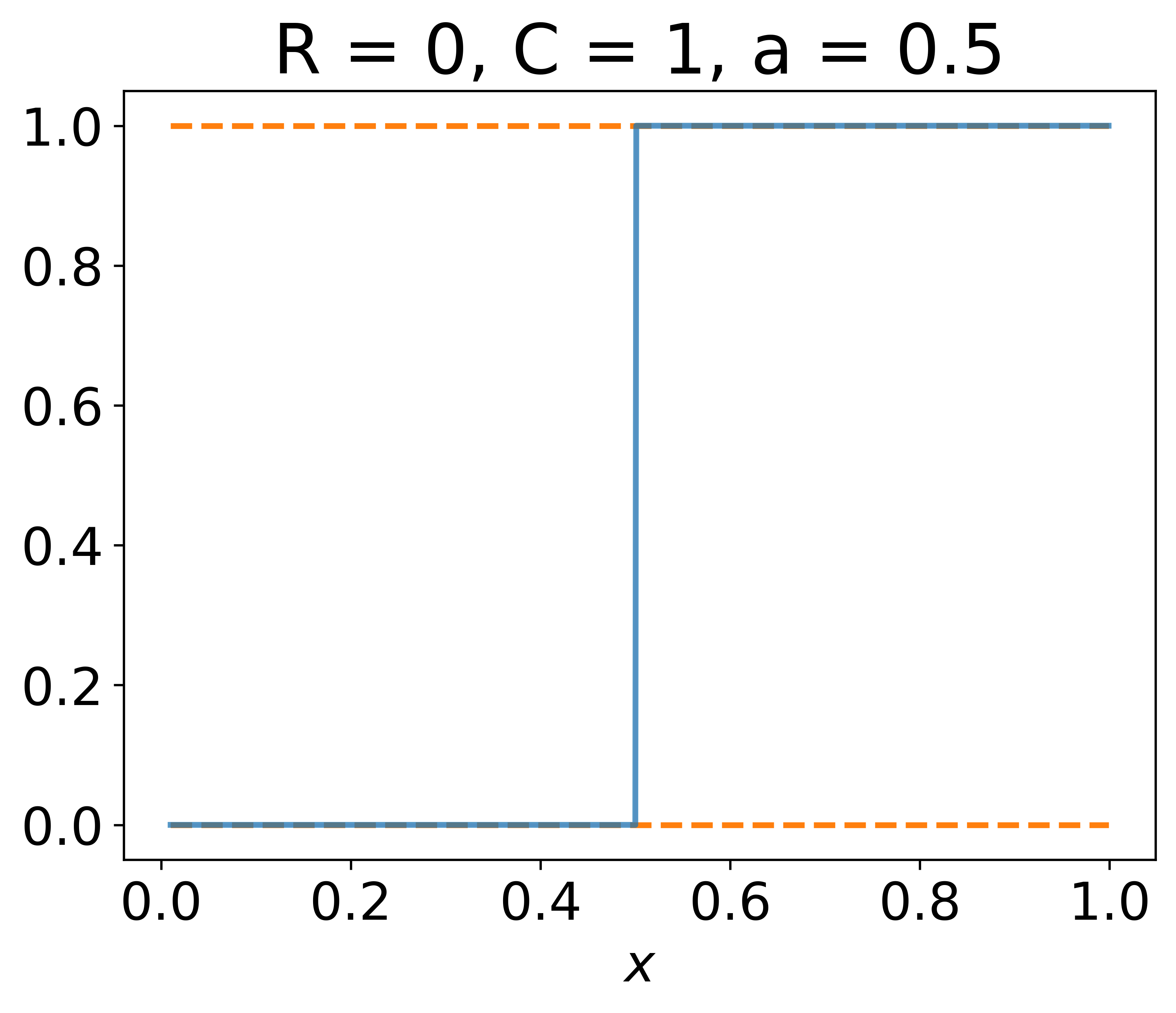}
    \end{subfigure}
    \hfill
    \begin{subfigure}{0.3\textwidth}
        \includegraphics[width=\linewidth]{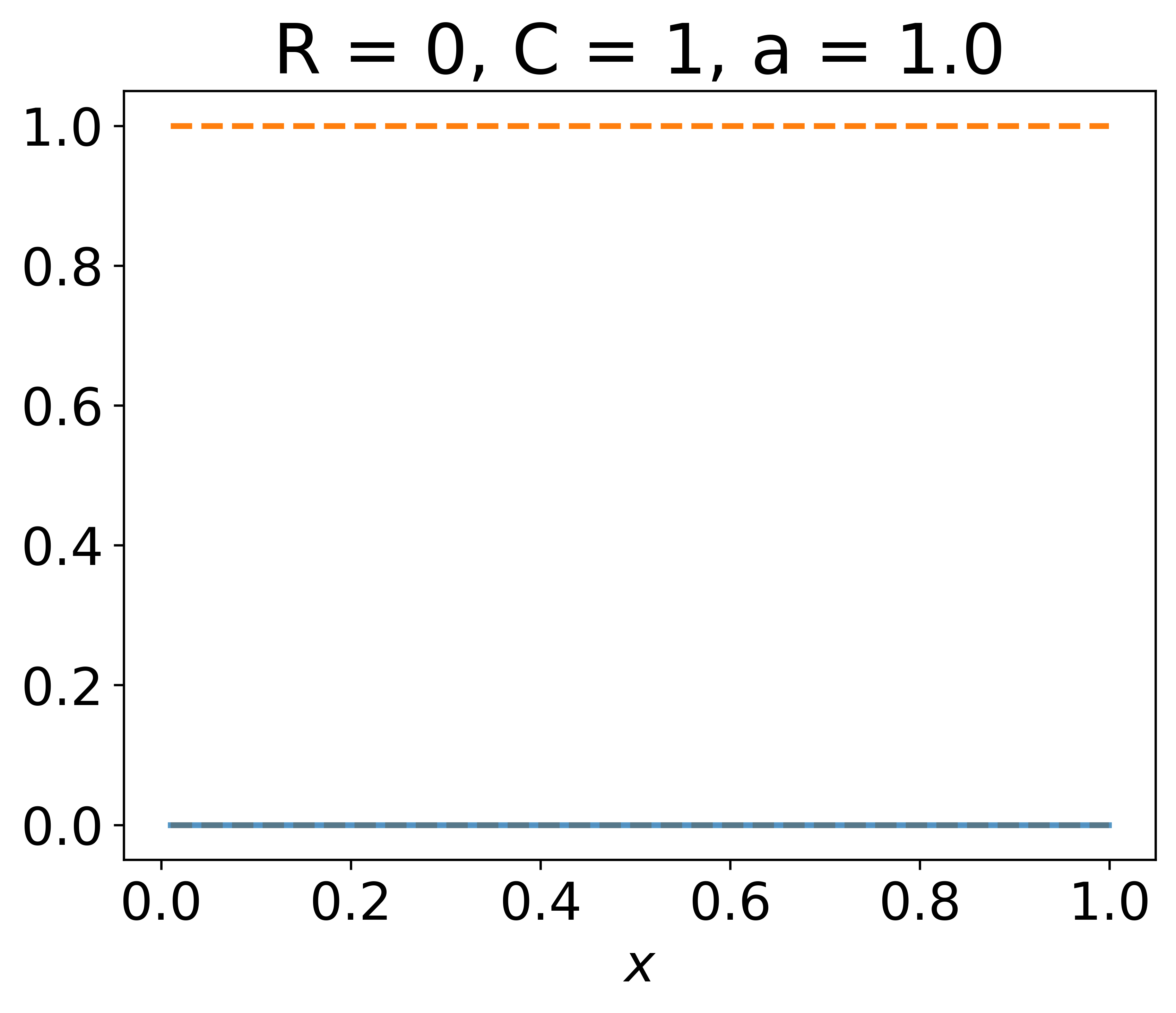}
    \end{subfigure}

    \medskip

    \begin{subfigure}{0.3\textwidth}
        \includegraphics[width=\linewidth]{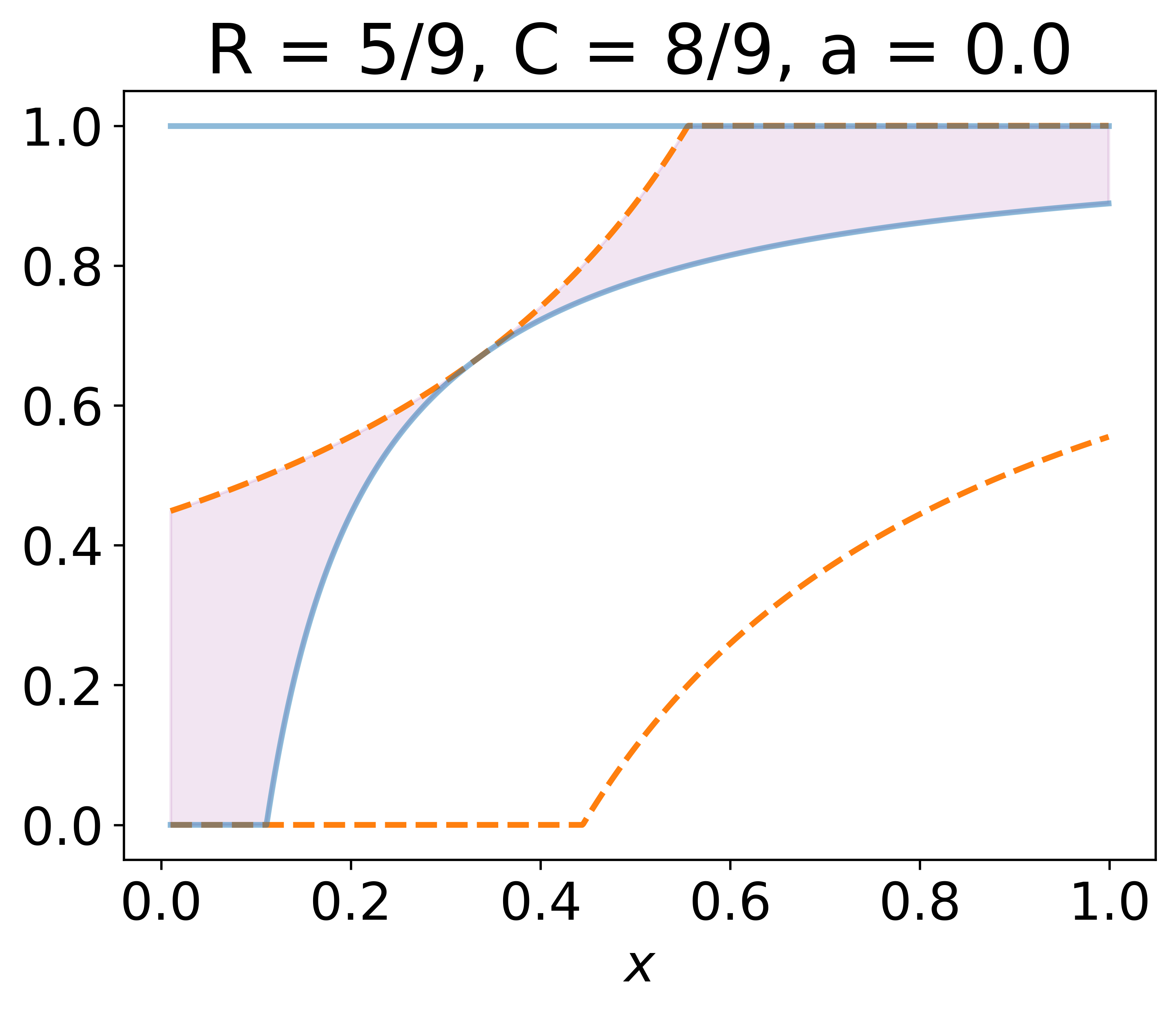}
    \end{subfigure}
    \hfill
    \begin{subfigure}{0.3\textwidth}
        \includegraphics[width=\linewidth]{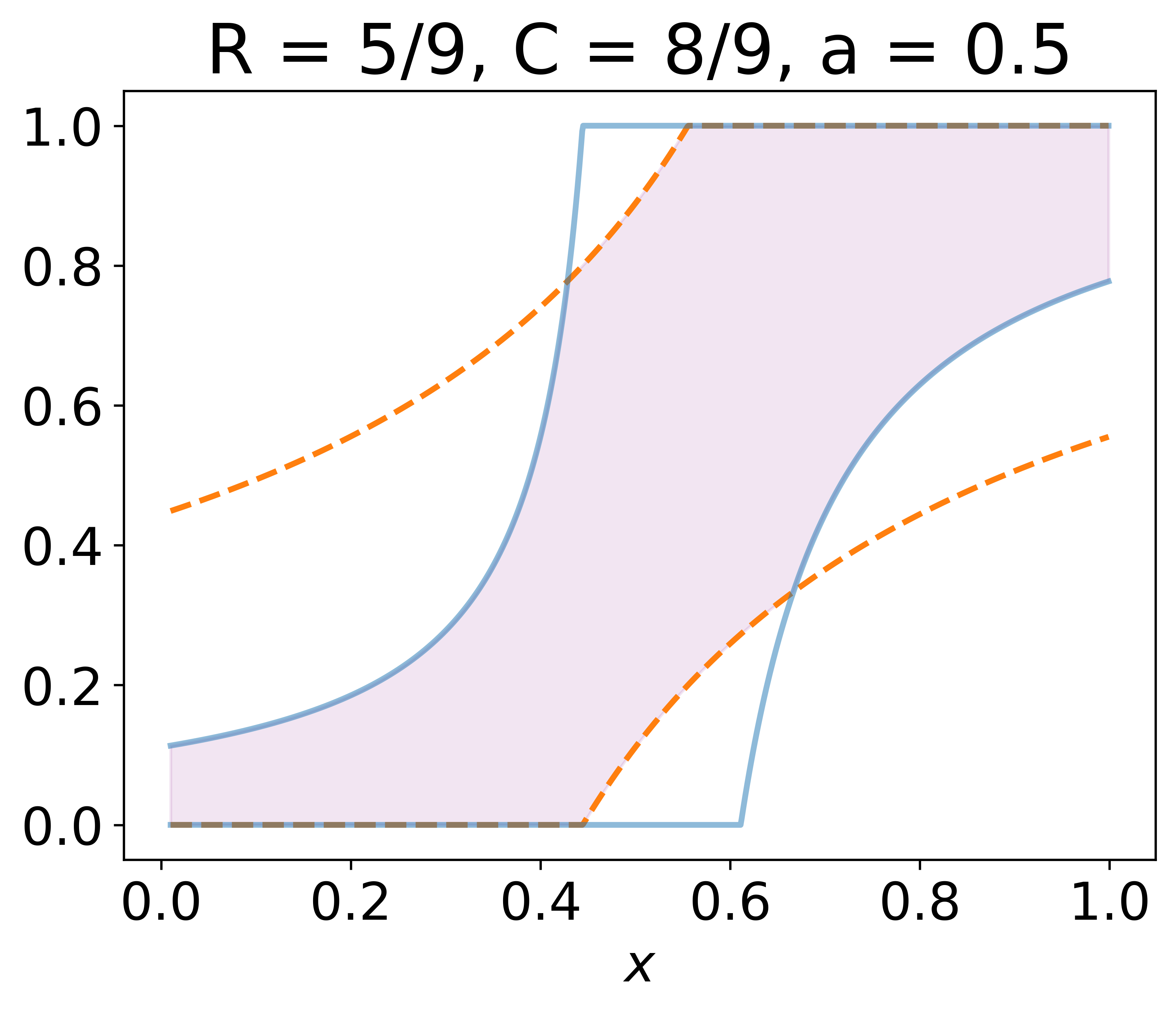}
    \end{subfigure}
    \hfill
    \begin{subfigure}{0.3\textwidth}
        \includegraphics[width=\linewidth]{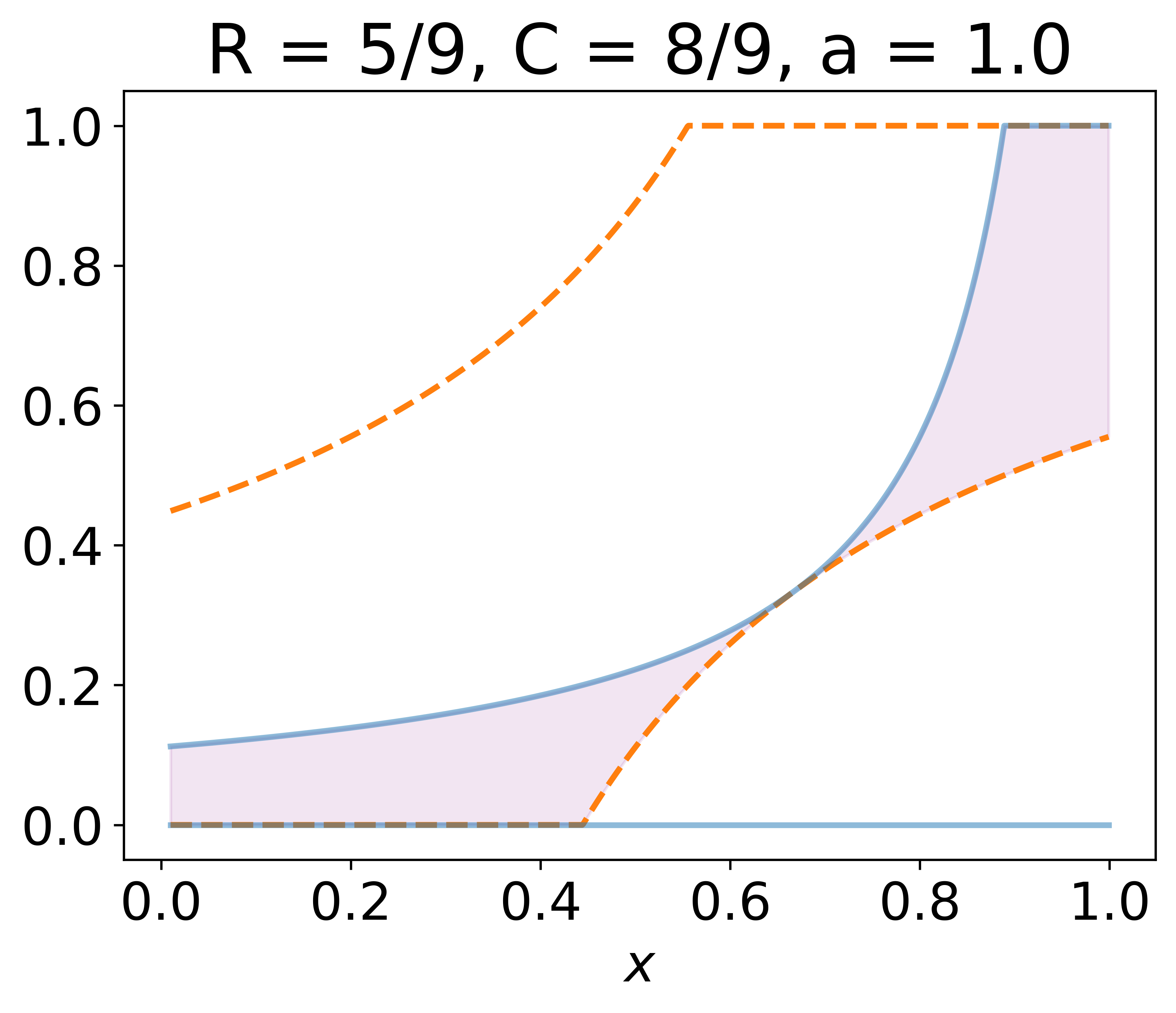}
    \end{subfigure}

    \medskip

    \begin{subfigure}{0.3\textwidth}
        \includegraphics[width=\linewidth]{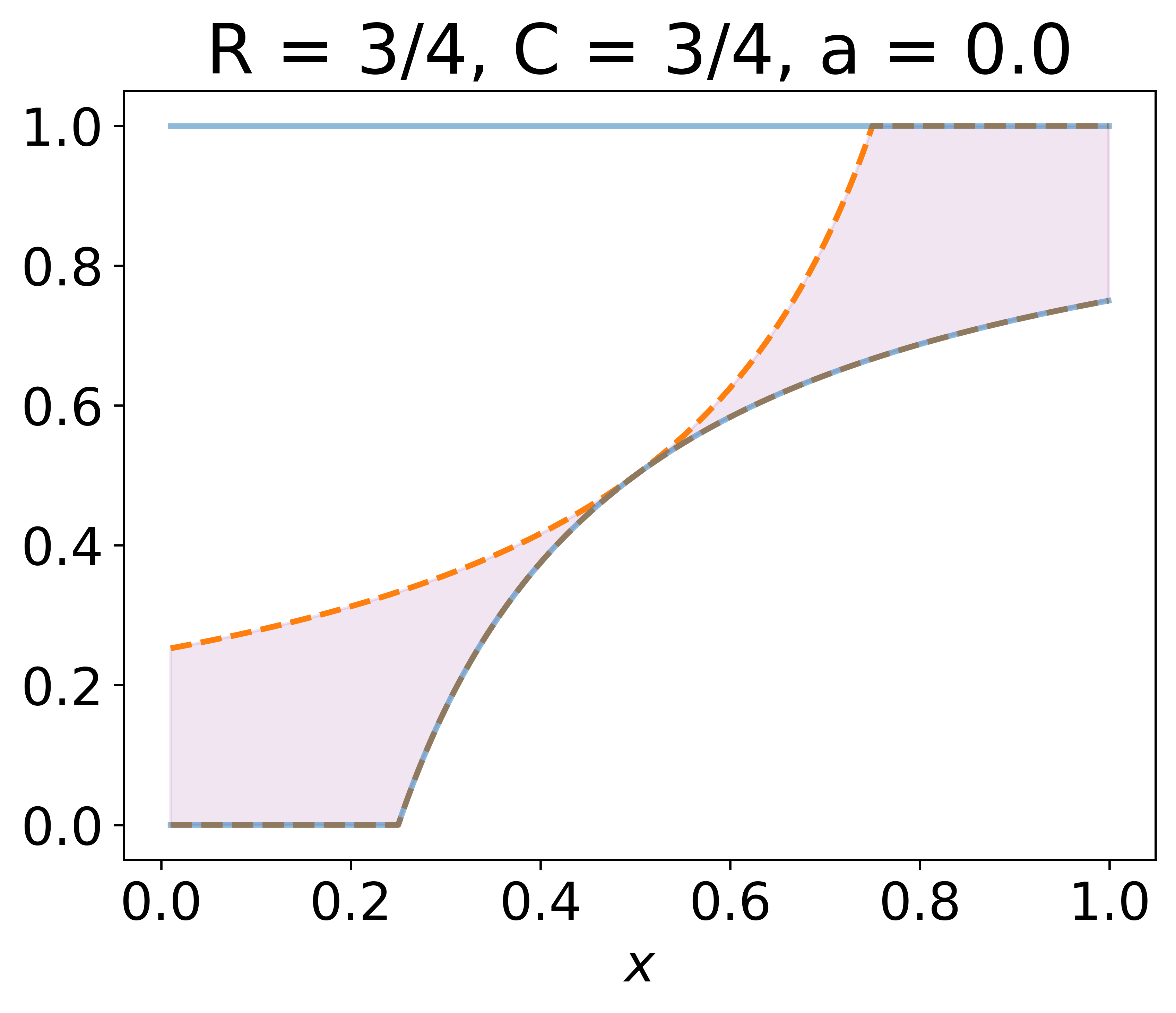}
    \end{subfigure}
    \hfill
    \begin{subfigure}{0.3\textwidth}
        \includegraphics[width=\linewidth]{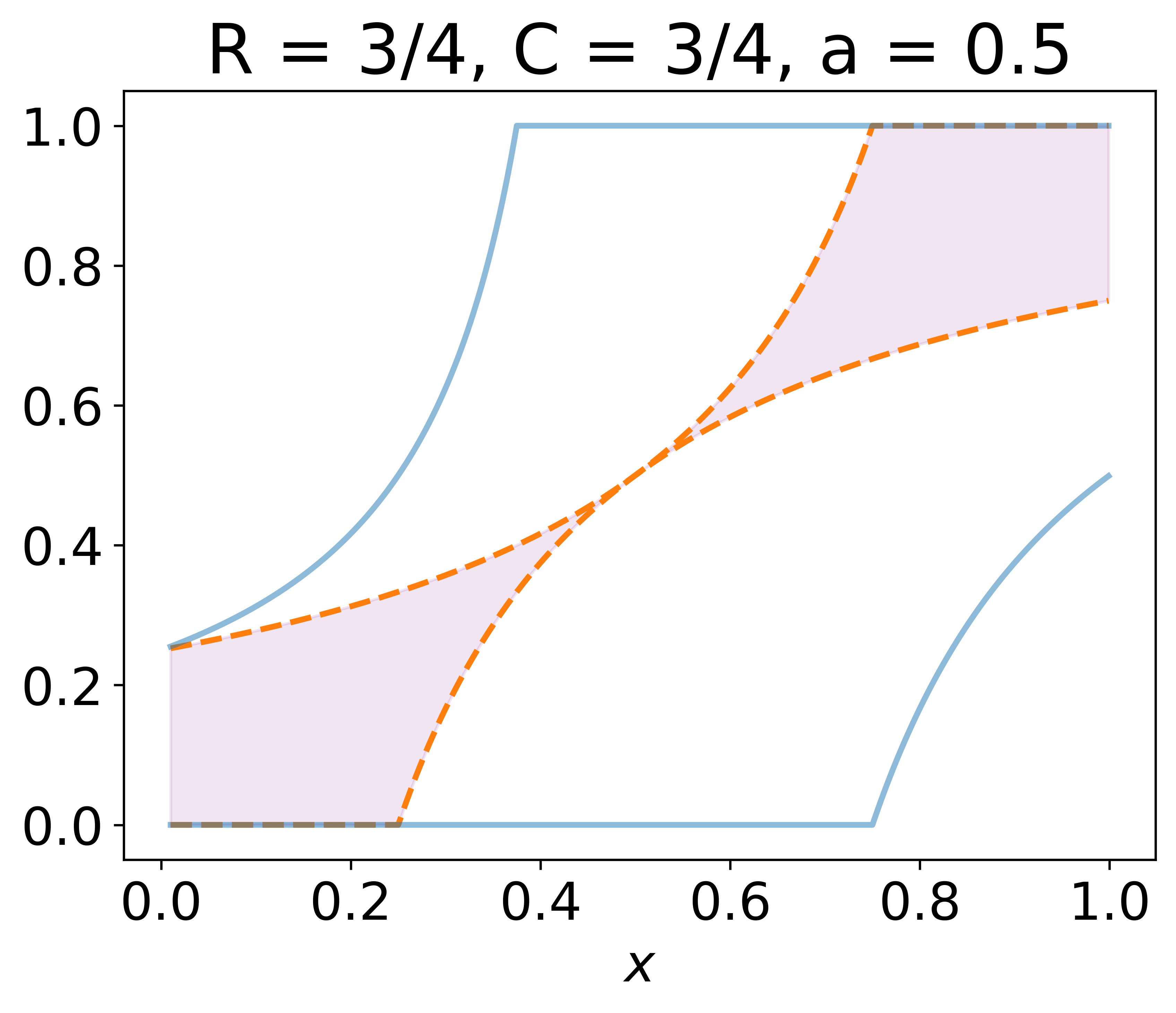}
    \end{subfigure}
    \hfill
    \begin{subfigure}{0.3\textwidth}
        \includegraphics[width=\linewidth]{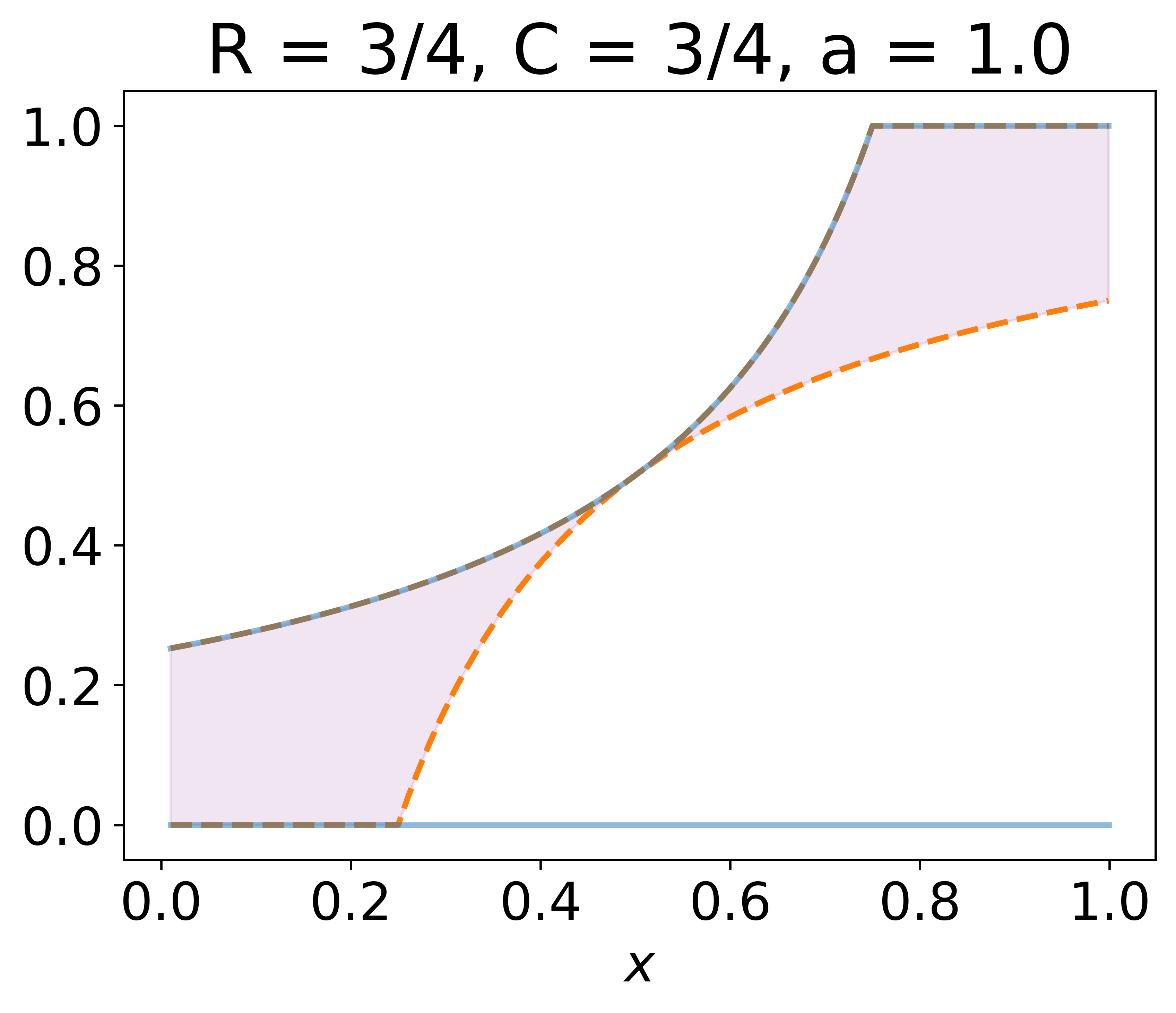}
    \end{subfigure}
    \caption{The envelopes for $R$-robustness and $C$-consistency shown together. The envelopes for robustness are the orange dotted lines, and the envelopes for consistency are the blue solid lines. The intersection of the envelopes is shaded in purple. As long as all penalty functions are continuous, increasing, and lie in the purple region, the algorithm is guaranteed to be $R$-robust and $C$-consistent.}
    \label{fig:envelope_both}
\end{figure}

If one is able to choose penalty functions $f_j$ that lie in both the envelope for $R$-robustness and the envelope for $C$-consistency, then Algorithm \ref{alg:main} using these functions $f_j$ is guaranteed to be $R$-robust and $C$-consistent. Our main result shows this to be possible whenever $R \leq \frac34$ and  $\sqrt{1-R} + \sqrt{1-C} \geq 1$. In general, there can be many feasible penalty functions even in the intersection of the two envelopes. In the {description} of Algorithm \ref{alg:main}, we chose to use one particular function to have a well-defined algorithm: the largest possible function that lies in both envelopes if $a_j < 0.5$ and the smallest possible function that lies in both envelopes if $a_j \geq 0.5$. Note that one can choose any other function in the intersection of the two envelopes without changing any of our theoretical results. We make this choice to be consistent with the functions $f_L$ and $f_U$ from earlier if $a_j=0$ or 1.

\section{Analysis for Adwords with Fractional Advice}
\label{sec:adwords_proofs}
To analyze the robustness-consistency tradeoff for two-stage Adwords with fractional advice, we will use \Cref{lem:decomp_adwords} below.  It is a generalization of the structural lemma for vertex-weighted bipartite matching (\Cref{lem:decomp_vertex_weighted}) to the Adwords setting. 

\begin{restatable}{lemma}{decompadwords}
    \label{lem:decomp_adwords}
    Let $x$ be the first-stage solution returned by Algorithm \ref{alg:main}. There exist non-negative real numbers $c_i$ for each $i \in D_1$ that satisfy the following properties:
    \begin{enumerate}
        \item For all $(i,j) \in E_1$ with $x_{ij} > 0$, we have $b_{ij} (1-f_j(x_j)) \geq c_i$. 
        \item For all $i, k \in D_1$ such that there exists some $j \in S$ with $x_{k j} > 0$, we have $\frac{c_i}{b_{ij}} \geq \frac{c_{k}}{b_{kj}}$. 
        \item For all $i \in D_1$ with  $x_i < 1$, we have $c_i = 0$. 
        \item For any $(i,j) \in E_1$ with $x_j < 1$, we have $c_i \geq b_{ij}(1-f_j(x_j))$.
    \end{enumerate}
\end{restatable}
\begin{proof}{\emph{Proof of \Cref{lem:decomp_adwords}.}}
    The proof will proceed as follows. We first write down the KKT  conditions of the convex program. Then, we show that setting $c_i$ equal to the optimal dual variable corresponding to the capacity constraint for node $i$ satisfies the desired properties.

    Introducing non-negative dual variables $\lambda_i$, $\theta_j$, and $\gamma_{ij}$ for the constraints of the convex optimization problem, the Lagrangian of the convex program is:
    $$L(\vx; \lambdav, \thetav, \gammav) = \sum_{j \in S} B_j(x_j-F_j(x_j)) + \sum_i \lambda_i(1-x_i) + \sum_j \theta_j(1-x_j) + \sum_{ij} \gamma_{ij} x_{ij}$$
    The KKT conditions state that at optimality, the primal/dual solutions satisfy
    \begin{enumerate}
        \item (Stationarity)
        $b_{ij}(1-f_j(x_j)) - \lambda_i - \frac{b_{ij}}{B_j}\theta_j + \gamma_{ij} = 0$.
        \item (Complementary Slackness)
        \begin{itemize}
            \item $\lambda_i(1-x_i) = 0$,
            \item $\theta_j(1-x_j) = 0$,
            \item $\gamma_{ij}x_{ij} = 0$.
        \end{itemize}
    \end{enumerate}
    We now show that setting $c_i = \lambda_i$ satisfies the properties in the Theorem. We check the properties one by one.
    \begin{enumerate}
        \item Let $(i,j) \in E_1$ be an edge with $x_{ij} > 0$. By complementary slackness, we know $\gamma_{ij} = 0$. Then, stationarity implies
        $$b_{ij}(1-f_j(x_j)) = \lambda_i + \frac{b_{ij}}{B_j}\theta_j \geq \lambda_i.$$
        \item Consider $i,k \in D_1$ and $j \in S$ with $x_{kj} > 0$. By complementary slackness, $\gamma_{kj} = 0$. Thus, we have
        $$b_{kj}(1-f_j(x_j)) = \lambda_k + \frac{b_{kj}}{B_j}\theta_j,$$
        and
        $$b_{ij}(1-f_j(x_j)) = \lambda_i + \frac{b_{ij}}{B_j}\theta_j - \gamma_{ij} \leq \lambda_i + \frac{b_{ij}}{B_j}\theta_j.$$
        The first equation gives $\frac{\lambda_k}{b_{kj}} = (1-f_j(x_j)) - \frac{\theta_j}{B_j}$, and the second gives $\frac{\lambda_i}{b_{ij}} \geq (1-f_j(x_j)) - \frac{\theta_j}{B_j}$. Thus $\frac{\lambda_k}{b_{kj}} \geq \frac{\lambda_i}{b_{ij}}$. 
        \item This follows directly from complementary slackness, since if $x_i < 1$ then $\lambda_i = 0$. 
        \item Consider $(i,j) \in E_1$ with $x_j < 1$. By complementary slackness, $\theta_j = 0$. Then, stationarity gives
        $$b_{ij}(1-f_j(x_j)) = \lambda_i - \gamma_{ij} \leq \lambda_i.$$
    \end{enumerate}
    \qed
\end{proof}

We now analyze the robustness and consistency of Algorithm \ref{alg:main}. The theorem below gives a characterization of the penalty functions $f_j$ that are sufficient to guarantee $R$-robustness and $C$-consistency, respectively. 

\main*

The proof of \Cref{thm:main_adwords} employs the online primal-dual technique. Based on the algorithm's decisions, we construct dual variables that are approximately feasible and equal the algorithm's objective value. 
The  structural property (\Cref{lem:decomp_adwords}) allows us to characterize the envelope of functions that guarantee $R$-robustness and $C$-consistency in the primal-dual analysis. In the requirement for robustness, the upper bound on $f_j$ guarantees approximate dual feasibility for the first stage edges, while the lower bound on $f_j$ guarantees it for the second stage edges. On the other hand, in the requirement for consistency, the upper bound on $f_j$ for $s < a_j$ intuitively says that we should not overly penalize allocating to $j$ if we have not yet matched $j$'s allocation in the advice; similarly, the lower bound on $f_j$ for $s > a_j$ penalizes the algorithm for  exceeding $j$'s allocation in the advice.

In the rest of this section, we break the proof of \Cref{thm:main_adwords} into several parts. In \Cref{sec:analysis_duals}, we show how to set the dual variables, show that their value in the dual objective equals the value of the algorithm (\Cref{clm:pd_equal_adwords}), and and prove a key lower bound (\Cref{clm:sumdualvar_adwords}) regarding the sum of the dual variables on any edge. These two claims hold for any functions $f_j: [0,1] \to [0,1]$ that are continuous and increasing, and their proofs rely on the structural property. We then use these two claims in \Cref{sec:analysis_rc} to show that if the penalty functions additionally satisfy conditions 1 and 2 in \Cref{thm:main_adwords}, then the algorithm is guaranteed to be $R$-robust and $C$-consistent, respectively. Finally, we complete the proof in \Cref{sec:funcs_exist} by showing that there exist penalty functions $f_j$ which simultaneously satisfy all the conditions in \Cref{thm:main_adwords} assuming $\sqrt{1-R} + \sqrt{1-C} \geq 1$.

\subsection{Defining the Dual Variables}
\label{sec:analysis_duals}
To begin the proof of \Cref{thm:main_adwords}, recall the LP formulation for Adwords and its dual:
\begin{align*}
\mbox{max }  \sum_{(i,j) \in E} b_{ij} z_{ij}  & & \mbox{min }  \sum_{i \in D} \alpha_i + \sum_{j \in S} \beta_j \\
\mbox{s.t. } \sum_{j: (i,j) \in E} z_{ij} \leq 1 &  \qquad\forall i \in D & \mbox{s.t. } \alpha_i + \frac{b_{ij}}{B_j}\,\beta_j \geq b_{ij} & \qquad \forall (i,j) \in E \\
\frac{1}{B_j}\sum_{i:(i,j) \in E} b_{ij}z_{ij} \leq 1 &\qquad \forall j \in S & \alpha_i, \beta_j \geq 0 & \qquad \forall i \in D, j \in S. \\
z_{ij} \geq 0 &  \qquad \forall (i,j) \in E.
\end{align*}
Let $(c_i: i \in D_1)$ be the values from \Cref{lem:decomp_adwords}. We define dual variables as follows:

\textbf{First stage.} 
For all $i \in D_1$, set $\alpha_i \gets c_i$. For all $j \in S$, set
$$
\beta_j \gets
    B_jx_j - \sum_i c_i{x}_{ij}.
$$

\textbf{Second stage.} By \Cref{prop:adwords_worst_case}, we may assume without loss of generality that the second stage graph is a matching, each of whose edges satisfy $b_{ij} \leq B_j$. For all $i \in D_2$, set $\alpha_i \gets \min\{b_{ij}, B_j(1-x_j)\}$, where $j$ is the unique neighbor of $i$ in the second stage graph. (This is exactly how much the edge contributes to the algorithm.) Leave the other dual variables unchanged.

\begin{claim}[$\ALG = \mathrm{Dual}$]
    \label{clm:pd_equal_adwords}
    The value of the algorithm is equal to the objective value of $(\bar\alphav, \bar\betav)$ in the dual.
\end{claim}

\begin{proof}{\emph{Proof.}}
    Clearly the change in primal equals the change in the dual in the second stage. The claim in the first stage follows from the below equation, which holds for each $j \in S$:
    \begin{equation}
        \label{eq:eachj}
        \beta_j + \sum_{i \in D_1} \alpha_i x_{ij} = B_j x_j.
    \end{equation}
     This will suffice to prove the claim, since assuming \eqref{eq:eachj} holds, we have
    \begin{align*}
        \sum_{j \in S} \beta_j + \sum_{i \in D_1} \alpha_i 
        &= \sum_{j \in S} \beta_j + \sum_{i \in D_1} \alpha_i x_i &\text{(since if $x_i < 1$ then $\alpha_i = 0$ by Part 3 of \Cref{lem:decomp_adwords})}\\
    &= \sum_{j \in S} \left( \beta_j + \sum_{i \in D_1}\alpha_i x_{ij} \right) \\
    &= \sum_{j \in S} B_j x_j &\text{(by \eqref{eq:eachj})}
    \end{align*}
    Let us now prove \eqref{eq:eachj}. For $j \in S$, we have
    $$\beta_j + \sum_{i \in D_1} \alpha_i x_{ij} = B_jx_j - \sum_{i \in D_1} c_i x_{ij} + \sum_{i \in D_1} c_i x_{ij} = B_jx_j,$$
    as claimed.    \qed
\end{proof}



Our next claim is a lower bound on the sum of the dual variables across any given edge, and will be crucial in the analysis of robustness and consistency.

\begin{claim}
\label{clm:sumdualvar_adwords}
    We have 
    $$\beta_j \geq B_jx_jf_j(x_j),$$
    and
    $$\alpha_i + \frac{b_{ij}}{B_j}\,\beta_j \geq
    \begin{cases}
        b_{ij}(1-f_j(x_j)+x_jf_j(x_j)), &\text{if $(i,j) \in E_1$,} \\
        b_{ij}(1-x_j + x_jf_j(x_j)), &\text{if $(i,j) \in E_2$.}
    \end{cases}
    $$
\end{claim}

\begin{proof}{\emph{Proof of \Cref{clm:sumdualvar_adwords}}.} We prove the two parts separately. 

\underline{First part of claim.} We have
$$\beta_j = B_jx_j - \sum_i c_ix_{ij} \stackrel{(a)}{\geq} 
B_jx_j - \sum_i b_{ij}(1-f_j(x_j))x_{ij}
= B_jx_jf_j(x_j),
$$
where $(a)$ is by Part 1 of \Cref{lem:decomp_adwords}.

\underline{Second part of claim.} First, consider $(i,j) \in E_1$. If $x_j = 1$, then 
\begin{align*}
\alpha_i + \frac{b_{ij}}{B_j}\,\beta_j &= c_i + \frac{b_{ij}}{B_j}\left(B_j - \sum_{k} c_kx_{kj} \right) \\
&= c_i + b_{ij} - \frac{b_{ij}}{B_j}\sum_k\left(\frac{c_k}{b_{kj}}\right) b_{kj}x_{kj} \\
&\geq c_i + b_{ij} - \frac{b_{ij}}{B_j}\sum_k\left(\frac{c_i}{b_{ij}}\right) b_{kj}x_{kj} &\text{(by Part 2 of \Cref{lem:decomp_adwords})}\\
&= b_{ij}
\end{align*}
as desired.  On the other hand, if $x_j < 1$, then 
\begin{align*}
\alpha_i + \frac{b_{ij}}{B_j}\,\beta_j 
\geq c_i + \frac{b_{ij}}{B_j}\left(B_jx_jf_j(x_j)\right) 
= c_i + b_{ij} x_jf_j(x_j) 
\geq b_{ij}(1-f_j(x_j))+ b_{ij} x_jf_j(x_j),
\end{align*}
where the first inequality applies the lower bound on $\beta_j$ from the first part of the claim and the final inequality is by Part 4 of \Cref{lem:decomp_adwords} (where $x_j < 1$).

Next, consider $(i,j) \in E_2$. Then $\alpha_i = \min\{b_{ij}, B_j(1-x_j)\}$. If $x_j < 1$ then using  $\beta_j \geq B_jx_jf_j(x_j)$,
\begin{align*}
\alpha_i + \frac{b_{ij}}{B_j}\,\beta_j &= \min\{b_{ij}, B_j(1-x_j)\} + b_{ij}x_jf_j(x_j) \\
&\geq\min\{b_{ij}, B_j\}(1-x_j) + b_{ij}x_jf_j(x_j) \\
&= b_{ij}(1-x_j) + b_{ij}x_jf_j(x_j), &\text{(we assumed $b_{ij}\leq B_j$ for second-stage edges)}
\end{align*}
and the claim holds. On the other hand, suppose $x_j = 1$. Then $\alpha_i = 0$. So,
\begin{align*}
\alpha_i + \frac{b_{ij}}{B_j}\,\beta_j \geq  \frac{b_{ij}}{B_j}\cdot B_jx_jf_j(x_j) = b_{ij}f_j(x_j)
\end{align*}
as desired.  This proves the claim. \qed
\end{proof}

\subsection{Bounding Robustness and Consistency}
\label{sec:analysis_rc}
Next, we use \Cref{clm:pd_equal_adwords} and \Cref{clm:sumdualvar_adwords} to show that if the penalty functions satisfy the requirements in \Cref{thm:main_adwords}, then the algorithm is $R$-robust and $C$-consistent.
\begin{claim}
    If we run Algorithm \ref{alg:main} with continuous, increasing functions $f_j: [0,1] \to [0,1]$ with $$1 - \frac{1-R}{s} \leq f_j(s) \leq \frac{1-R}{1-s}$$ for all $s \in (0, 1)$, then the algorithm is $R$-robust.
\end{claim}
\begin{proof}{\emph{Proof.}}
    By \Cref{clm:pd_equal_adwords}, for any graph $G$ and advice $A$, we have $\ALG(G,A) = \sum_{i \in D} \alpha_i + \sum_{j \in S} \beta_j$. Therefore, to show $R$-robustness, we just need to show that $(\alpha, \beta)$ is $R$-approximately feasible to the dual, i.e. $\alpha_i + \frac{b_{ij}}{B_j}\beta_j \geq R\,b_{ij}$ for all $(i,j) \in E$. 
    
    If $(i,j) \in E_1$, then 
    $$
        \alpha_i + \frac{b_{ij}}{B_j}\beta_j
        \stackrel{(a)}{\geq}   b_{ij}(1-f_j(x_j)+x_jf_j(x_j))
        \stackrel{(b)}{\geq} R\,b_{ij}
    $$
    where $(a)$ is by \Cref{clm:sumdualvar_adwords} and $(b)$ is because $f_j(s) \leq \frac{1-R}{1-s}$.

    On the other hand, if $(i,j) \in E_2$, then 
    $$\alpha_i + \frac{b_{ij}}{B_j}\beta_j
        \stackrel{(a)}{\geq} b_{ij}(1-x_j + x_jf_j(x_j)) 
        \stackrel{(b)}{\geq} R\,b_{ij}$$
    where $(a)$ is by \Cref{clm:sumdualvar_adwords} and $(b)$ is because $f_j(s) \geq 1 - \frac{1-R}{s}$. \qed
\end{proof}

\begin{claim}
    If we run Algorithm \ref{alg:main} with continuous, increasing functions $f_j: [0,1] \to [0,1]$ with
    $$\text{$f_j(s) \leq \frac{a_j(1-C)}{a_j-s}$ for all $s \in [0, a_j)$, and $f_j(s) \geq 1 - \frac{1-C}{s-a_j}$ for all $s \in (a_j, 1]$},$$ then the algorithm is $C$-consistent . 
\end{claim}
\begin{proof}{\emph{Proof}.}
By \Cref{prop:adwords_worst_case}, the worst case for consistency is when the second-stage graph is a matching (call it $M_2$), such that $b_{ij} = B_j(1-a_j)$ for all $(i,j) \in M_2$. Let $S_2$ be the nodes matched by $M_2$, and let $S_1 = S \setminus S_2$. Then,
$$\ADVICE = \sum_{j \in S_1} B_ja_j + \sum_{j \in S_2} B_j.$$
On the other hand, using \Cref{clm:pd_equal_adwords}, we can decompose the value of the algorithm as follows: 
\begin{align*}
\ALG &= \sum_{i \in D_1} \alpha_i + \sum_{i \in D_2} \alpha_i +\sum_{j \in S} \beta_j \\
&\geq \sum_{i \in D_1} \alpha_i \sum_{j \in S} a_{ij} + \sum_{j \in S} \beta_j + \sum_{i \in D_2} \alpha_i\\
&= \sum_{j \in S}\left( \sum_{i \in D_1}a_{ij}\alpha_i + \beta_j\right)+ \sum_{i \in D_2} \alpha_i \\
&= \sum_{j \in S_1}\left( \sum_{i \in D_1}a_{ij}\alpha_i + \beta_j\right)+ \sum_{j \in S_2}\left( \sum_{i \in D_1}a_{ij}\alpha_i + \beta_j\right) + \sum_{i \in D_2} \alpha_i
\end{align*}
Thus, to show $\ALG \geq C\cdot \ADVICE$, it suffices to show 
$$
(1)~ \left( \sum_{i \in D_1}a_{ij}\alpha_i + \beta_j\right) \geq C\cdot B_ja_j ~\text{for all $j \in S_1$}, ~~\text{and}~~ (2)~ \left( \sum_{i' \in D_1}a_{i'j}\alpha_{i'} + \beta_j\right) + \alpha_i \geq C \cdot B_j ~\text{for all $(i,j) \in M_2$.}
$$
We first show (1). For $j \in S_1$, we have
\begin{align*}
    \sum_{i \in D_1}a_{ij}\alpha_i + \beta_j
    &=  \sum_{i \in D_1}a_{ij}\alpha_i + a_j\beta_j +(1-a_j)\beta_j\\
    &= \sum_{i \in D_1}a_{ij}\left(\alpha_i + \frac{b_{ij}}{B_j}\beta_j\right) +(1-a_j)\beta_j \\
    &\geq \sum_{i \in D_1}a_{ij}b_{ij}(1-f_j(x_j)+x_jf_j(x_j)) +(1-a_j)B_jx_jf_j(x_j) &\text{(By \Cref{clm:sumdualvar_adwords})}\\
    &= B_ja_j(1-f_j(x_j)+x_jf_j(x_j)) +(1-a_j)B_jx_jf_j(x_j) \\
    &= B_j\left (a_j+ (x_j-a_j)f_j(x_j)\right) \\
    &\geq C\cdot B_ja_j,
\end{align*}
Where the last inequality holds trivially if $x_j \geq a_j$, and holds if $x_j < a_j$ since we have  $f_j(s) \leq \frac{a_j(1-C)}{a_j-s}$ for all $s \in [0, a_j)$.

Next we show (2). From the calculation above, we already know that
$$\sum_{i' \in D_1}a_{i'j}\alpha_{i'} + \beta_j \geq B_j\left (a_j+ (x_j-a_j)f_j(x_j)\right). $$
The additional contribution (due to $\alpha_i$), is 
$$\alpha_i = \min\{b_{ij}, B_j(1-x_j)\} = B_j\min\{(1-a_j), (1-x_j)\} = B_j - B_j\max\{a_j, x_j\}.$$
Thus, in total we have
\begin{align*}
\sum_{i' \in D_1}a_{i'j}\alpha_{i'} + \beta_j + \alpha_i &\geq B_j\left (a_j+ (x_j-a_j)f_j(x_j)\right) + B_j - B_j\max\{a_j, x_j\} \\
&= B_j\left(1 - (x_j-a_j)^+ + (x_j-a_j)f_j(x_j)\right) ~(\star)
\end{align*}
We consider two cases, depending on which of $x_j$ or $a_j$ is larger.

\underline{Case 1.} $a_j \geq x_j$. Then,
$$(\star) = B_j(1 +(x_j-a_j)f_j(x_j)) \stackrel{(a)}{\geq} B_j(1-a_j(1-C)) \geq B_j \cdot C,$$
where $(a)$ is because $f_j(s) \leq \frac{a_j(1-C)}{a_j-s}$ for all $s \in [0, a_j)$.

\underline{Case 2.} $a_j \leq x_j$. Then, 
$$(\star) = B_j(1 -(x_j-a_j)(1-f_j(x_j))) \stackrel{(b)}{\geq} B_j(1-(1-C)) = B_j \cdot C,$$
where $(b)$ is because $f_j(s) \geq 1 - \frac{1-C}{s-a_j}$ for all $s \in (a_j, 1]$.
\qed
\end{proof}

\subsection{Existence of Penalty Functions}
\label{sec:funcs_exist}

Finally, we show there exist functions $f_j$ satisfying the requirements in \Cref{thm:main_adwords}, which completes the analysis. 
\begin{claim}
    \label{clm:fn_exist}
    Let $R \in [0, \frac34]$ and let $C = 2\sqrt{1-R} - (1-R)$. For any $a \in [0,1]$, there exists a function $f:[0,1] \to [0,1]$ that satisfies:
    \begin{enumerate}
    \item $f$ is continuous and increasing,
    \item (Requirement for Robustness) $1 - \frac{1-R}{s} \leq f(s) \leq \frac{1-R}{1-s}$.
    \item (Requirement for Consistency) $f(s) \leq \frac{a(1-C)}{a-s}$ for all $s \in [0, a)$, and $f(s) \geq 1 - \frac{1-C}{s-a}$ for all $s \in (a, 1]$. 
\end{enumerate}
\end{claim}
\begin{proof}{\emph{Proof}.}
    Note that the required bounds on $f$ can be summarized as follows:
    \begin{itemize}
        \item \textbf{Lower bound:} 
        $$f(s) \geq \LB(s) := 
        \begin{cases}
        \max\left\{0,\, 1-\frac{1-R}{s}\right\}, &\text{$s\in [0,a]$} \\
        \max\left\{0,\, 1-\frac{1-R}{s},\, 1 - \frac{1-C}{s-a}\right\}, &\text{$s \in (a, 1]$}
        \end{cases}
        $$
        \item \textbf{Upper bound:}
        $$f(s) \leq \UB(s) :=
        \begin{cases}
        \min\left\{1,\, \frac{1-R}{1-s},\, \frac{a(1-C)}{a-s} \right\}, &\text{$s\in [0,a)$} \\
        \min\left\{1,\, \frac{1-R}{1-s}\right\}, &\text{$s \in [a, 1]$}
        \end{cases}
        $$
    \end{itemize}
    Note that $\LB(s)$ and $\UB(s)$ are both continuous and increasing in $s$. Thus, to show there exists a function $f$ that satisfies the three properties in the claim, it suffices to show that $\LB(s) \leq \UB(s)$ for all $s \in [0,1]$. Feasible choices for $f$ could then be, for example, $f(s) = \LB(s)$, $f(s) = \UB(s)$, or generally any continuous increasing function in the envelope between the lower and upper bounds.

    To show $\LB(s) \leq \UB(s)$, we will consider two cases depending on if $s < a$ or $s > a$. (The case $s = a$ will follow from continuity.)

    \underline{Case 1. $s < a$.} We need to check that
    $$\max\left\{0,\, 1-\frac{1-R}{s}\right\} \leq \min\left\{1,\, \frac{1-R}{1-s},\, \frac{a(1-C)}{a-s} \right\},$$
    or equivalently, that every term in the max on the left is less than every term in the min on the right. The only non-trivial relations to check are (1) $1-\frac{1-R}{s} \leq \frac{1-R}{1-s}$, and (2) $1-\frac{1-R}{s} \leq \frac{a(1-C)}{a-s}$. (1) is equivalent to 
    $$1 \leq (1-R)\left(\frac{1}{s} + \frac{1}{1-s}\right),$$
    which is true since $\frac{1}{s} + \frac{1}{1-s} \geq 4$ (with equality when $s = 2$), and $R \leq \frac34$.

    Next we show (2). (2) is equivalent to
    $$\frac{1-R}{s} + \frac{a(1-C)}{a-s} \geq 1.$$ 
    Let $h(s)$ denote the left-hand side expression. If $C=1$ then $R=0$ and the inequality is trivial. Otherwise, $\lim_{s \to 0^+}h(s) = \lim_{s \to a^-} h(s) = \infty$. Thus, the minimum of left-hand side over $s < a$ must occur at a point $s^* \in (0,a)$ with $h'(s^*) = 0$. Taking the derivative and setting to zero, we get
    $$-\frac{1-R}{(s^*)^2} + \frac{a(1-C)}{(a-s^*)^2} = 0 \implies s^* = \frac{a}{1 + \sqrt{\frac{a(1-C)}{1-R}}}.$$
    Substituting this back, we get
    $$h(s^*) = \left(\sqrt{1-C} + \sqrt{\frac{1-R}{a}}\right)^2 \geq \left(\sqrt{1-C} + \sqrt{1-R}\right)^2 = 1.$$
    \underline{Case 2. $s > a$.} We need to check that
    $$\max\left\{0,\, 1-\frac{1-R}{s},\, 1 - \frac{1-C}{s-a}\right\} \leq  \min\left\{1,\, \frac{1-R}{1-s}\right\}.$$
    The two non-trivial inequalities to check are (1) $1-\frac{1-R}{s} \leq \frac{1-R}{1-s}$ and (2) $1 - \frac{1-C}{s-a} \leq \frac{1-R}{1-s}$. We already verified (1) holds in Case 1 above, so we just need to show (2). (2) is equivalent to
    $$\frac{1-R}{1-s} + \frac{1-C}{s-a} \geq 1.$$
    Let $g(s)$ denote the left-hand side expression. If $C = 1$ then $R = 0$ and the inequality trivially holds. Otherwise, $\lim_{s \to a^+} g(s) = \lim_{s \to 1^-} g(s) = \infty$. Thus, the minimum of the left-hand side over $s > a$ must occur at a point $s^* \in (a, 1)$ with $g'(s^*)= 0$. Taking the derivative and setting to zero, we get
    $$\frac{1-R}{(1-s^*)^2} - \frac{1-C}{(s^*-a)^2} = 0 \implies s^* = \frac{a\sqrt{1-R} + \sqrt{1-C}}{\sqrt{1-R} + \sqrt{1-C}} = a\sqrt{1-R} + \sqrt{1-C}.$$
    Substituting this back, we get
    \begin{align*}
    h(s^*) &= \frac{1-R}{1-a\sqrt{1-R}-\sqrt{1-C}} + \frac{1-C}{a\sqrt{1-R}+\sqrt{1-C}-a} \\
    &= \frac{1-R}{(1-a)\sqrt{1-R}} + \frac{1-C}{(1-a)\sqrt{1-C}} \\
    &=\frac{1}{1-a} \geq 1.
    \end{align*}
\qed
\end{proof}

\section{Tightness of the Robustness-Consistency Tradeoff}
\label{sec:tight}

\begin{figure}

    \centering
    \begin{tikzpicture}[scale=0.73]
    \vertex (s1) at (0, 3) [label=above left:1] {};
    \vertex (s2) at (0, 0) [label=below left:2]{};
    \vertex (d1) at (3, 3) [label=right:1] {};
    \vertex (d2) at (3, 0) [label=right:2] {};
    \node (label1) at (-0.5, 2.6) {\text{\footnotesize{$w_1 = w$}}};
    \node (label2) at  (-0.5, 0.4) {\text{\footnotesize{$w_2 = 1$}}};
    \node (label3) at  (2.5, 1.8) {\text{\footnotesize{$1-x$}}};
    \draw (s1) --node[above] {\text{\footnotesize{$x$}}} (d1);
    \draw[green] (s1) -- (d1);
    \draw (s2) -- (d1);
    \draw[dotted] (d2) -- (s1);
    \draw[dotted] (d2) -- (s2);
    \end{tikzpicture}
    \caption{Illustration of the hardness instance. $S$ is on the left and $D$ is on the right. The first arrival neighbors both vertices of $S$. The second arrival neighbors exactly one vertex of $S$, but it could be either vertex. The advice is to match the green edge $(1,1)$.}
    \label{fig:hardness}
\end{figure}

\Cref{thm:main_adwords,thm:vertex_weighted} show that for two-stage vertex-weighted matching and Adwords, the tradeoff curve $\sqrt{1-R} + \sqrt{1-C} = 1$ for $R \in [0, \frac{3}{4}]$ is achievable. We now show this tradeoff to be tight.

\noindent\textbf{The hardness instance.} The hardness instance is illustrated in \Cref{fig:hardness}. It is an instance of vertex-weighted bipartite matching. (As Adwords is a generalization of vertex-weighted matching, this lower bound will also apply to Adwords.) The instance is a graph where $D$ has two vertices $i = 1,2$ that arrive one in each stage. $S$ has two vertices $j = 1,2$ with weights $w_1 = w$ (for some $w  \in [0,1]$ that we will set later) and $w_2 = 1$. The first-stage graph consists of both edges $(1,1)$ and $(1,2)$, and the advice suggests matching along the edge $(1,1)$, which is "extreme" advice in that offline vertex 1 has the lower weight one edge (which can be either $(2,1)$ or $(2,2)$). Since the algorithm does not know which edge will arrive in the second stage, it must hedge against both possibilities when making its first-stage decision. 

\begin{theorem}
\label{thm:tight}
Let $w = \frac{1}{\sqrt{1-R}} - 1$ in the instance described above.\footnote{This choice of $w$ was obtained by minimizing $(1-R)w + \frac{1}{1+w}$ over $w$.} Then any algorithm that is $R$-robust is at most $C$-consistent where $C = 2\sqrt{1-R} - (1-R)$. 
\end{theorem}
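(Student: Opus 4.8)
The plan is to exploit the fact that this instance has only one online vertex per stage, so that (after the reduction to fractional matching) any algorithm's first-stage behavior is completely described by two numbers: the water level $x_1 =: x$ placed on offline vertex $1$ and the water level $x_2$ placed on offline vertex $2$, subject only to $x + x_2 \le 1$, since both first-stage edges emanate from the single vertex $i=1$ whose capacity is $1$. I would then treat the second stage as an adversarial choice between two scenarios: (A) vertex $i=2$ neighbors offline vertex $1$, and (B) vertex $i=2$ neighbors offline vertex $2$. Because the first-stage decision is committed before this choice is revealed, an $R$-robust (resp.\ $C$-consistent) algorithm must satisfy the corresponding bound in \emph{both} scenarios.

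First I would tabulate the three quantities in each scenario. One checks that $\OPT(G) = 1 + w$ in both A and B (match $i=1$ to the weight-$1$ vertex and $i=2$ to the other). Following the advice fills offline vertex $1$ completely in the first stage, so $\ADVICE = w$ in scenario A (the second arrival is then blocked) and $\ADVICE = w + 1$ in scenario B. The algorithm itself collects $wx + x_2$ in the first stage and then, matching optimally, $w(1-x)$ in scenario A and $(1 - x_2)$ in scenario B, giving $\ALG = w + x_2$ in A and $\ALG = wx + 1$ in B.

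The key observation is that scenario A is what pins down robustness, while scenario B is what pins down consistency. Concretely, $R$-robustness in scenario A forces $\frac{w + x_2}{1+w} \ge R$, i.e.\ $x_2 \ge R(1+w) - w$; combined with $x + x_2 \le 1$ this yields the ceiling $x \le (1+w)(1-R)$. Meanwhile the consistency of the algorithm on this instance is at most its value in scenario B, namely $\frac{wx+1}{w+1}$ (scenario A has consistency $\ge 1$ and is never binding), and this is increasing in $x$. Plugging in the ceiling on $x$ gives consistency $\le \frac{w(1+w)(1-R) + 1}{w+1}$.

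Finally I would substitute the prescribed $w = \frac{1}{\sqrt{1-R}} - 1$. Writing $s := \sqrt{1-R}$, one has $1 + w = 1/s$ and $(1+w)(1-R) = s$, so the ceiling simplifies to $x \le \sqrt{1-R}$ and the consistency bound becomes $s(2 - s) = 2\sqrt{1-R} - (1-R) = C$, as claimed. The only real subtlety is the bookkeeping: correctly recognizing that it is the robustness constraint of scenario A (not B) that caps $x$, and that only scenario B governs consistency. Once the three quantities are laid out in both scenarios, the rest is a short substitution, matching the footnote's choice of $w$ as the minimizer of the relevant expression.
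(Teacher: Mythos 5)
Your proof is correct and takes essentially the same approach as the paper's: the same two-scenario case analysis on this instance, with the robustness requirement in scenario A capping $x$, the consistency requirement evaluated only in scenario B (where it is binding), and the same final substitution $w = \frac{1}{\sqrt{1-R}}-1$. The only cosmetic difference is that you keep $x_2$ as a free variable and derive the cap $x \le (1+w)(1-R)$ from $x + x_2 \le 1$ together with the scenario-A robustness constraint, whereas the paper simply sets $x_{12} = 1-x$ (implicitly taking saturation of the online vertex as without loss of generality); your bookkeeping makes that step explicit.
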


\begin{proof}{\emph{Proof of \Cref{thm:tight}}.} 
Let $x := x_{11}$ and $1-x := x_{12}$, so that the algorithm's first-stage decision is entirely characterized by the value of $x$.  There are two cases.
\begin{enumerate}
    \item Edge $(2, 1)$ arrives in the second stage. Then $\ALG = w + 1 - x$ and $\ADVICE = w$. 
    \item Edge $(2,2)$ arrives in the second stage. Then $\ALG = wx + 1$ and $\ADVICE = 1 + w$. 
\end{enumerate}
(Note that regardless of which edge arrives in the second stage, $\OPT$ is always equal to $1+w$.) For the algorithm to be $R$-robust in Case 1, we must have
$$
\frac{w+1-x}{1+w} \geq R \implies x \leq (1-R)(1+w).
$$
On the other hand, for the algorithm to be $C$-consistent in Case 2, we must have
$$
\frac{wx+1}{1+w} \geq C \implies x \geq \frac{C(1+w)-1}{w}.
$$
Since the algorithm does not know which of the two cases will happen in the second stage, it must choose an $x$ that satisfies both of the inequalities above. For a desired robustness $R$ and consistency $C$, this is only possible if
$$\frac{C(1+w)-1}{w} \leq (1-R)(1+w),$$
which when rearranged becomes 
$$
C \leq (1-R)w + \frac{1}{1+w}.
$$
Substituting $w = \frac{1}{\sqrt{1-R}} - 1$ above gives $C \leq 2\sqrt{1-R} - (1-R)$, as desired. \qed
\end{proof}

\section{Experiments} \label{sec:experiments}
The code for all experiments can be accessed at \href{https://github.com/MapleOx/matching_predictions}{https://github.com/MapleOx/matching\_predictions}.

\subsection{Synthetic Experiments}
\label{subsec:exp_synth}

We begin with experiments on fractional\footnote{
In a bipartite graph, any fractional first-stage solution $\mathbf{x}$ can be expressed as a distribution over first-stage matchings $M_1$.  On a worst-case $G_2$, the performance of the fractional solution will satisfy $w(\mathbf{x},G_2)=\bE_{M_1}[w(M_1,G_2)]$, as we saw in \Cref{prop:vertex_weighted_fractional}.  However, on the more benign instances that we simulate here, we will generally have $w(\mathbf{x},G_2)>\bE_{M_1}[w(M_1,G_2)]$, noting the concavity of the function $w(\cdot,G_2)$. It is also true that if $G_2$ was known, then in a bipartite graph one can always find an integral matching $M_1$ such that $w(M_1,G_2) \ge w(\mathbf{x},G_2)$, but this is not true for unknown $G_2$ (as evidenced by the optimal SAA solution for two-stage matching needing to be fractional).

In sum, there is a difference between fractional vs.\ integral bipartite matchings in these simulations, and we assume fractional matchings to avoid the complexities of randomized rounding.
} vertex-weighted bipartite matching with synthetic data. In all the experiments here, the base graph will have 12 supply nodes and 12 demand nodes, with 6 demand nodes arriving in the first stage and 6 arriving in the second stage. 

\noindent\textbf{First-stage demand $D_1$.} The first-stage graph is fully connected, meaning each demand node in this stage has an edge to every supply node. Because the setting is vertex-weighted, it is always best to fully match the 6 first-stage demand nodes, implying that the sum of fractions filled of supply nodes after the first stage will be 6.  The sum of fractions unfilled will also be $12-6=6$, corresponding to the warehouses in which inventory was (fractionally) placed in the e-commerce fulfillment problem discussed in \Cref{subsec:introNum}.

\noindent\textbf{Second-stage demand $D_2$.} The second-stage graph is drawn from a probability distribution as follows. Each supply node is associated with a probability $p_j$. Then, each second-stage demand node independently samples an edge to supply node $j$ with probability $p_j$. Thus, the expected degree of each second-stage demand node is exactly $\sum_{j \in S} p_j$.

\noindent\textbf{Advice.} The advice is generated using sample average approximation (SAA) as follows. We generate a number of samples of the second-stage graph, and take the advice to be the fractional matching in the first stage that maximizes the expected value over the uniform distribution of the samples. We conduct two sets of experiments: one without distribution shift and one with distribution shift. In the experiment without distribution shift, the SAA samples are drawn from the true second-stage distribution. In the experiment with distribution shift, the SAA samples are drawn using probabilities $\hat{p}_j$ that are perturbed from the true probabilities $p_j$. 

\noindent\textbf{Algorithms.} We tested the following algorithms:
\begin{enumerate}
    \item The \textbf{greedy} algorithm, which selects the maximum weight matching in the first stage;
    \item The fully \textbf{robust} algorithm from \cite{feng2021two};
    \item The algorithm which exactly follows the \textbf{advice} in the first stage;
    \item Our algorithm with $R \in \{0.15, 0.3, 0.45, 0.6, 0.75\}$.  Note that the robust decision often coincides with that of our algorithm with $R=0.75$, which is why it may be hidden in the plots.
\end{enumerate}

We compared the algorithms' performance with three levels of variation in the weights and two levels of variation in the probabilities.

\noindent\textbf{Weight variations.}
\begin{itemize}
    \item No variation: All supply nodes have a weight of 1.
    \item Small variation: The weight of each supply node is sampled from a $\mathrm{Uniform}(1, 1.2)$ distribution.
    \item Large variation: The weight of each supply node is sampled from a $\mathrm{Uniform}(1, 2)$ distribution.
\end{itemize}

\noindent\textbf{Probability variation.}
\begin{itemize}
    \item Small variation: Each $p_j$ is independently sampled from a $\mathrm{Uniform}(\frac{1}{4}, \frac{1}{2})$ distribution.
    \item Large variation: Each $p_j$ is independently sampled from a $\mathrm{Uniform}(\frac{1}{6}, \frac{1}{2})$ distribution.
\end{itemize}

\Cref{fig:synth} shows the results without distribution shift. Here, SAA samples are generated using the true probabilities $p_j$. We plot the performance of the algorithms as the number of SAA samples increases from 1 to 50. For each SAA sample count, we ran 100 independent trials where we redrew the samples. The $y$-axis represents performance, calculated as the average value earned by the algorithm over these 100 trials divided by the average value of the optimal hindsight matching, resulting in $y$-axis values between 0 and 1. Note that the curves for Greedy and Robust are flat because they do not use the advice. Generally, algorithms following the advice perform better when the probability variation is large, as there is more signal in the samples and more benefit to learning which supply nodes should be de-prioritized because they are easier to match in the second stage. Overall, the Robust algorithm of \cite{feng2021two} and our algorithm with $R = 0.75$ excel when the sample count is low, while Advice excels when the sample count is high.  Our algorithms with intermediate $R$ values interpolate between Advice and Robust, often outperforming both when the number of SAA samples is moderate.

\Cref{fig:synth_perturbed} shows the results with distribution shift, where SAA samples for the second-stage graph are not generated from the true distribution (governed by probabilities $p_j$) but from a perturbed distribution with probabilities $\hat{p}_j$ instead. In this experiment we set $\hat{p}_j$ to be $p_j$ plus a $\mathrm{Uniform}(-0.1, 0.1)$ perturbation, introducing bias in the samples. As expected, algorithms using the advice show decreased performance under distribution shift. However, algorithms with intermediate $R$ values are less impacted by the shift than the purely Advice-based algorithm. Thus, our algorithms hedge against both insufficient number of samples and also potential distribution shifts or corruptions in the data.

\noindent\textbf{Summary.}
Our algorithm outperforms Advice (which directly implements the SAA solution) when the number of samples is small, especially in the setting with distribution shift.
The intuition is that SAA might "overfit" to the supply nodes that it thinks can be filled in the second stage, leaving them completely unfilled after the first stage, whereas our algorithm produces a more balanced solution akin to Robust after the first stage.
Outperforming SAA over average, benign instances might be surprising because it is not what our ALPS desiderata~\eqref{eqn:introConsistent}--\eqref{eqn:introRobust} were designed to do.
This reinforces our message that the ALPS framework can be additionally viewed as a way to "regularize" decisions made from partially-reliable information, in addition to the typically touted benefits of robustness guarantees in practice \citep[cf.][]{wiermanPres}.
That being said, our improvement over SAA can be viewed as modest, which is to be expected when trying to beat SAA over average benign instances \citep[cf.][\S6]{besbes2023big}.

\begin{figure}[ht]
    \centering
    \begin{minipage}{.33\textwidth}
        \centering
        \includegraphics[width=\linewidth]{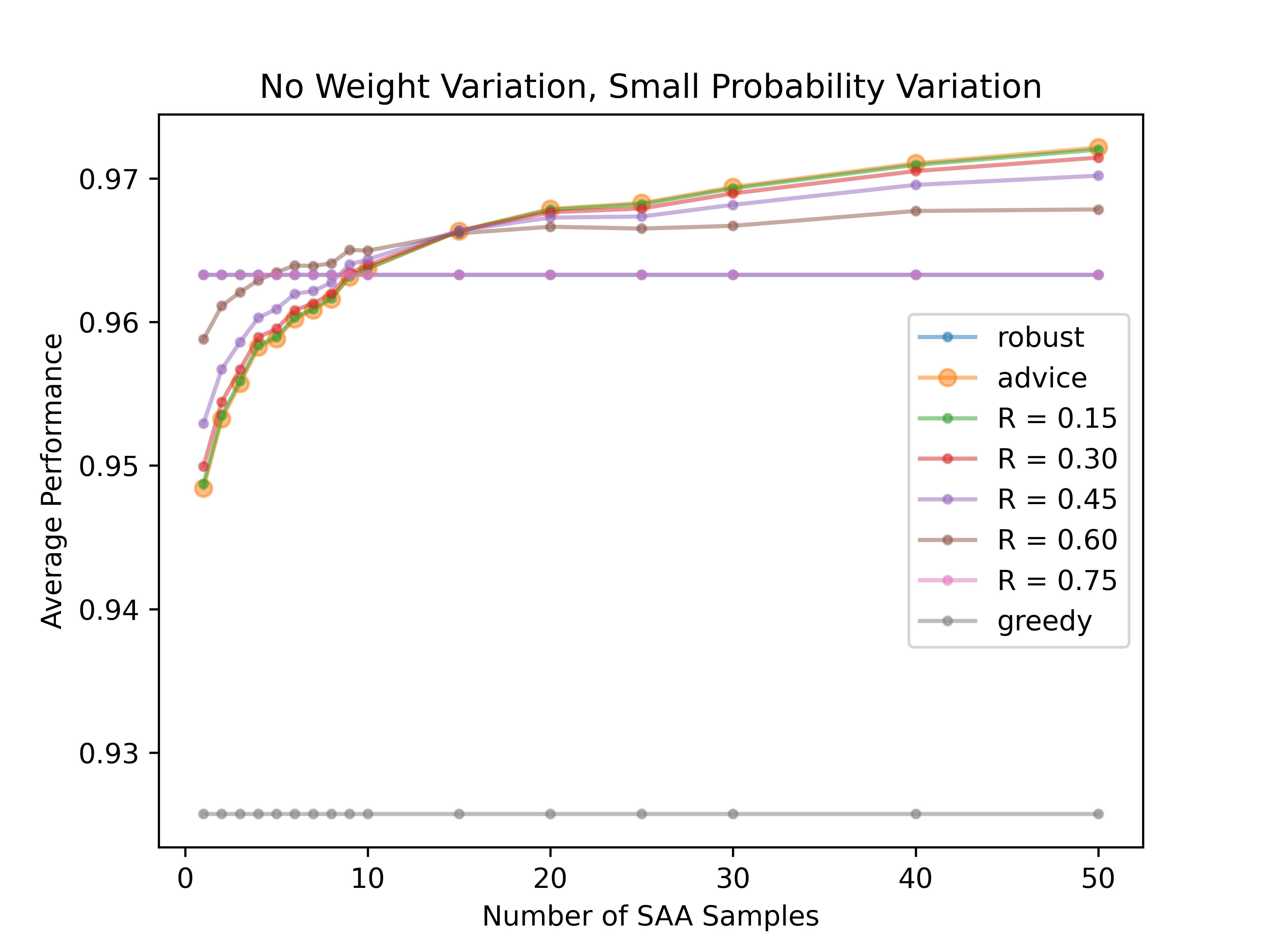}
    \end{minipage}%
    \begin{minipage}{.33\textwidth}
        \centering
        \includegraphics[width=\linewidth]{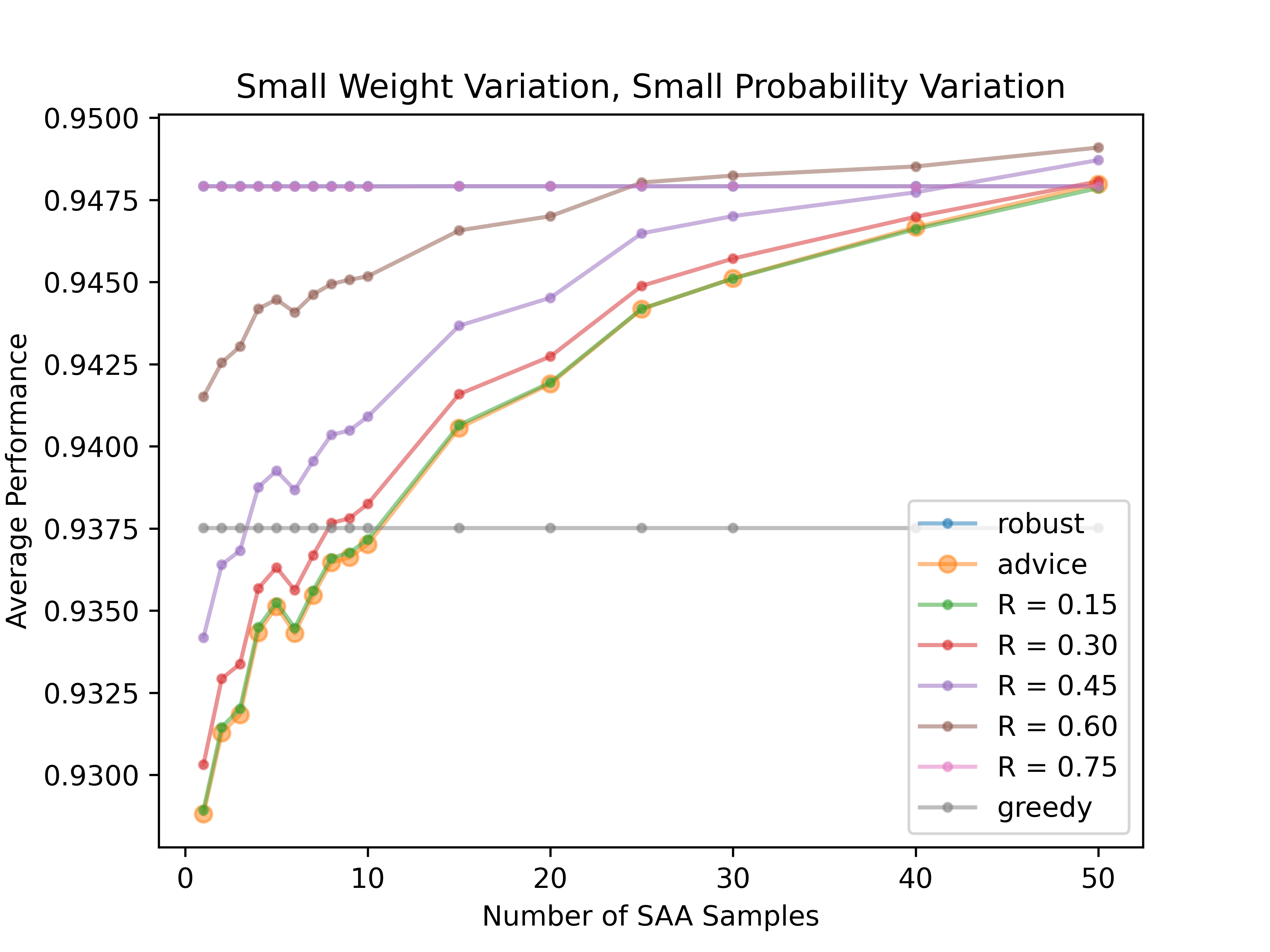}
    \end{minipage}%
    \begin{minipage}{.33\textwidth}
        \centering
        \includegraphics[width=\linewidth]{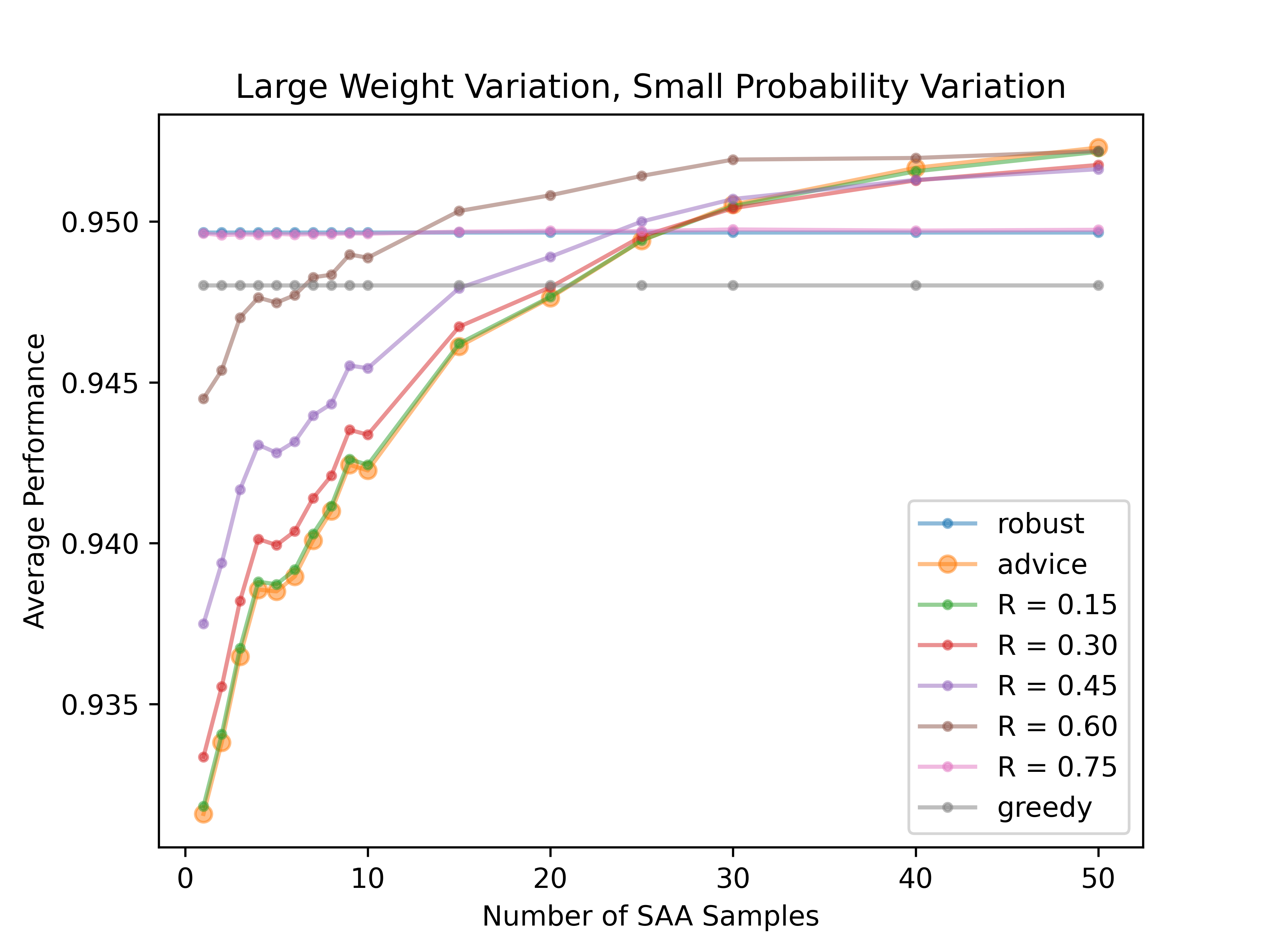}
    \end{minipage}

    \begin{minipage}{.33\textwidth}
        \centering
        \includegraphics[width=\linewidth]{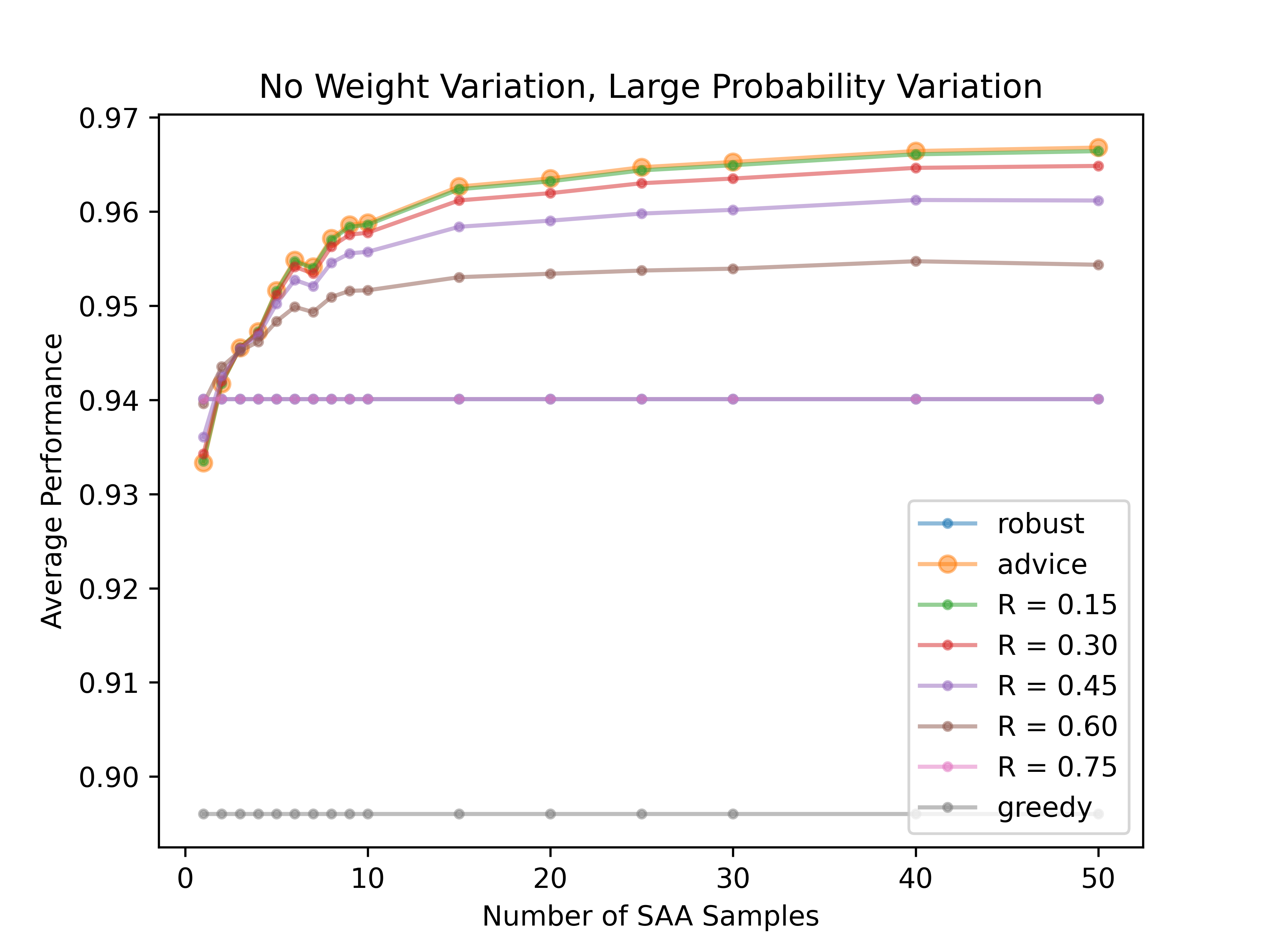}
    \end{minipage}%
    \begin{minipage}{.33\textwidth}
        \centering
        \includegraphics[width=\linewidth]{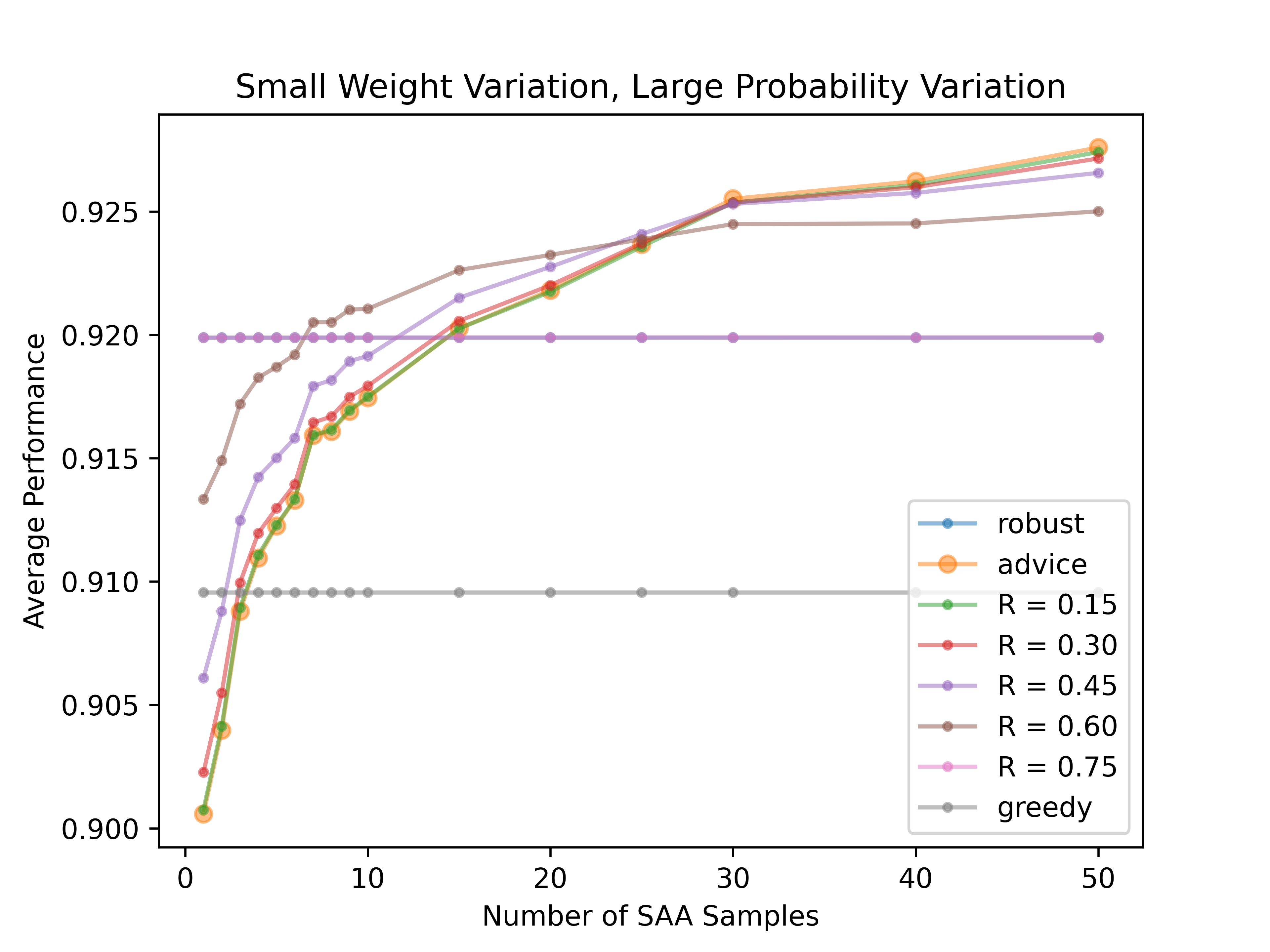}
    \end{minipage}%
    \begin{minipage}{.33\textwidth}
        \centering
        \includegraphics[width=\linewidth]{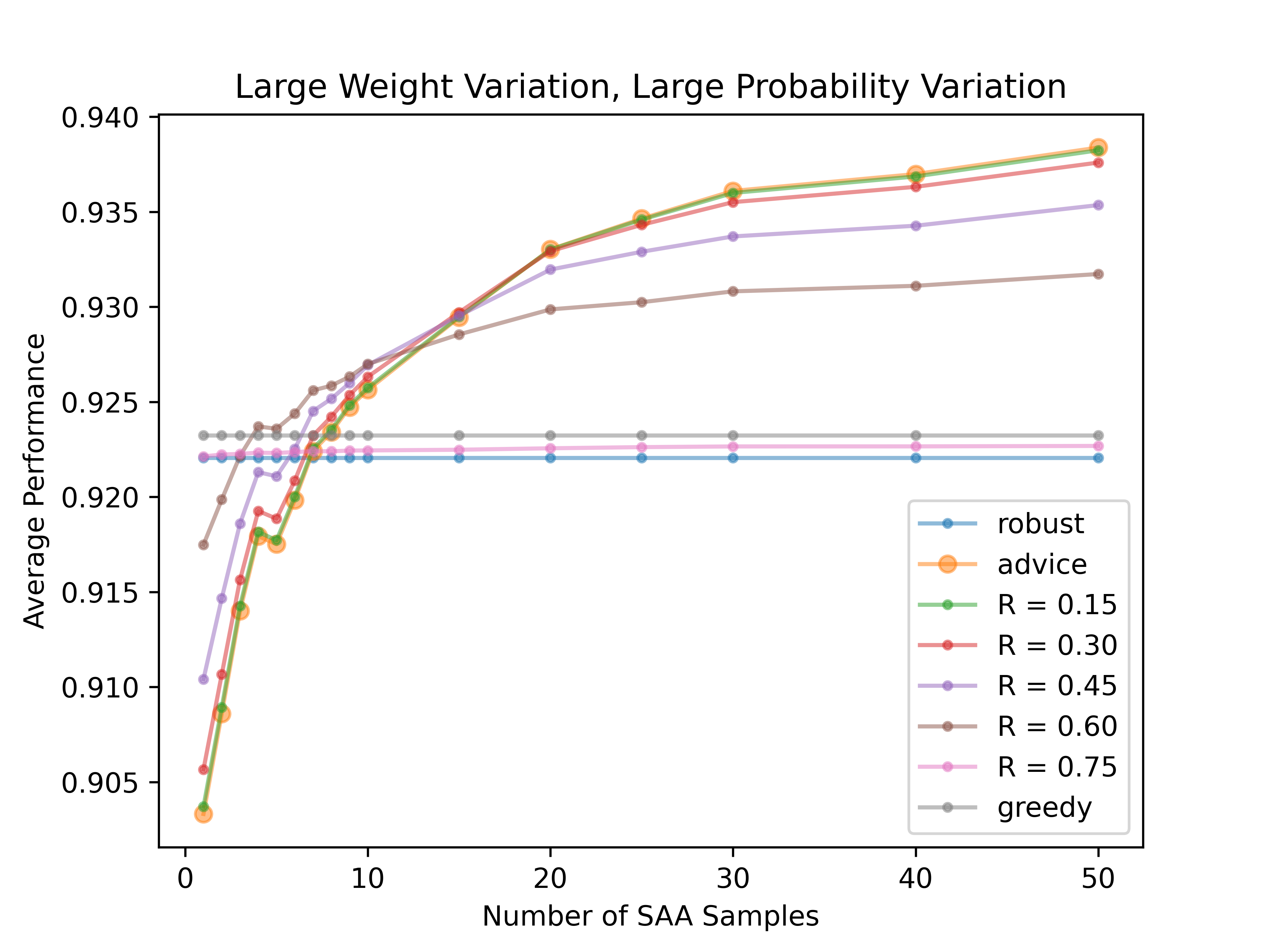}
    \end{minipage}    
    \caption{Synthetic experiments with no distribution shift. We plot the  performance of the algorithms across  three levels of variation in the weights (columns) and two levels of variation in the probabilities (rows).}
    \label{fig:synth}
\end{figure}

\begin{figure}[ht]
    \centering
    \begin{minipage}{.33\textwidth}
        \centering
        \includegraphics[width=\linewidth]{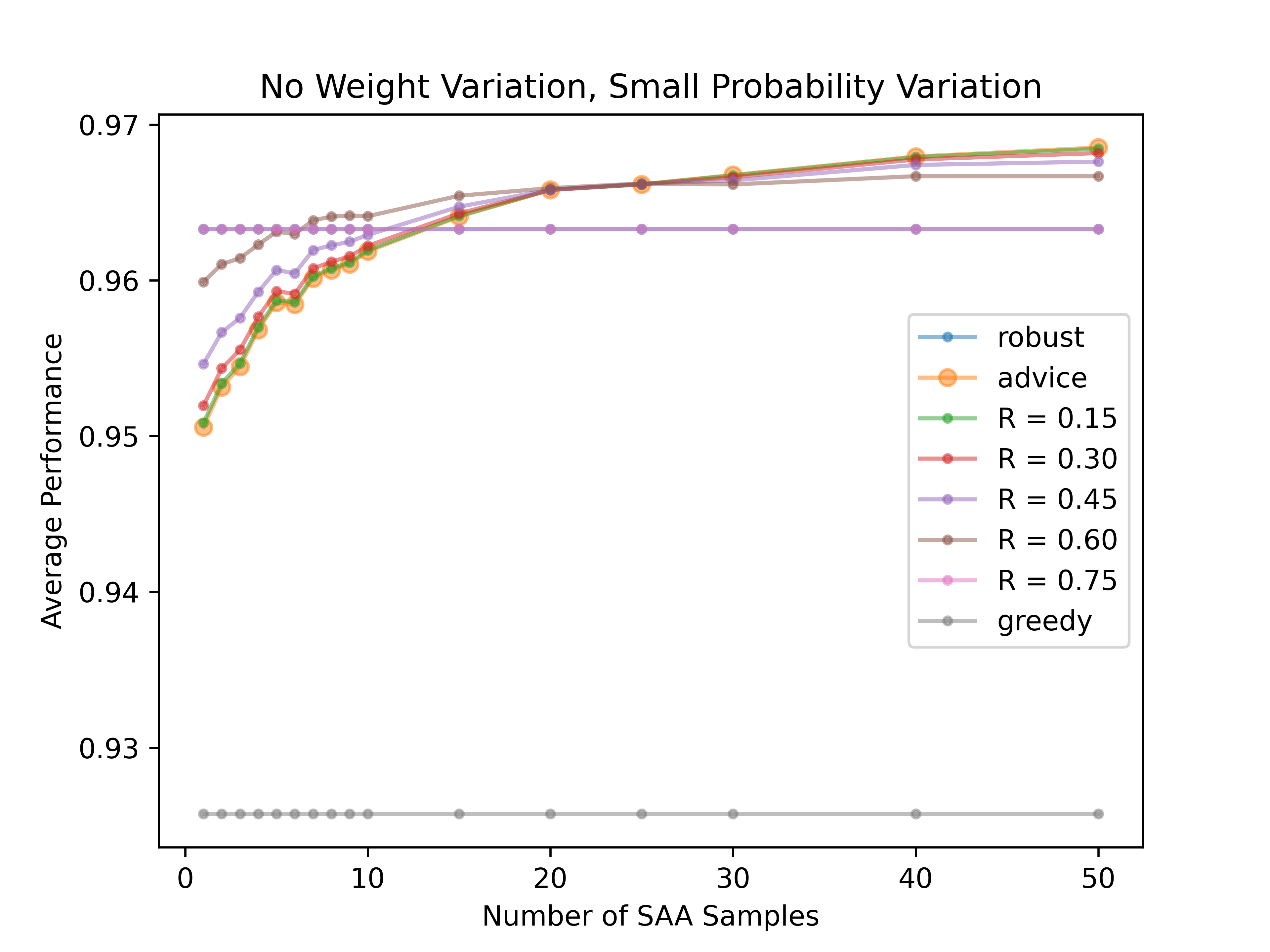}
    \end{minipage}%
    \begin{minipage}{.33\textwidth}
        \centering
        \includegraphics[width=\linewidth]{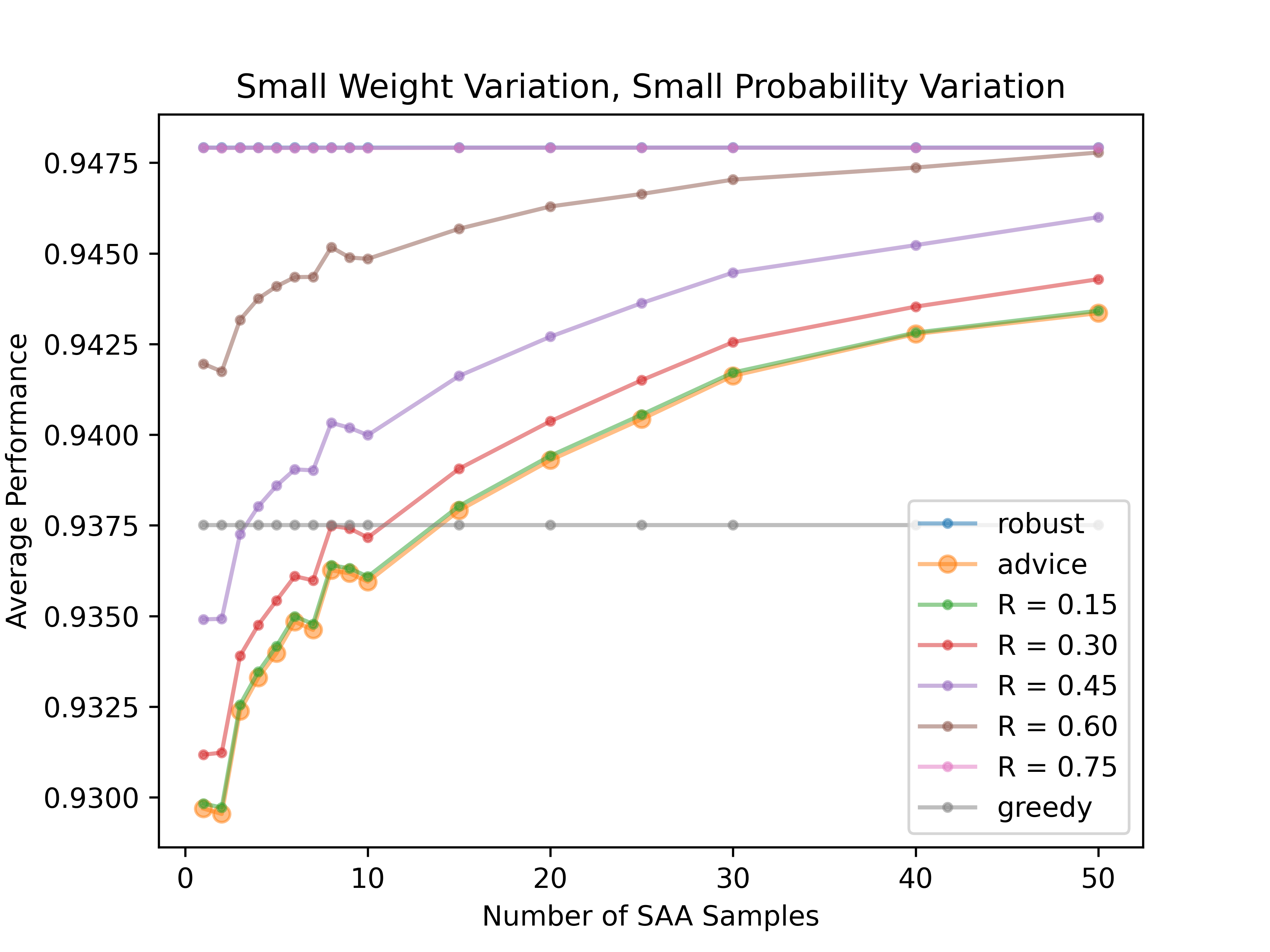}
    \end{minipage}%
    \begin{minipage}{.33\textwidth}
        \centering
        \includegraphics[width=\linewidth]{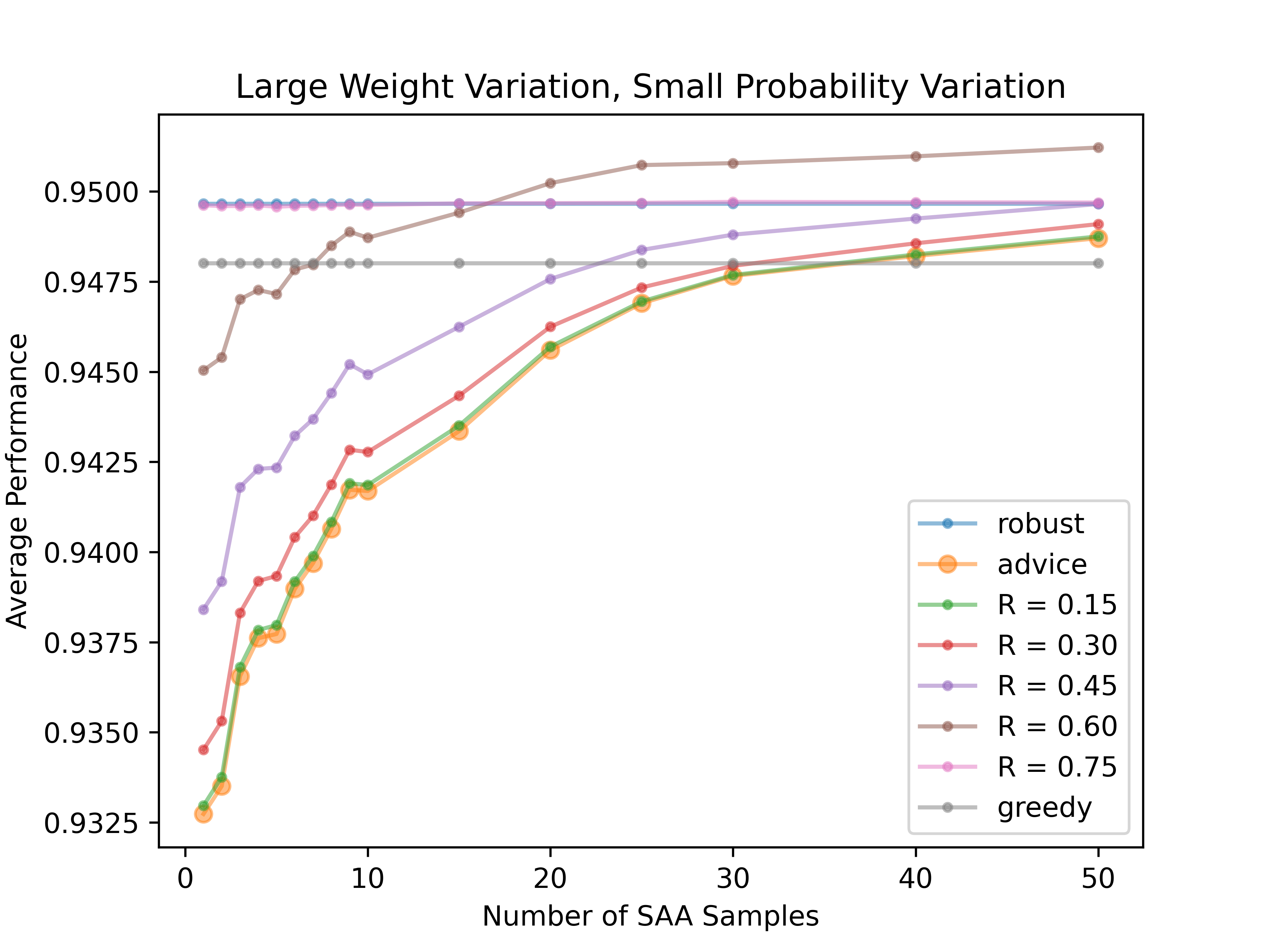}
    \end{minipage}

    \begin{minipage}{.33\textwidth}
        \centering
        \includegraphics[width=\linewidth]{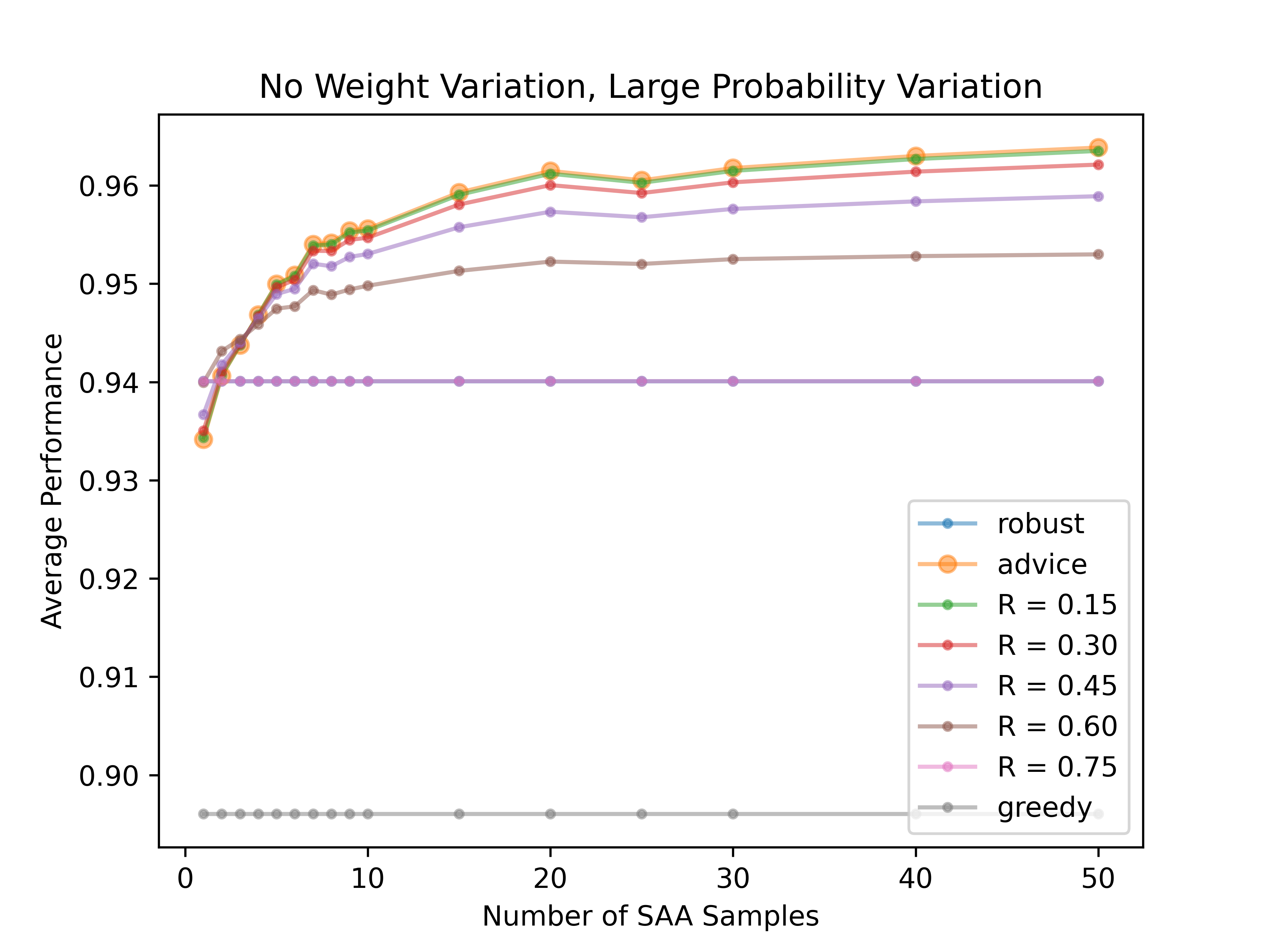}
    \end{minipage}%
    \begin{minipage}{.33\textwidth}
        \centering
        \includegraphics[width=\linewidth]{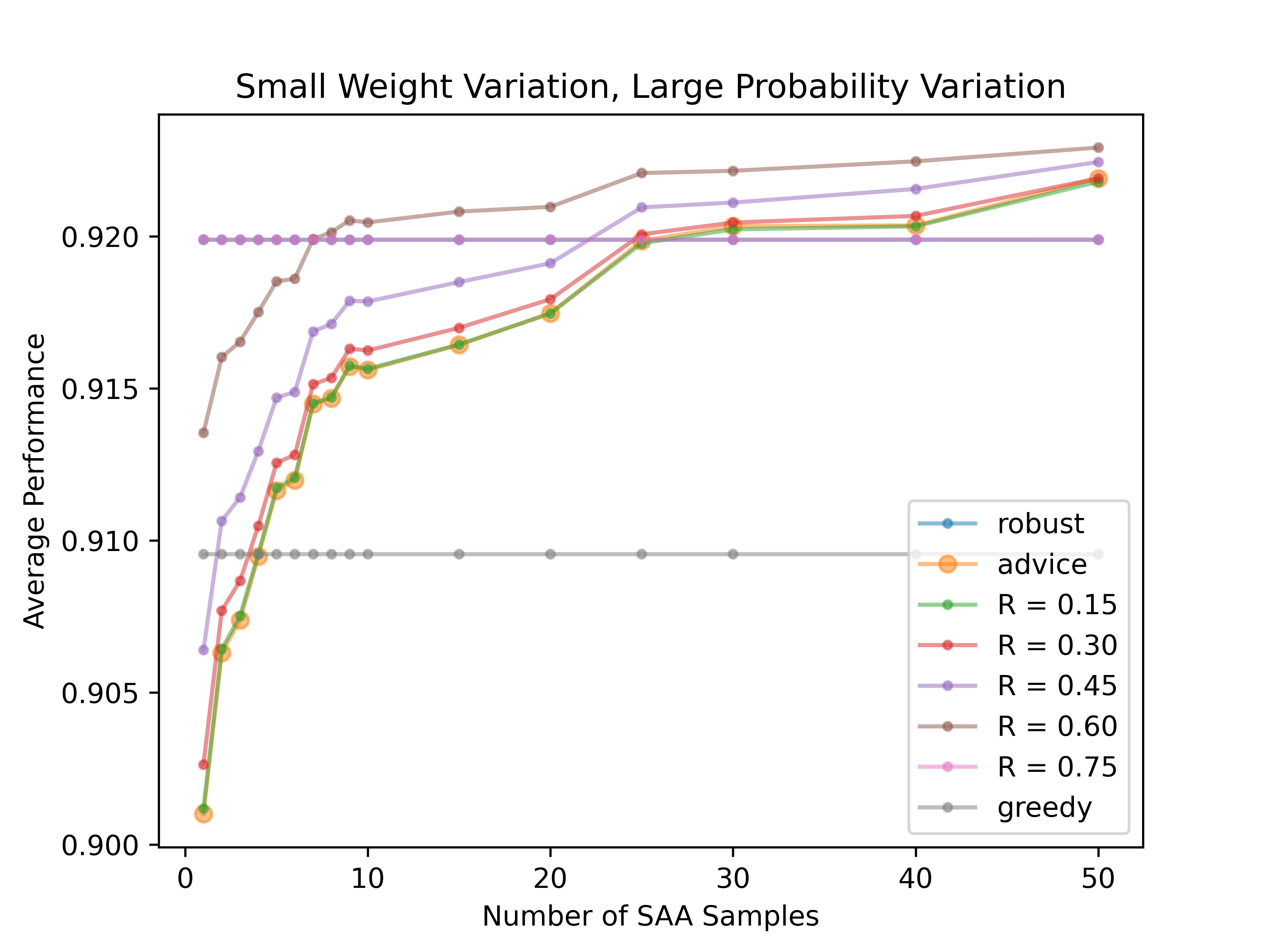}
    \end{minipage}%
    \begin{minipage}{.33\textwidth}
        \centering
        \includegraphics[width=\linewidth]{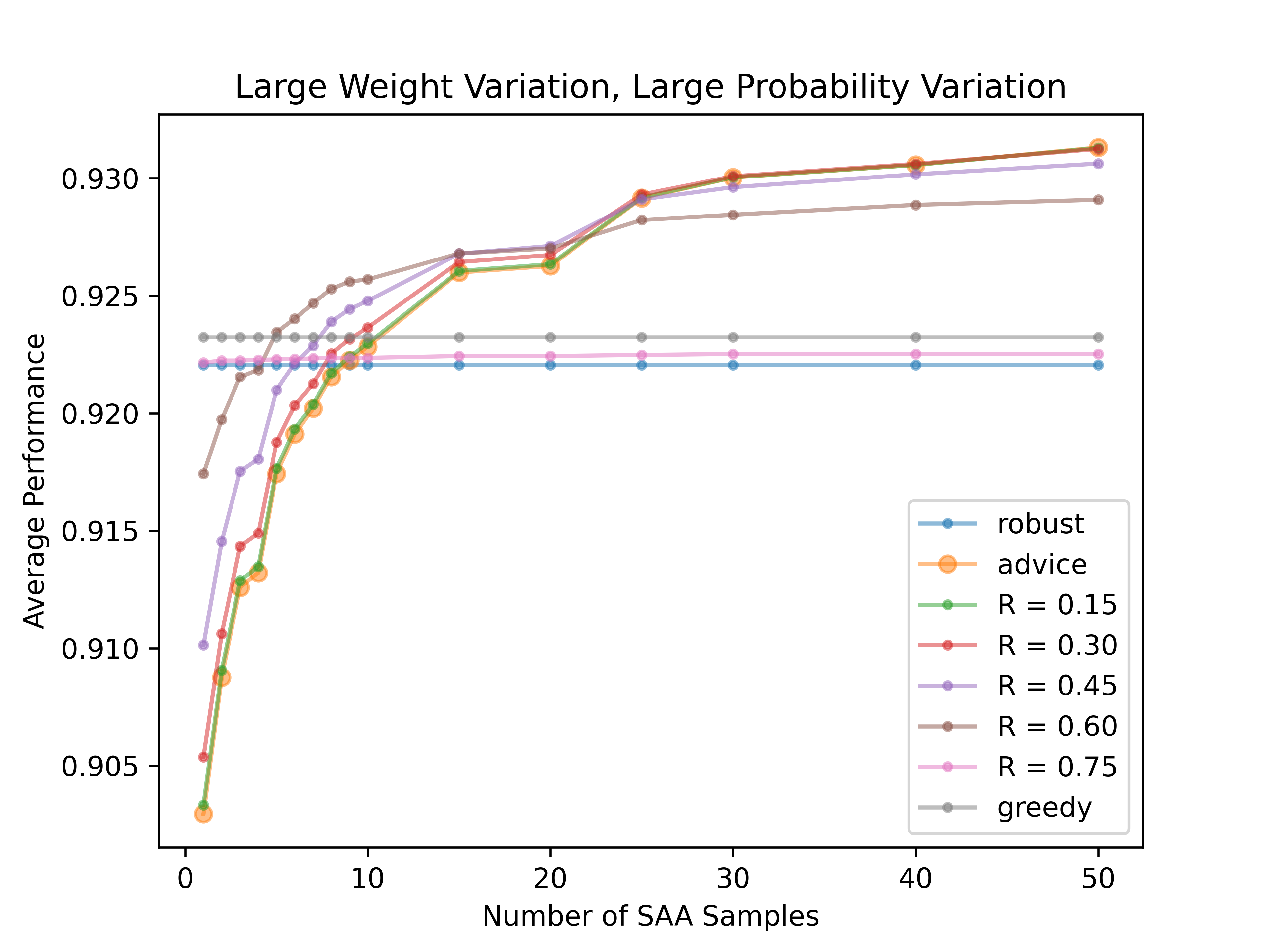}
    \end{minipage}    
    \caption{Synthetic experiments with  distribution shift. Here the SAA samples are generated using probabilities $\hat{p}_j = p_j + \mathrm{Unif}(-0.1, 0.1)$. We plot the  performance of the algorithms across  three levels of variation in the weights (columns) and two levels of variation in the probabilities (rows).}
    \label{fig:synth_perturbed}
\end{figure}


\subsection{Chicago Dataset}
\label{subsec:exp_chicago}
We provide a numerical evaluation of our algorithm on two-stage matching instances generated from a public ride-sharing dataset from \cite{chicago}.
%

\subsubsection{Dataset and instance generation.}
Since November 2018, the City of Chicago has required ride-sharing companies operating in the city to report anonymized trip data. We used the data from 2022, which can be accessed from the \href{https://data.cityofchicago.org/Transportation/Transportation-Network-Providers-Trips-2022/2tdj-ffvb}{City of Chicago Open Data Portal}. In our experiments, we focused on the trip data from 10:00 to 17:00 in the 30-day period from April 1, 2022 to April 30, 2022.\footnote{We checked that the results are robust to different choices of dates and times.}

Each trip has a start and end time that are rounded to the nearest 15 minutes for privacy reasons. The dataset also includes the pickup and dropoff locations for each trip, given as the centroid latitude and longitude coordinates of the corresponding census tract or community area.\footnote{Chicago is divided into approximately 800 census tracts, ranging in size from about 89,000 square feet to 8 square miles. For privacy reasons, the pickup and dropoff coordinates are given as the centroid of the corresponding census tract. If this would result in 2 or fewer unique trips in the same census tract and the same 15 minute window, the coordinates are given as the centroid of the (larger) community area instead. The average size of a community area is about 3 square miles. \href{http://dev.cityofchicago.org/open\%20data/data\%20portal/2019/04/12/tnp-taxi-privacy.html}{This page} gives more details on the approach to privacy in this dataset.} We restricted the dataset to trips in a rectangular region encompassing the downtown area of Chicago, with latitude between 41.8 and 42.0 and longitude between $-87.7$ and $-87.6$.
The map of this region is depicted in \Cref{fig:chicago}.



\begin{figure}[ht]
        \centering
        \begin{subfigure}[b]{0.475\textwidth}
            \centering
            \includegraphics[scale=0.45]{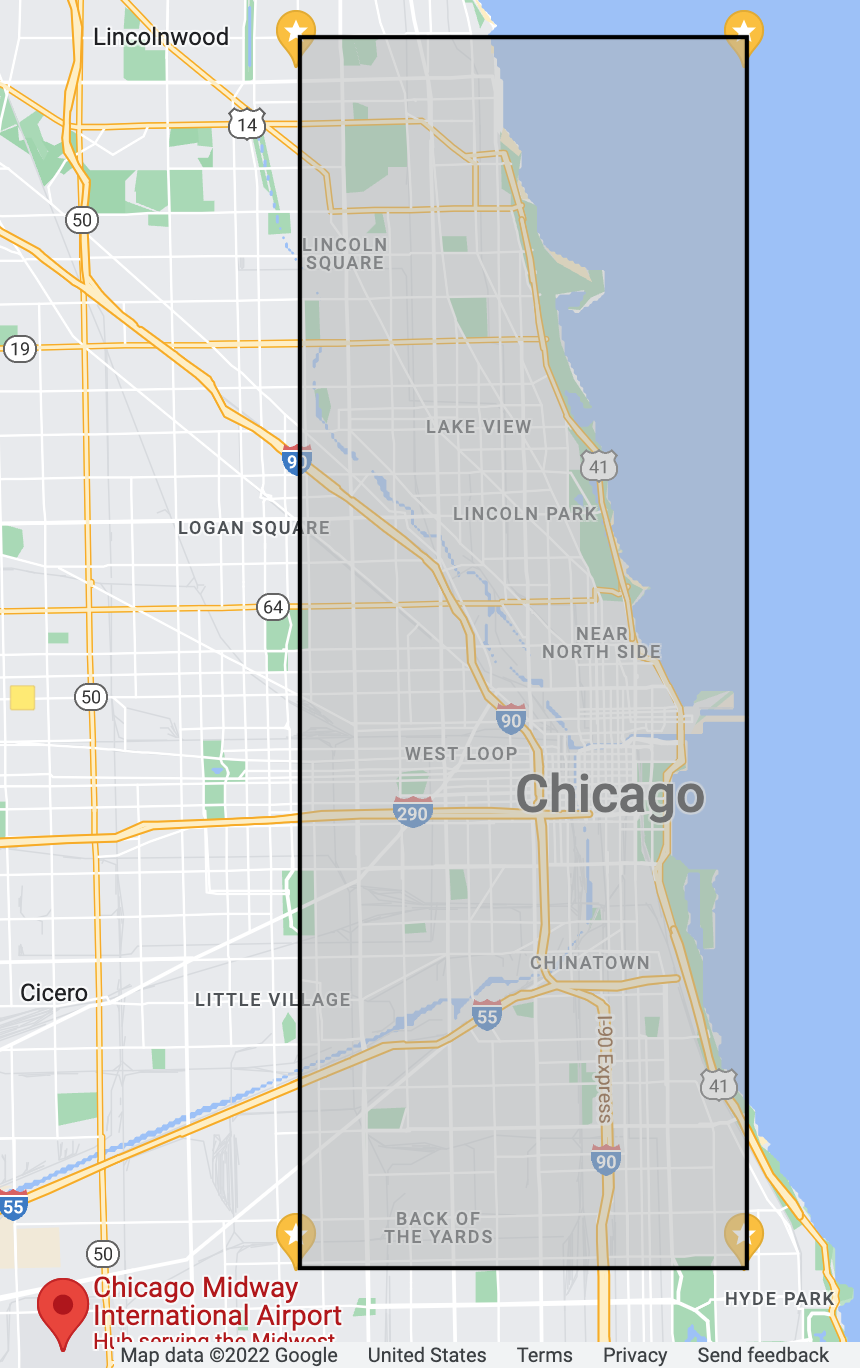}
    \caption{A map of Chicago. The trips used in our experiment were drawn from the shaded rectangle.}
    \label{fig:chicago}
        \end{subfigure}
        \hfill
        \begin{subfigure}[b]{0.475\textwidth}  
            \centering 
            \includegraphics[scale=0.05]{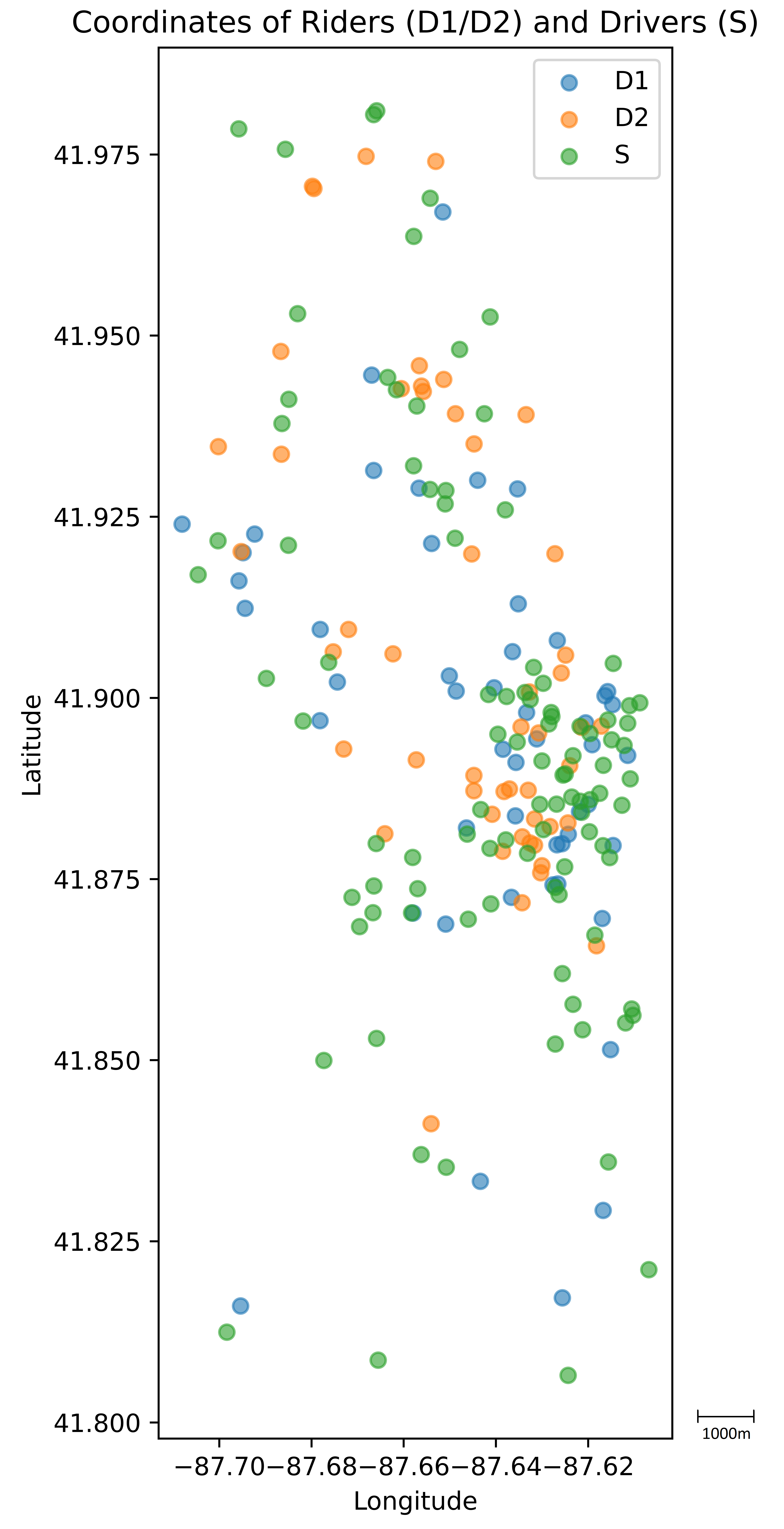}
    \caption{The locations of the supply and demand nodes in one Monte-Carlo replication. 
    }
    \label{fig:coords}
        \end{subfigure}
        \caption
        {\small An illustration of the geographic locations of the supply and demand nodes.}
        \label{fig:location}
\end{figure}
To generate two-stage matching instances from the dataset, we mimic the approach of \cite{feng2021two}. In each replication, we first sample a day $\texttt{DAY} \in [30]$ uniformly at random from the 30 possible days in April 2022. 

\noindent\textbf{First-stage demand $D_1$.} Generate a uniformly random time $T$ between 10:00 and 16:45, rounded to the nearest 15 minutes. Fetch all ride requests on day $\texttt{DAY}$ whose \texttt{Trip Start Timestamp} equals $T$. (Recall that all timestamps in the dataset are rounded to the nearest 15 minutes.) We then take $D_1$ to be a uniform random subset of these requests of size 50. 

\noindent\textbf{Second-stage demand $D_2$.}
Fetch all ride requests on day $\texttt{DAY}$ whose \texttt{Trip Start Timestamp} equals $T + 15$ (minutes). Then take $D_2$ to be a uniformly random subset of these requests of size 50. 

\noindent\textbf{Supply vertices $S$.} Take $S$ be the set of all drivers whose \texttt{Trip End Timestamp} equals $T - 15$, uniformly sampled to have $\abs{S} = 100$.
We assume that drivers whose trip ends during $T$ or $T+15$ are ineligible to serve either first-stage or second-stage demand.

\noindent\textbf{Edge set $E$.} Recall that the pickup and dropoff locations are reported as the centroid of the nearest census tract.
We first randomly perturb each location to a uniformly random point within a ball of radius 1000m, to capture the dispersion of customers with the same census tract.  Then,
we add an edge between a rider in $D$ and a driver in $S$ if the distance between the rider's pickup location and the driver's dropoff location is less than 1000m. This distance is calculated using the longitude and latitude coordinates. Figure \ref{fig:coords} illustrates the locations of the supply and demand nodes in one Monte-Carlo replication.


\noindent\textbf{Supply weights.} We experiment with both unweighted graphs, and weighted graphs, where the weights for each supply node are generated independently at random from a probability distribution. We test the following weight distributions: $\abs{N(0,1)}$\footnote{Without loss of generality, the   variance of the Normal distribution is taken to be 1; scaling all vertex weights by a constant factor simply scales the value of each algorithm by the same factor.} and $\text{Unif}(1, k)$ for $k \in \{2, 4\}$.

\noindent\textbf{Advice.} We consider two different ways to generate the advice. In \Cref{subsubsec:advice_corrupt}, we take the advice  to be a corrupted version of the optimal first-stage matching in hindsight, and we compare the performance of the algorithms as the corruption level varies. In \Cref{subsubsec:advice_d1}, we take the advice to be the optimal first-stage matching given a prediction of the second-stage demand.

\subsubsection{Advice from corrupted optimal matching.}
\label{subsubsec:advice_corrupt}

\begin{figure*}[ht]
        \centering
        \begin{subfigure}[b]{0.475\textwidth}
            \centering
            \includegraphics[width=\textwidth]{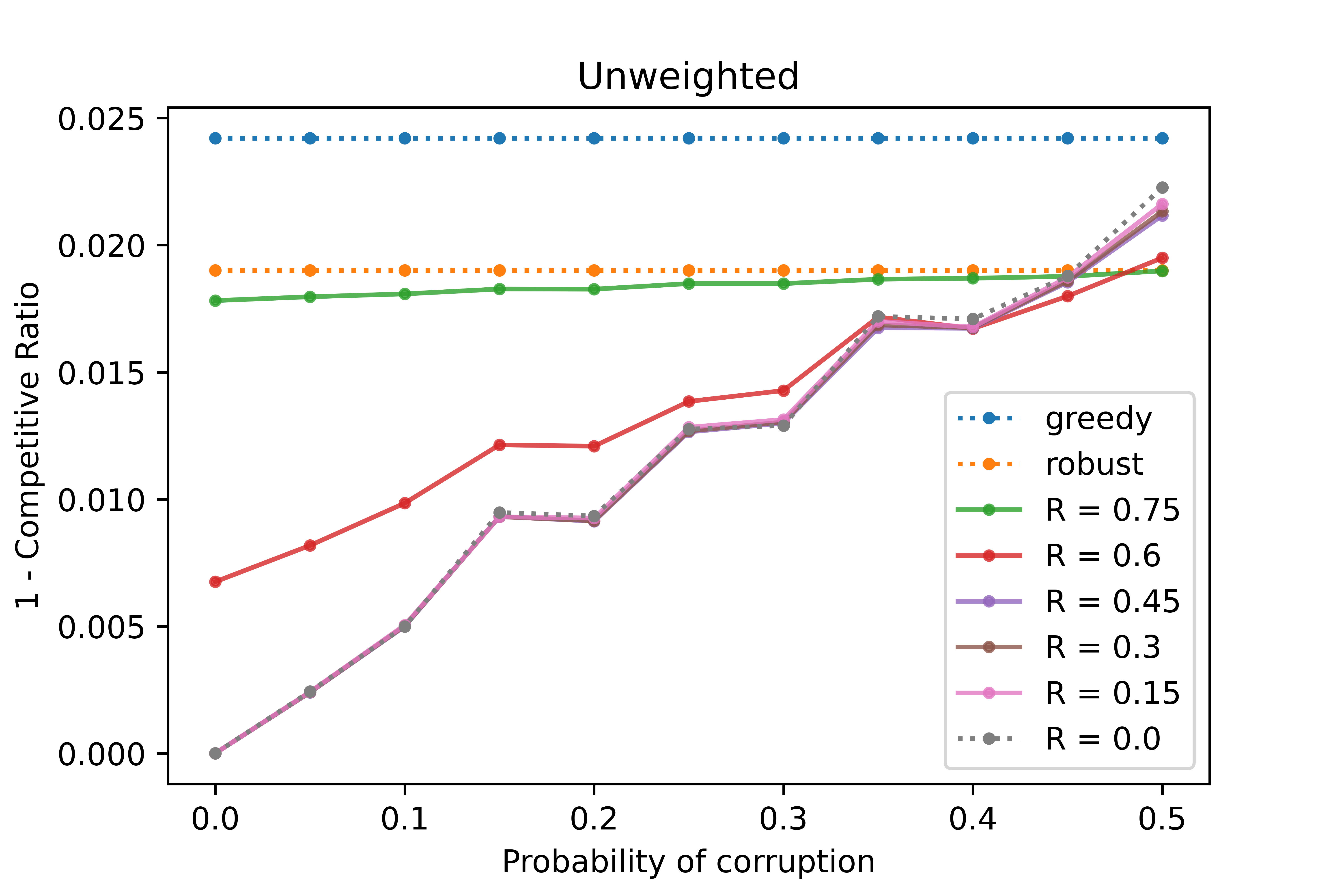}
            \caption%
            {{\small Unweighted ($w_j = 1$)}}    
            \label{fig:unweighted}
        \end{subfigure}
        \hfill
        \begin{subfigure}[b]{0.475\textwidth}  
            \centering 
            \includegraphics[width=\textwidth]{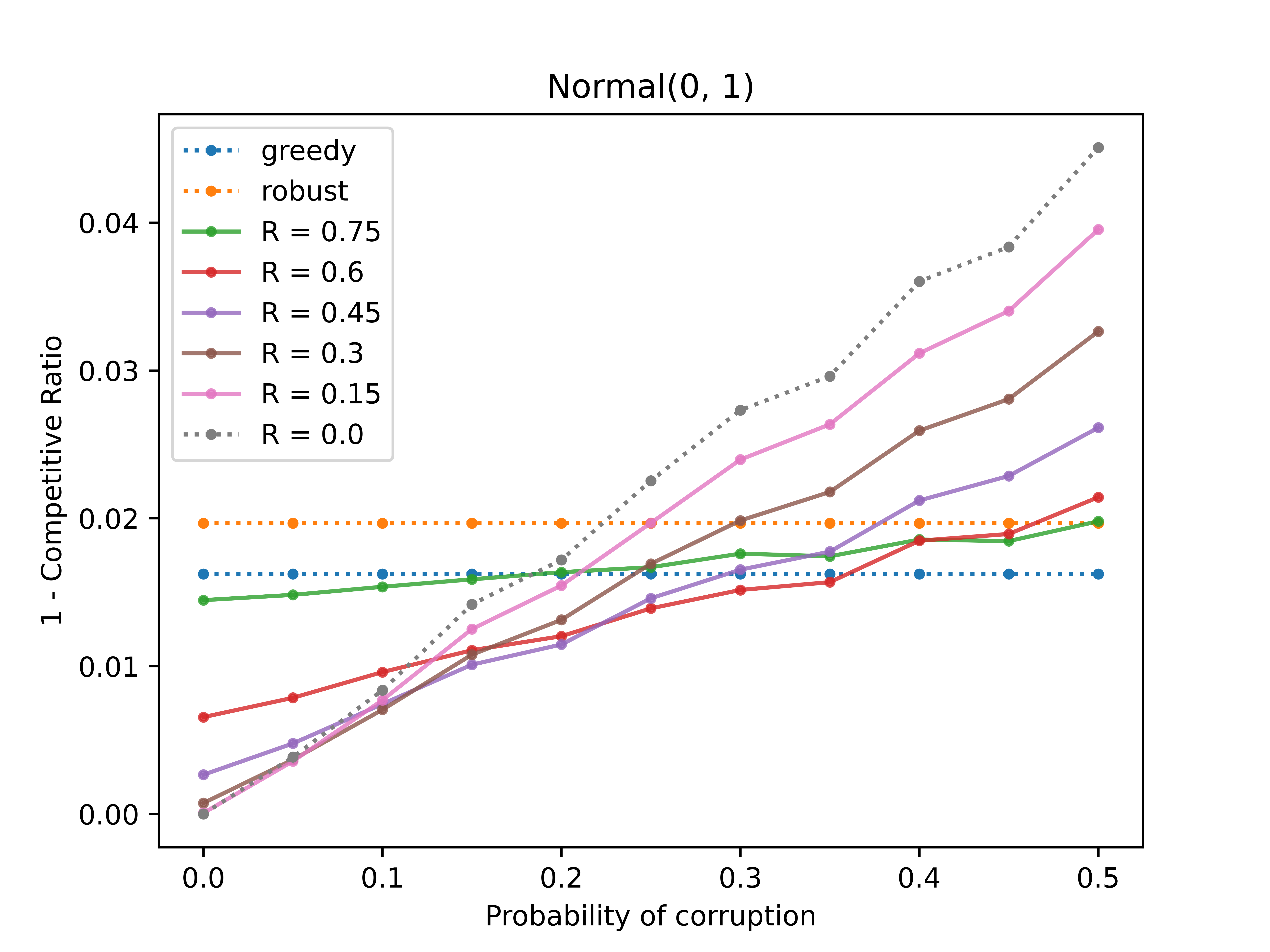}
            \caption[]%
            {\small $w_j \sim \abs{N(0,1)}$}    
            \label{fig:normal}
        \end{subfigure}
        \vskip\baselineskip
        \begin{subfigure}[b]{0.475\textwidth}   
            \centering 
            \includegraphics[width=\textwidth]{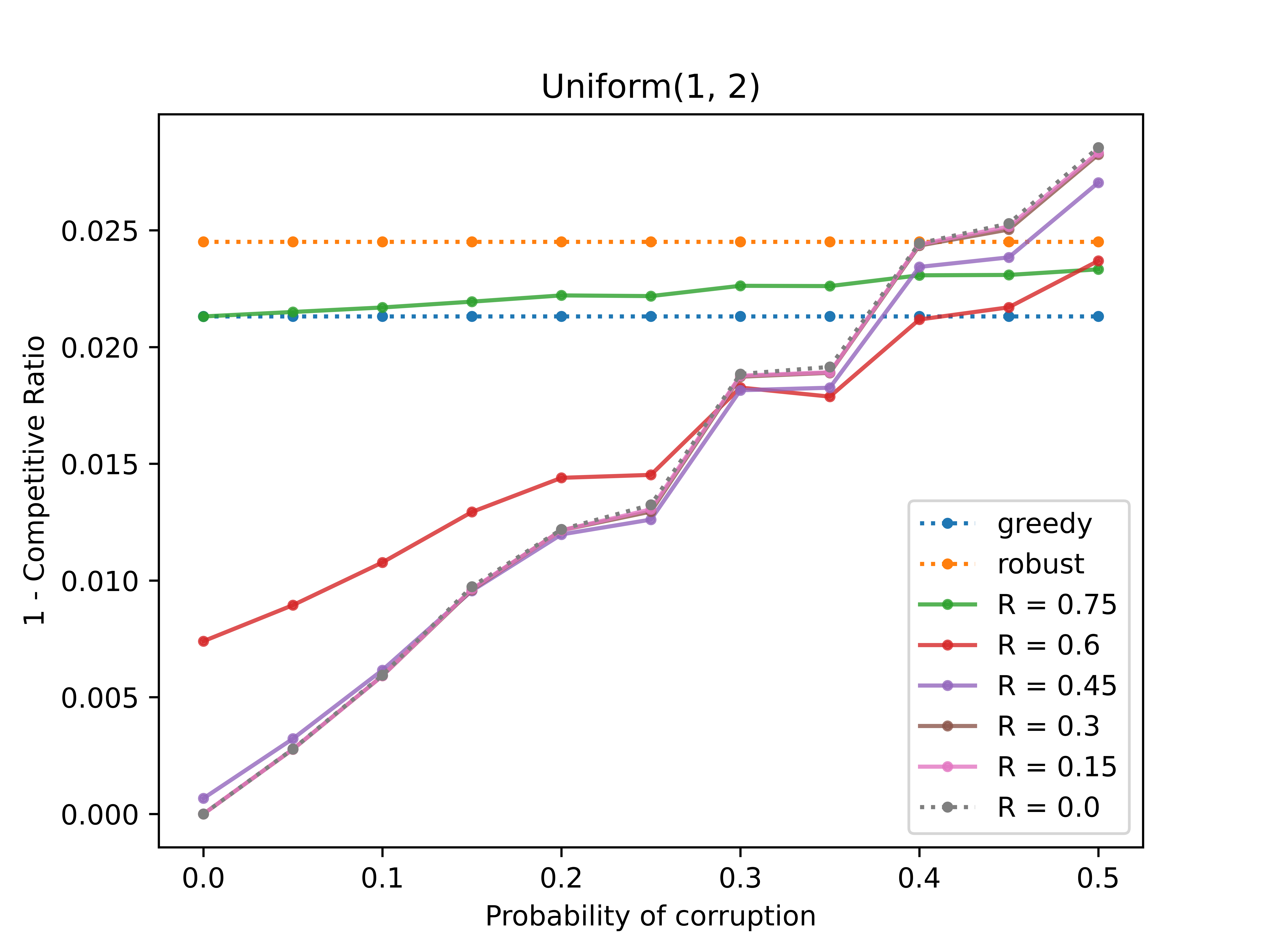}
            \caption[]%
            {{\small $w_j \sim \text{Uniform}(1,2)$}}    
            \label{fig:unif_2}
        \end{subfigure}
        \hfill
        \begin{subfigure}[b]{0.475\textwidth}   
            \centering 
            \includegraphics[width=\textwidth]{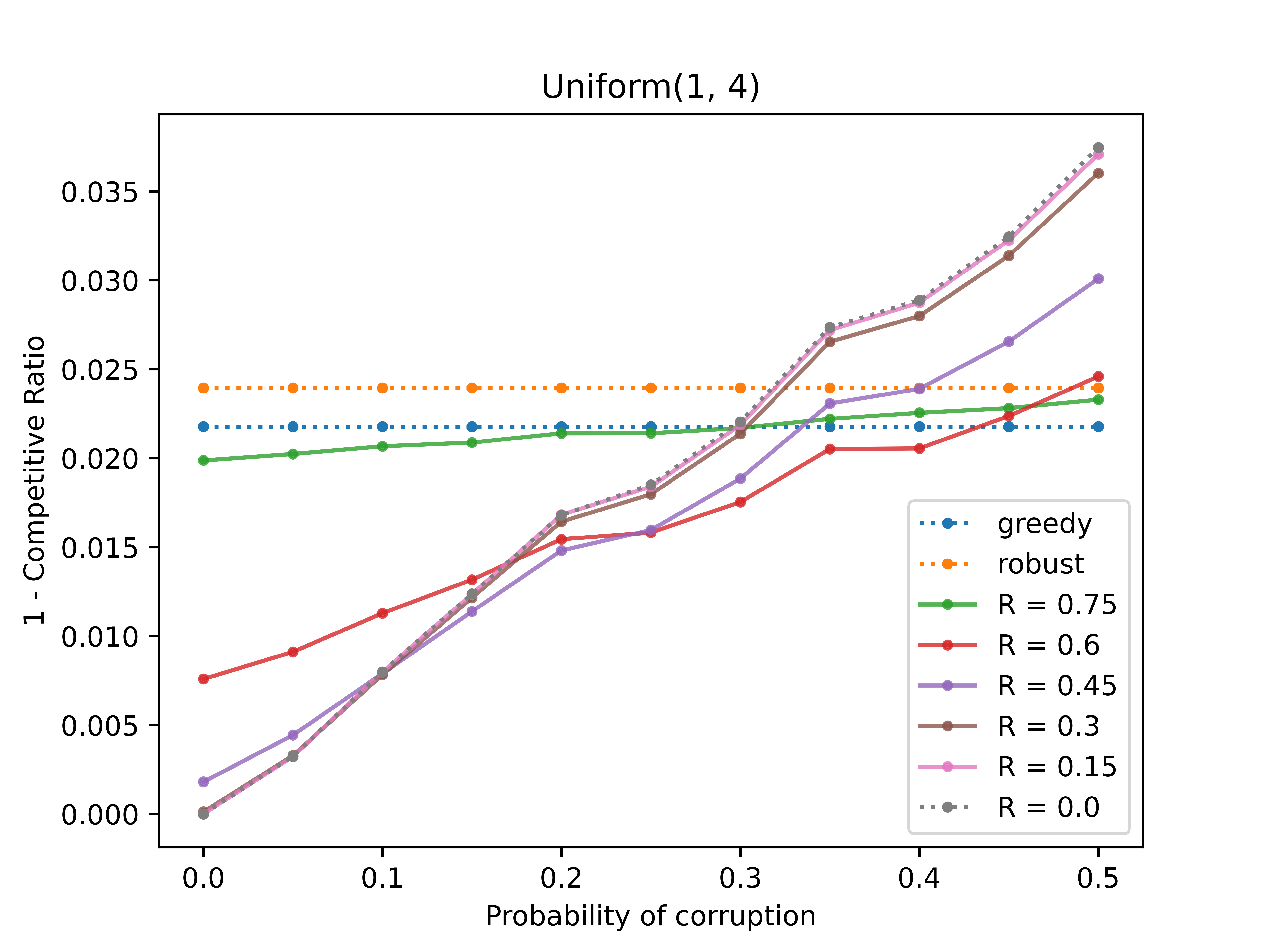}
            \caption[]%
            {{\small $w_j \sim \text{Uniform}(1,4)$}}    
            \label{fig:unif_4}
        \end{subfigure}
        \caption
        {\small Plots of the performance of the algorithms on various weight distributions.  The $y$-axis is 1 minus the average competitive ratio over the replications, so \textit{lower is better}. Robust refers to the $\frac34$-competitive algorithm of \cite{feng2021two}, and Advice refers to the algorithm that directly follows the advice ($R=0$).  The algorithms with $R=0.15,\ldots,0.75$ are all based on new developments from this paper.}
        \label{fig:crs}
    \end{figure*}

The best possible advice is clearly the first-stage edges $M_1^*$ that are chosen by the optimal matching in hindsight. To generate advice of varying quality, we use a "corrupted" suggested matching, obtained by taking each edge $(i,j)$ in the optimal suggested matching $M_1^*$, and with probability $p$, replacing it with an edge $(i, j')$ where $j'\neq j$ is a random supply node not already in $M^*_1$. (If no such $j'$ exists then we delete edge $(i,j)$ from the suggested matching.) We test 11 values of $p$ ranging from 0 to 0.5 inclusive, in increments of $0.05$. A higher value of $p$ represents a higher degree of corruption, and hence leads to a worse advice; adjusting $p$ allows us to test the sensitivity of our algorithms to the quality of the advice.

We tested our algorithm for $R \in \{0,0.15,0.3,0.45,0.6,0.75\}$, and compared against  the Greedy algorithm (which always chooses the max-weight matching in the first stage), and the ``Robust'' algorithm of \cite{feng2021two}. We ran 100 replications.  In each replication, we 1) generate the base graph $G$ of demand/supply nodes $D_1$, $D_2$, $S$, and edges $E$ (as described above), and 2) test each of the above algorithms on four sets of vertex weights on $G$ (one for each of the weight distributions described previously). 

The results are shown in \Cref{fig:crs}. Each plot shows a different weight distribution.
In each plot, the points represent the average competitive ratio obtained by a given algorithm over the 100 replications. 
We make the following observations:
\begin{itemize}
\item Robust is the better advice-agnostic algorithm in the unweighted setting, while Greedy is the better advice-agnostic algorithm with varied weights.
\item Our algorithms are advice-aware, and their performance improves with the quality of the advice.  As $R$ increases, our algorithms' decisions are less affected by the advice, so their performance becomes less sensitive to changes in the advice quality.
\item With $R = 0.75$, our algorithm dominates Robust, even when as many as half of the edges in the suggested matching are corrupted. This suggests there can be value in incorporating advice into the algorithm's decisions, even if the advice is quite bad.
\item When the probability of corruption is $p=0$, one cannot do better than always following the advice (setting $R=0$).  On the other extreme, $p=1/2$ with varied weights makes Greedy perform best.  However, in all intermediate regimes, algorithms based on new developments in this paper (with $R=0.15,\ldots,0.75$) perform best.
\item It is rarely optimal to set $R = 0.0$ or $R = 0.75$. In particular, setting $R = 0.15$ is only marginally worse than $R = 0.0$ when the probability of corruption is 0, and is noticeably better when the probability gets large. Similarly, setting $R = 0.6$ is is only marginally worse than $R = 0.75$ when the probability of corruption is $0.5$, and is noticeably better when this probability is small. 
\item The previous observation agrees with the shape of the concave tradeoff curve from our theory: When $R \approx 0$, one can gain a large improvement in robustness by sacrificing a bit of consistency because the curve is flat at $R = 0$. Similarly, when $R \approx 0.75$, one can gain a large improvement in consistency by sacrificing a bit of robustness, because the curve is steep at $R =0.75$.
\end{itemize}

\subsubsection{Advice from prediction of second-stage demand.}
\label{subsubsec:advice_d1}

\begin{figure}[h]
    \centering
    \begin{minipage}{0.45\textwidth}
        \centering
        \includegraphics[width=\textwidth]{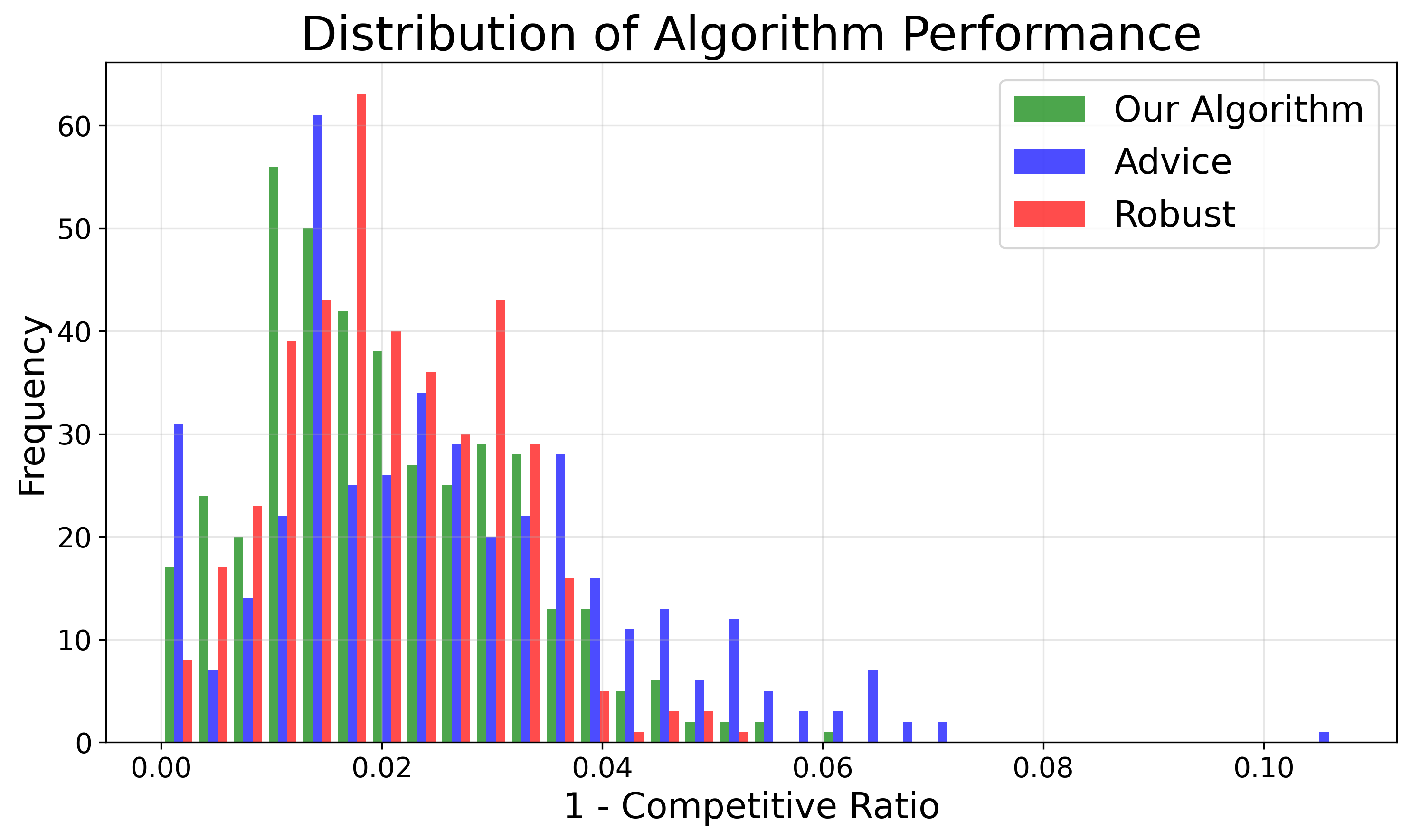}
    \end{minipage}\hfill
    \begin{minipage}{0.45\textwidth}
        \centering
        \includegraphics[width=\textwidth]{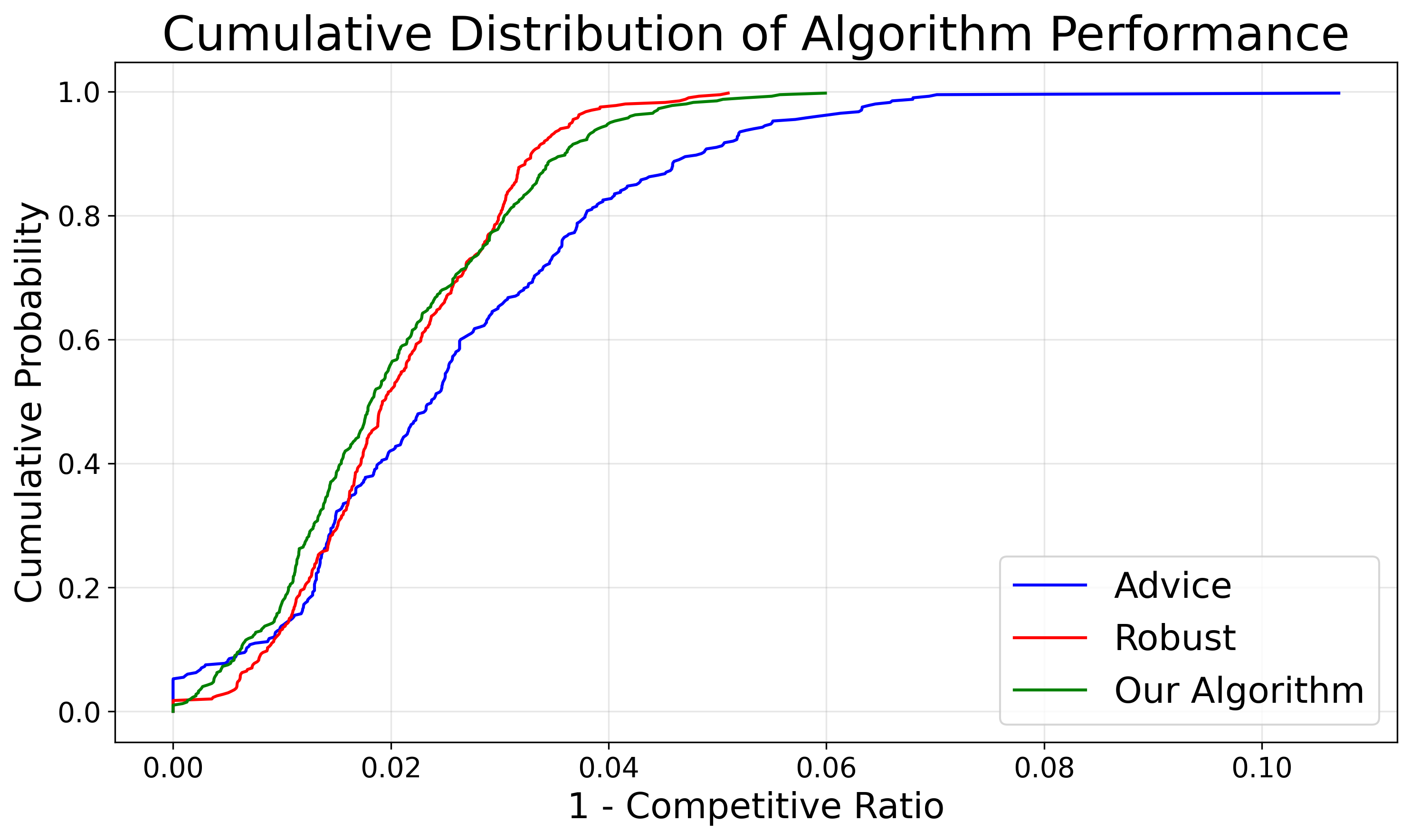}

    \end{minipage}
    \caption{Histograms of the performance of Robust, Advice, and our algorithm with $R = 0.6$ aggregated across the four weight classes. On the left is a raw histogram of the data. On the right is a cumulative histogram, where the $y$-value on a curve represents the percentage of all instances where the given algorithm achieved at least that level of performance.}
 \label{fig:new_chicago}
\end{figure}

We use the same setup as before, except now the Advice is now generated as follows:
\begin{enumerate}
    \item Use the first-stage graph as a prediction for the second-stage graph. In other words, we are using the ride requests at time $T$ as a prediction for the ride requests at time $T + 15$.  
    \item Take Advice to be the optimal first-stage matching under this prediction. 
\end{enumerate}
We randomly sampled 100 two-stage matching instances from the data as before. For each instance, we tested the Robust algorithm of \cite{feng2021two}, the algorithm which always follows the Advice, and our algorithm with $R = 0.6$ (which appears to generally be the best parameter value). We also vary the weights according to the four previously defined weight classes. 

The  results are shown in \Cref{fig:new_chicago}.Unlike \Cref{subsec:exp_synth}, here we show the histogram of performances over instances to demonstrate the robustness that ALPS brings. Each point in the histogram represents the performance of an algorithm on a specific instance with a particular weight class, resulting in 400 points per algorithm: one for each instance and weight class combination. The average error (1 - Competitive Ratio) and standard deviation of the three algorithms are as follows:
\begin{itemize}
    \item \textbf{Advice}: Average Error = 2.55\%, Standard Deviation = 1.62\%.
    \item \textbf{Robust}: Average Error = 2.06\%, Standard Deviation = 0.97\%.
    \item \textbf{Our Algorithm}: Average Error = 2.02\%, Standard Deviation = 1.15\%.
\end{itemize}
We make the following observations:
\begin{itemize}
    \item While Advice has the most upside in the best case, it also has the most downside in the worst case. In other words, blindly following the advice works well if the prediction happens to be good, but it runs the risk of performing very poorly if the prediction is inaccurate.
    \item The Robust algorithm of \cite{feng2021two}, on the other hand, has the smallest tail, meaning its worst-case performance surpasses that of both Advice and our algorithm.
    \item The advantage of our algorithm lies in its balanced performance: it is only marginally worse than Robust in the worst case, yet it still captures much of the upside of Advice in favorable cases. Our algorithm also has the best average-case performance, although only by a bit, consistent with the finding from \Cref{subsec:exp_synth} on the synthetic experiments.
\end{itemize}

\noindent\textbf{Summary.}
All in all, our experiments indicate that incorporating advice can be beneficial---in particular, there is utility in being able to control the degree to which the advice is followed. Our algorithm gives a principled way to do this for two-stage matching using the parameter $R$. We do not address the question of how one should choose $R$ given data from practice, and we leave this as a subject for future research.




\bibliographystyle{informs2014} 
\bibliography{bibliography} 




\APPENDICES
\crefalias{section}{appendix}
\crefalias{subsection}{appendix}

\end{document}